\newcommand*{\greenleaf}[1]{%
  \tikz[baseline=(X.base)] \node[rectangle, fill=green!25, rounded corners, inner sep=0.4mm] (X) {#1};%
}
  \def\doi#1{\url{https://doi.org/#1}}}
\newcommand*{\tpkt}{\rlap{$\;$.}}
\newcommand*{\tkom}{\rlap{$\;$,}}
\lstdefinestyle{common}{
  basicstyle=\ttfamily,
  commentstyle=\cmtstyle,
  keywordstyle=\keystyle,
  emphstyle=\emphstyle,
  showstringspaces=false,
  flexiblecolumns=false,
  basewidth={0.5em,0.45em},
  tabsize=2,
  captionpos=b,
  mathescape=true,
  escapechar=\%,
  xleftmargin=0pt,
  xrightmargin=0pt,
  framexleftmargin=0pt,
  abovecaptionskip=\smallskipamount,
  numbers=none,
}
\lstdefinestyle{framed}{
  frame=single,
  xleftmargin=3pt,
  xrightmargin=3pt,
  framexleftmargin=0pt,
  showlines=false,
  rulecolor=\color{gray!25}
}
\lstdefinestyle{qwhile}{
  style=common,
  style=framed,
  language=C
}
\newcommand{\rgl}{::=}
\newcommand{\defiff}{\mathrel{{:}\!{\iff}}}
\newcommand{\oper}[1]{\mathtt{#1}}
\newcommand{\ope}{\oper{U}}
\newcommand{\meas}[2]{M_{#1,#2}}
\newcommand{\instr}[1]{\mathtt{#1}}
\newcommand{\Variable}[1]{\mathtt{#1}}
\newcommand{\x}{\Variable{x}}
\newcommand{\ct}{\Variable{ct}}
\newcommand{\y}{\Variable{y}}
\newcommand{\n}{\Variable{n}}
\newcommand{\iv}{\Variable{i}}
\newcommand{\kv}{\Variable{k}}
\newcommand{\q}{\Variable{q}}
\newcommand{\qs}{\overline{\q}}
\newcommand{\p}{\Variable{p}}
\newcommand{\asgn}{{:=}}
\newcommand{\expr}[1]{\mathtt{ #1}}
\newcommand{\e}{\expr{e}}
\newcommand{\true}{\instr{tt}}
\newcommand{\false}{\instr{ff}}
\newcommand{\Exp}{\mathtt{Exp}}
\newcommand{\BExp}{\mathcal{B}\mathtt{Exp}}
\newcommand{\NExp}{\mathcal{N}\mathtt{Exp}}
\newcommand{\Cmds}{\mathtt{Stmt}}
\newcommand{\Cmdsk}{\mathtt{Stmt}_\instr{K}}
\newcommand{\Store}{\mathtt{Store}}
\newcommand{\cmd}{\instr{stm}}
\newcommand{\bexp}{\expr{b}}
\newcommand{\nexp}{\expr{n}}
\newcommand{\Qubits}{\mathcal{Q}}
\newcommand{\Bool}{\mathcal{B}}
\newcommand{\Var}{\mathcal{N}}
\newcommand{\K}{\mathcal{K}}
\newcommand{\ltuple}{(}
\newcommand{\rtuple}{)}
\newcommand{\geo}{\fun{GEO}}
\newcommand{\coin}{\mathtt{Cointoss}}
\newcommand{\rus}{\mathtt{RUS}}
\newcommand{\Q}{\mathbb{Q}}
\newcommand{\Comp}{\mathbb{C}}
\newcommand{\RAlg}{\mathbb{A}}
\newcommand{\densop}{\mathfrak{D}(\mathcal{H}_{Q})}
\newcommand{\matrixspace}{\mathcal{M}(\mathcal{H}_{Q})}
\newcommand{\densopalg}{\mathfrak{D}(\tilde{\mathcal{H}}_{Q})}
\newcommand{\matrixspacealg}{\mathcal{M}(\tilde{\mathcal{H}}_{Q})}
\newcommand{\img}{\mathfrak{i}}
\newcommand*{\when}[2]{\ifthenelse{#1}{#2}{}}
\newcommand*{\unlessempty}[2]{\ifthenelse{\equal{#1}{}}{}{#2}}
\newcommand{\up}[1]{\mathrel{{+}_{#1}}}
\newcommand{\CSd}{\mathtt{S}}
\newcommand{\transformer}[4][]{\mathtt{#2}\!\left[\,#3\,\right]^{#1}\!\left\{\textstyle#4\right\}}
\newcommand{\wpt}[3][]{\transformer[#1]{qet}{#2}{#3}}
\newcommand{\qinfer}[3][]{\transformer[#1]{qinf}{#2}{#3}}
\newcommand{\qev}[3][]{\transformer[#1]{qev_\CSd}{#2}{#3}}
\newcommand{\xrightarrowdbl}[2][]{%
  \xrightarrow[#1]{#2}\mathrel{\mkern-14mu}\rightarrow
}
\newcommand{\toop}[1]{\stackrel{}{\to}}
\newcommand{\tomulti}[1]{\xrightarrowdbl{#1}}
\newcommand{\tooqw}{\to_{\!\tiny \textsc{q}}}
\newcommand{\toomqw}[1]{\xrightarrowdbl[]{#1}_{\tiny\textsc{q}}}
\newcommand{\app}{\ }
\NewDocumentCommand\nf{d<>o}{\mathsf{term}^{\IfNoValueF{#1}{\leq#1}}_{#2}}
\NewDocumentCommand\edl{d<>o}{\mathsf{edl}^{\IfNoValueF{#1}{=#1}}_{#2}}
\RenewDocumentCommand\wp{d<>o}{\mathsf{wp}^{\IfNoValueF{#1}{\leq#1}}_{#2}}
\NewDocumentCommand\qwp{d<>o}{\mathsf{qwp}^{\IfNoValueF{#1}{\leq#1}}_{#2}}
\NewDocumentCommand\wpeq{d<>o}{\mathsf{wp}^{\IfNoValueF{#1}{=#1}}_{#2}}
\NewDocumentCommand\qwpeq{d<>o}{\mathsf{qwp}^{\IfNoValueF{#1}{=#1}}_{#2}}
\newcommand{\ect}[1][\cmd]{\mathsf{ect}_{#1}}
\newcommand{\sem}[2][]{\llbracket #2 \rrbracket^{#1}}
\newcommand{\conf}{{\tt Conf}}
\newcommand{\confct}{\conf_{\mathtt{CT}}}
\newcommand{\evalue}[1][]{\mathsf{evalue}^{#1}}
\newcommand{\lmulti}{\{}
\newcommand{\rmulti}{\}}
\newcommand{\size}[1]{| #1 |}
\newcommand{\State}{\instr{St}}
\newcommand{\Dists}{\mathcal{D}}
\newcommand{\supp}{\mathrm{supp}}
\newcommand{\lfp}{\mathsf{lfp}}
\newcommand{\N}{\mathbb{N}}
\newcommand{\Rpos}{\mathbb{R}^{+}}
\newcommand{\Rext}{\mathbb{R}^{+\infty}}
\newcommand\E[2]{\mathbb{E}_{#1}(#2)}
\newcommand{\smx}[1]{\left(\begin{smallmatrix}#1\end{smallmatrix}\right)\!}
\newcommand{\ZBQP}{\textsc{zbqp}}
\newcommand{\qmd}{\instr{P}}
\newcommand{\reduces}{\leq_\mathsf{m}}
\newcommand{\compl}[1]{\ensuremath{\text{co{-}}#1}}
\newcommand{\COF}{\ensuremath{\mathcal{COF}}}
\newcommand{\AST}{\textsc{Ast}}
\newcommand{\PAST}{\textsc{Past}}
\newcommand{\UAST}{\textsc{UAst}}
\newcommand{\UPAST}{\textsc{UPast}}
\newcommand{\WP}[1]{\textsc{Test}_{#1}}
\newcommand{\UWP}[1]{\textsc{UTest}_{#1}}
\newcommand\newtag[2]{#1\def\@currentlabel{#1}\label{#2}}
\tikzset{
cross/.style={path picture={
\draw[-,black](path picture bounding box.north) -- (path picture bounding box.south) (path picture bounding box.west) -- (path picture bounding box.east);
}}}
\tikzstyle{braceedge}=[decorate,decoration={brace,amplitude=10pt}]
\tikzstyle{square box}=[rectangle,fill=white,draw=black,minimum height=6mm,minimum width=6mm,yshift=0.7mm]
\tikzstyle{box}=[rectangle,fill=white,draw=black]
\tikzstyle{dot}=[circle,fill=black,draw=black,inner sep=1.5pt]
\tikzstyle{target}=[{draw,circle,cross,minimum width=0.3 cm}]
\tikzstyle{none}=[inner sep=0pt]
\tikzstyle{empty}=[rectangle,fill=none,draw=none]
\tikzstyle{scaled}=[rectangle,fill=none,draw=none, font=\small]
\tikzstyle{to}=[->,draw=black]
\tikzstyle{naturalto}=[-{Implies},double distance=1.5pt]
\tikzstyle{hook}=[right hook->, draw=black]
\tikzstyle{blueArr}=[->, draw=blue]
\tikzstyle{equal-arrow}=[double equal sign distance]
\tikzstyle{every picture}=[baseline=-0.25em]
\newcommand{
\InputIfFileExists{}{}{\input{./tikz/}}
}[1]{
\InputIfFileExists{#1}{}{\input{./tikz/#1}}
}
\newcommand{\InputIfFileExists{}{}{\input{./tikz/}}}[1]{\InputIfFileExists{#1}{}{\input{./tikz/#1}}}
\newcommandx{\mparbox}[3][1=l]{%
  \text{\makebox[#2][#1]{\ensuremath{#3}}}}
\newcommand{\tikzrom}[1]{\tikz[overlay,remember picture] \node (#1) {};}
\newcommand*{\AddNotecopy}[4]{%
    \tikz[overlay, remember picture]%
        \draw [decoration={brace,amplitude=0.5em},decorate,ultra thick,black!60]
            ($(#3)!([yshift=0.5ex]#1)!($(#3)-(0,1)$)$) --
            ($(#3)!([yshift=-0.5ex]#2)!($(#3)-(0,1)$)$)
                node [align=center, text width=1.5cm, pos=0.5, anchor=west] {#4};}
\newcommand{\repeatable}[1]{
  \newenvironment{#1*}[2][]
  {\expandafter\restatable[##1]{#1}{##2}\label{##2}}
  {\endrestatable}
}
\newcommand{\again}[1]{\csname#1\endcsname*}
\newcommand*{\envskipline}{\hfill\@beginparpenalty=10000}
\newcounter{envnumone}
{\hfill\bgroup\@beginparpenalty=10000%
  \begin{list}{\arabic{envnumone}.}%
    {%
      \usecounter{envnumone}
      \setlength{\itemsep}{0pt}
      \setlength{\topsep}{0pt}
      \setlength{\parsep}{0pt}
      \setlength{\partopsep}{0pt}
      \setlength{\leftmargin}{15pt}
      \setlength{\rightmargin}{0pt}
      \setlength{\itemindent}{0pt}
      \setlength{\labelsep}{5pt}
      \setlength{\labelwidth}{15pt}
    }%
  }%
  {\end{list}\egroup}
  {\end{itemize}\egroup}
\newcommand{\customlabel}[2]{%
   \protected@write \@auxout {}{\string \newlabel {#1}{{#2}{\thepage}{#2}{#1}{}} }%
}
\newcommandx\law[2][1=]{
  \hypertarget{#1}{
    \begin{array}[c]{>{$\it}l<{$}}
      #2
    \end{array}
  }
  \expandarg
  \def\pattern{\\}
  \def\replace{\ }
  \StrSubstitute{#2}{\pattern}{\replace}[\txt]
  \customlabel{#1}{(\textit{\txt})}
}
\newenvironment{proofcases}
{
\begin{list}{\labelitemi}
{\setlength{\itemsep}{0pt}
 \setlength{\topsep}{0pt}
 \setlength{\parsep}{0pt}
 \setlength{\partopsep}{0pt}
 \setlength{\leftmargin}{10pt}
 \setlength{\rightmargin}{0pt}
 \setlength{\itemindent}{0pt}
 \setlength{\labelsep}{5pt}
 \setlength{\labelwidth}{10pt}
}}
{
 \end{list}
}
\newcommand{\proofcase}[1]{\item[-]\textsc{Case} #1.}
\begin{document}


\title{On the Hardness of Analyzing Quantum Programs Quantitatively}

%

\author{}
\institute{}

\maketitle

\begin{abstract}
In this paper, we study quantitative properties of quantum programs. Properties of interest include (positive) almost-sure termination, expected runtime or expected cost, that is, for example, the expected number of applications of a given quantum gate, etc.
After studying the completeness of these problems in the arithmetical hierarchy over the Clifford+T fragment of quantum mechanics, we express these problems using a variation of a quantum pre-expectation transformer, a weakest pre-condition based technique that allows to symbolically compute these quantitative properties. Under a smooth restriction---a restriction to polynomials of bounded degree over a real closed field---we show that the quantitative problem, which consists in finding an upper-bound to the pre-expectation, can be decided in time double-exponential in the size of a program, thus providing, despite its great complexity, one of the first decidable results on the analysis and verification of quantum programs. Finally, we sketch how the latter can be transformed into an efficient synthesis method.
\end{abstract}


\section{Introduction}
\subsubsection{Motivations.}
Quantum computation is a promising and emerging computational paradigm which can efficiently solve problems considered to be intractable on classical computers \cite{shor,hhl}. 
However, the unintuitive nature of quantum mechanics poses challenging questions for the design and
analysis of corresponding quantum programming. 
Indeed, the quantum program dynamics are considerably more complicated compared to the behavior of classical or probabilistic programs.
Therefore, formal reasoning requires the development of novel methods and tools, a development that has already started and recently gathered momentum in various areas, like \emph{design automation}  \cite{Wille:2019,HYKW:2022}, \emph{programming languages} \cite{qpl,AltenkirchG05,DCM22,JiaKLMZ22,FKRS23}, \emph{verification} \cite{Perdrix08,Chen:2023}, etc.

Among these formal methods, those that allow us to obtain quantitative properties on quantum programs are particularly interesting. They can be used to obtain relevant information about the computations of a quantum program, such as the number of qubits used, the number of unitary operators used. 
Thus enabling the corresponding compiled quantum circuit to be optimized (for example, by minimizing the use of gates that are hard to make fault-tolerant, or by reducing the number of qubits) 
 or to avoid undesirable behavior such as non-termination.
Quantitative properties of interest may also be the question whether or not a program \emph{terminates almost-surely}, that is, whether its probability of non-termination is zero or not.
Similarly, we could aim to capture the \emph{expected values} of (classical) program variables upon termination assumption. The latter can also be employed to reason about the \emph{expected runtime} or the \emph{expected cost} of quantum programs, if we suitable instrument the code with counter variables. 

The program of Figure~\ref{fig:rus} implements a Repeat-Until-Success algorithm that can be used to simulate quantum
unitary operators on input qubit $\q_1$ by using repeated measurements. The quantum step-circuit on the right part corresponds to one iteration of the loop. Variable $\iv$ in the program just acts as a counter for $\oper{T}$-gates. Hence an analysis on the expected value of variable $\iv$ can be used to infer upper bounds on the expected \emph{T-count}, i.e., the expected number of times the so-called $\oper{T}$-gate is used in the full compiled quantum circuit. Such an approach offers the advantage to allow the programmer to implement quantum programs using fewer $\oper{T}$-gates, which are costly to implement fault-tolerantly~\cite{BK05,GKMR14}, and it therefore provides a simple quantum program to illustrate that the study of quantitative properties is paramount.

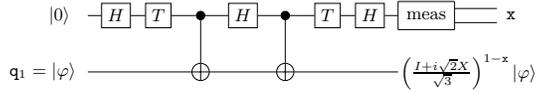
\begin{figure}
\begin{minipage}{0.39\textwidth}
\centering
   $
\rus\ \triangleq  \  \pw {
   \iv^{\Var} <- 0;\\
    \x^\Bool <- \true ; \\
    \WHILE \x  \DO {
           \q_2 <- \ket{0}; \tikzrom{mark1}\quad \quad \quad \tikzrom{mar1}\\
           \q_2  <*  \oper{H} ;\\
           \q_2 <* \oper{T} ;\\
           \iv <- \iv + 1 \\
           \q_2,\q_1 <*  \oper{CNOT} ; \tikzrom{mark2} \quad \quad \quad  \tikzrom{mar2}\\
           \q_2 <*  \oper{H} ;\\
           \q_2,\q_1 <*  \oper{CNOT} ;\\
           \q_2 <*  \oper{T} ;\\
           \iv <- \iv + 1 \\
           \q_2 <*  \oper{H} ; \tikzrom{mark3}\\
           \x  <- \MEAS{\q_2}\quad \quad \quad $ $\tikzrom{mar3}
           }
 }
   $
\end{minipage}
\quad
\begin{minipage}{0.63\textwidth}
\scalebox{0.75}{
\begin{tikzpicture}
	\begin{pgfonlayer}{nodelayer}
		\node [style=none] (0) at (-4, 0.5) {$\ket 0$};
		\node [style=none] (1) at (-4.3, -0.5) {$\q_1 =\ket \varphi$};
		\node [style=none] (2) at (-3.5, 0.5) {};
		\node [style=none] (3) at (-3.5, -0.5) {};
		\node [style=box] (4) at (-3, 0.5) {$H$};
		\node [style=box] (5) at (-2.25, 0.5) {$T$};
		\node [style=target] (6) at (-1.5, -0.5) {};
		\node [style=dot] (7) at (-1.5, 0.5) {};
		\node [style=box] (8) at (-0.75, 0.5) {$H$};
		\node [style=target] (9) at (0, -0.5) {};
		\node [style=dot] (10) at (0, 0.5) {};
		\node [style=box] (11) at (0.75, 0.5) {$T$};
		\node [style=box] (12) at (1.5, 0.5) {$H$};
		\node [style=none] (14) at (2, -0.5) {};
		\node [style=none] (15) at (3.25, -0.5) {$\left(\frac{I + i \sqrt 2 X}{\sqrt 3}\right)^{1-\x} \ket \varphi$};
		\node [style=none] (18) at (4, 0.5) {$\x$};
		\node [style=none] (19) at (2.5, 0.5) {meas};
		\node [style=none] (20) at (2, 0.25) {};
		\node [style=none] (21) at (2, 0.75) {};
		\node [style=none] (22) at (3, 0.75) {};
		\node [style=none] (23) at (3, 0.25) {};
		\node [style=none] (24) at (2, 0.5) {};
		\node [style=none] (25) at (3, 0.625) {};
		\node [style=none] (26) at (3, 0.375) {};
		\node [style=none] (27) at (3.75, 0.625) {};
		\node [style=none] (28) at (3.75, 0.375) {};
	\end{pgfonlayer}
	\begin{pgfonlayer}{edgelayer}
		\draw (2.center) to (4);
		\draw (4) to (5);
		\draw (7) to (6);
		\draw (5) to (7);
		\draw (3.center) to (6);
		\draw (7) to (8);
		\draw (10) to (9);
		\draw (8) to (10);
		\draw (10) to (11);
		\draw (12) to (11);
		\draw (6) to (9);
		\draw (9) to (14.center);
		\draw (20.center) to (23.center);
		\draw (23.center) to (22.center);
		\draw (22.center) to (21.center);
		\draw (21.center) to (20.center);
		\draw (24.center) to (12);
		\draw (27.center) to (25.center);
		\draw (28.center) to (26.center);
	\end{pgfonlayer}
\end{tikzpicture}
}

\end{minipage}
\vspace{-2mm}
\caption{Repeat-until-success program $\rus$ and step-circuit.}
\label{fig:rus}
\end{figure}

In \cite{AMPPZ:LICS:22,LZY22}, new methodologies named \emph{quantum expectation transformers} based on \emph{predicate transformers}~\cite{D76,K85} and \emph{expectation transformers}~\cite{MM05,GKM14} have been put forward to naturally express and study the quantitative properties of quantum programs. However, no attempt was made to automate the corresponding techniques or delineate how complicated such an \emph{automation} could be. Automation of these formal verification techniques in the context of quantum programs is a particularly difficult problem. Indeed, the consideration of Hilbert spaces as  a mathematical framework for describing principles and laws of  quantum mechanics makes it seemingly impossible to reason fully automatically about quantitative properties of quantum program: they involve computational objects of exponential dimensions (in the number of qubits) with scalars ranging over an uncountable domain (i.e., complex numbers). This problem is directly linked to the fact that the set $\Comp$ includes non-computable numbers~\cite{weihrauch2012computable} and that testing the inequality $\leq$ or the equality $=$ of two real numbers is not decidable, even if one restricts their study to computable real numbers.
Consequently, 
the particular nature of quantum programs and of their semantic domain, Hilbert spaces, makes it impossible to directly apply the results obtained in the classical and probabilistic setting~\cite{SchnablS11,KaminskiKatoen}.


%

\subsubsection{Contributions.}
In this paper, we study the hardness of the quantitative properties of mixed classical-quantum programs and provide a first
step towards their (full) automation using quantum expectation transformers.

Towards that end, we restrict the considered quantum gates to the Clifford+T fragment, which is known to be the simplest approximately universal fragment of quantum mechanics~\cite{aaronson2004improved}. 
Clifford+T makes it possible to only consider quantum states with algebraic amplitudes, thus restricting the study to a countable domain.
It implies that our results can accommodate quantum gates employed in actual hardware, recently employed to claim \emph{quantum advantage}, cf~\cite{Arute:2019}.
Moreover, the obtained results are very general as it can be extended to any set of gates with algebraic coefficients.

As motivated, our first contribution is about the general hardness of deciding quantitative properties for mixed classical-quantum programs. For a given input state, we study properties such as \emph{(positive) almost-sure termination}, $(\textsc{P})\AST$ for short, which consists in verifying that a program terminates with probability $1$; \emph{testing problems}, $\WP{\mathcal{R}}$, which consist in comparing a quantum expectation (for example, the mean value of a variable) with a given value (an algebraic and positive real number) wrt the relation $\mathcal{R}$; and the \emph{finiteness problem}, $\WP{\neq \infty}$, which consists in checking that a quantum expectation is finite. For each of those problems, we also study the related \emph{universal problem}, which consists to check the corresponding property for every input.
We establish a precise mapping (Theorem~\ref{thm:mapping}) of the inherent complexity of each problem in the arithmetical hierarchy~\cite{odifreddi1992classical} that is summarized in Table~\ref{fig:mapping}, see Section~\ref{s:mapping}.
E.g., $\AST$ is $\Pi^0_2$-complete while $\PAST$ is $\Sigma^0_2$-complete.
In Table~\ref{fig:mapping}, we have emphasized problems whose hardness has been studied neither in the classical nor in the probabilistic framework by marking them with $(\ddag)$.


Our second contribution aims to overcome the aforementioned undecidability results.
For that, we study approximations. More precisely, we focus on inferring bounding functions (in general depending on the input) on the expected values of classical program  variables upon termination.
The decision problem has thus been altered to an optimization problem.
Further, we restrict the set of potential bounding functions. 
As a suitable class of functions, we consider polynomials over the real-closed field of the algebraic numbers. 
The restriction to algebraic numbers guarantees that comparison operations between real numbers remain decidable. On the other hand for any real closed field, quantifier elimination for formulas over polynomials is decidable, that is, there
exists a double-exponential algorithm computing a quantifier-free formula equivalent to the original formula~\cite{HRS90}.
This recasting of the problem and restriction of the solution space suffices to render the problem decidable. 
The inference algorithm established remains double-exponential (Theorem~\ref{thm:bound}). Thus of similar complexity as the underlying quantifier elimination procedure.

Finally, our last contribution (Section~\ref{sec:pa}) studies effective automation of the inference of upper bounds on the expected values of
program variables. To improve upon the double-exponential complexity, we further restrict the class of polynomials considered, that is, to degree-2 polynomials and sketch how techniques from optimization theory can be employed.
Several simple quantum algorithms such as program $\rus$ can be analyzed using this approach (Example~\ref{ex:rusfinal}).
This further reduction in expressivity allows the encoding of the problem in SMT and thus paves the way towards (full) automation.

\subsubsection{Related work.}
Predicate transformers~\cite{D76,K85}---on which our work is based---were introduced as a method for reasoning about the semantics of imperative programs. They have been adapted  to the probabilistic setting, leading to the notion of expectation transformer~\cite{MM05,GKM14}, which has been used to reason about expected values~\cite{KK17,AMS23}, runtimes~\cite{KKMO16,NCH18}, and costs \cite{AMS20,ABL21,NCH18}, and to the quantum paradigm, leading to the notion of quantum pre-expectation transformer~\cite{OD20,LZY22,AMPPZ:LICS:22}. 

The problem of studying the difficulty of analyzing quantitative program properties has been deeply studied in the classical setting.
To mention a few, ~\cite{EGZ09} and~\cite{SchnablS11} study termination properties and runtime/derivational properties of first-order programs, respectively. Further, in~\cite{KaminskiKatoen} completeness results for various quantitative properties of (pure) probabilistic programs have been established.
The inference problem of expectation transformers, i.e., establishing nice implementation that automate the search for pre-expectations, has been studied intensively. Examples of successful implementation are presented in~\cite{NCH18,AMS20,AMS23}. 
Up to now, however, no practical, feasible studies have been carried out on quantum languages.
Among the techniques using quantum expectation transformers, we believe~\cite{AMPPZ:LICS:22} to be the most amenable to automation.
Indeed, by lifting \emph{upper invariants} of~\cite{KKMO16} to the quantum setting, it enables approximate reasoning and eliminate the need to reason about fixpoints or limits, stemming from the semantics of loops.

\section{Quantum programming language}
\label{s:l}

In this section, we introduce the syntax and operational semantics of the considered mixed-quantum imperative programming language.
\subsection{Syntax}
We make use of three basic datatypes $\Bool$, $\Var$ and $\Qubits$ for Boolean, numbers (non-negative integers), and qubit data, respectively. 
Let $\K$ be an arbitrary classical type in $\{\Bool,\Var\}$. 
Each program variable comes with a fixed datatype and  can be optionally annotated by its type as a superscript.
In what follows, we will use $\x,\x',\y,\ldots$ to denote classical variables of type $\K$ and $\q,\q',\ldots$ to denote quantum variables of type $\Qubits$.
A program, denoted $\qmd$, is simply a statement; see Figure~\ref{fig:synt}.
Program statements are either classical assignments, conditionals, sequences, loops, \emph{quantum assignments} $\qs^\Qubits <* \ope$, or \emph{measurements} $\x^\Bool <- \MEAS{\q^\Qubits}$.
A quantum assignment consists in the application of a quantum unitary gate $\ope$ of arity $ar(\ope)$ to a sequence of qubits $\qs \triangleq \q_1,\ldots,\q_{ar(\ope)}$ of type $\Qubits$. 
As we will see in the semantics section, a unitary matrix $U$ will be associated with each quantum gate $\ope$.
A measurement performs a single qubit measurement of $\q$ in the computational basis: the outcome is a Boolean value and the quantum state evolves accordingly. For a given syntactic construct $t$, let $\Bool(t)$ (respectively $\Var(t)$, $\Qubits(t)$) be the set of Boolean (respectively, number, qubit) variables in $t$.

\begin{figure*}[t]
\hrulefill
\\[10pt]
$ \begin{array}{rlll}
    \NExp &  \ni \nexp, \nexp_1, \nexp_2
    & \rgl &   \x^\Var \mid n \in  \mathbb{N} \mid
             \nexp_1 + \nexp_2   \mid  \nexp_1 - \nexp_2  \mid \nexp_1 \times \nexp_2 \\
    \BExp &  \ni \bexp, \bexp_1, \bexp_2
    & \rgl &  \x^\Bool   \mid \true \mid \false \mid
             \nexp_1 = \nexp_2  \mid  \nexp_1 < \nexp_2 \mid \neg \bexp \mid \bexp_1 \wedge \bexp_2 \mid \bexp_1 \vee \bexp_2 \\
    \Exp &  \ni \e, \e_1, \e_2
    & \rgl &  \nexp \mid \bexp \\
    \Cmds &  \ni  \cmd, \cmd_1, \cmd_2
    & \rgl &  \SKIP \mid  \x^\K <- \e^\K  \mid  \cmd_1; \cmd_2\mid\   \IF \bexp^\Bool \THEN \cmd_1  \ELSE \cmd_2   \\
          & & &\mid\ \WHILE \bexp^\Bool \DO \cmd
                \mid  \qs^\Qubits  <* \ope \mid   \x^\Bool <- \MEAS{\q^\Qubits}
  \end{array}$

 \hrulefill
 \vspace{-2mm}
 \caption{Syntax of quantum programs.}
\label{fig:synt}
\end{figure*}

\begin{example}
  We consider the program of Figure~\ref{fig:cointoss}, adapted from~\cite{AMPPZ:LICS:22}, as a simple leading example.
  Let $H$ be the unitary operator computing the Hadamard gate. This program simulates coin tossing by repeatedly measuring the qubit $\q$ in the loop body $\cmd$, given as initial input parameter until the measurement outcome $\false$ occurs.

  The probability to terminate within $n$ steps depends on the initial state $\rho=\smx{\alpha & \beta \\ \gamma & \delta}$  (a density matrix in $\Comp^{2\times 2}$, which implies $\alpha +\delta =1$ and $\gamma = \bar \beta$) of the qubit $\q$. Variable $\iv$ is increased by one at each iteration and, hence, when the program terminates, $\iv$ stores as final value the number of loop iterations performed. The overall probability of termination is $1$. The mean value of variable $\iv$, that is, the expected number of loop iterations, depends on the program input (hence, on the initital quantum state). 
The expected number of iterations  is thus given as
\begin{equation*}
F(\smx{\alpha & \beta \\ \bar \beta & \delta}) = p_0\times 1 + \sum_{i=1}^\infty \frac{p_1}{2^i}(i+1)=p_0+p_1+2p_1=1+(\alpha-\beta -\bar \beta+\delta)= 2-(\beta +\bar \beta) \tkom
\end{equation*}
where $p_0 = \frac{\alpha+\beta +\bar\beta+\delta}{2} = \frac{1+\beta + \bar\beta}{2}  = 1 - p_1$ is the probability to measure $\ket{0}$ on the first iteration of the loop. 
For a qubit initialized in state $\ket{\phi} = \sqrt{\sfrac{1}{3}}\ket{0}+\sqrt{\sfrac{2}{3}}\ket{1}$, the corresponding density matrix is $\rho_{\ket{\phi}} = \ketbra{\phi}{\phi}= \smx{\sfrac{1}{3}& \sfrac{\sqrt{2}}{3} \\ \sfrac{\sqrt{2}}{3}  & \sfrac{2}{3}}$ and hence the expected number of loop iterations is $F(\rho_{\ket{\phi}} )=2-2\sfrac{\sqrt{2}}{3}.$ It will be simply $2$ in the case of an initialization in the computational basis $\ket{\phi}=\ket{0}$ or $\ket{\phi}=\ket{1}$.
\end{example}

Notice that the language can be extended to deal with statements initializing qubits in the basis states. For example, the following syntactic sugar can be used $\q^\Qubits  <- \ket{0}\ \triangleq \  \x <- \MEAS{\q} ; \IF \x \THEN \q <* \oper{X}  \ELSE \SKIP,$ with $\oper{X}$ being the Pauli $X$ gate and for some fresh variable $\x$ of type $\Bool$.

\begin{figure*}[t]
\hrulefill
\\[5pt]
  \centering
  \begin{minipage}[t]{0.69\textwidth}
    $
    \coin \quad\triangleq\quad
    \pw{
      \x^\Bool <- \true ;\\
      \iv^{\Var} <- 0;\\
      \tikzrom{cq-mark1}
      \WHILE \x \DO {
        \iv <- \iv + 1;\tikzrom{cq-mar1}\\
        \q^\Qubits <* H;\hspace{3mm}\tikzrom{cq-mar3}\hspace{17mm}\tikzrom{cq-mark3}\\
        \x <- \MEAS{\q}\tikzrom{cq-mar2}
      }\tikzrom{cq-mark2}

    }
    $
\AddNotecopy{cq-mar1}{cq-mar2}{cq-mar3}{$\triangleq \cmd$}
\end{minipage}
\
\begin{minipage}[t]{0.29\textwidth}
$$\text{with }H=\frac{1}{\sqrt{2}}
 \begin{pmatrix}
  1& 1 \\
  1 & -1
   \end{pmatrix}$$
 \end{minipage}
\\[5pt]
\hrulefill
\vspace{-2mm}
\caption{Quantum Coin tossing}
\label{fig:cointoss}
\end{figure*}

\subsection{Operational semantics}\label{sec:os}

Following~\cite{AMPPZ:LICS:22}, we model the dynamics of our language as a probabilistic abstract reduction system (see~\cite{BG:RTA:05}), a transition system where reduction is determined as a relation over probability distributions.

\subsubsection{Probabilistic abstract reduction systems.}
Given a subset $\mathbb{K}$ of  $\mathbb{R}$, let $\mathbb{K}^+$ be the set of non-negative numbers in $\mathbb{K}$, i.e., $\mathbb{K}^+ \triangleq \mathbb{K} \cap \{x \mid x \geq 0 \}$ and let $\mathbb{K}^\infty$ be defined by $\mathbb{K}^\infty \triangleq \mathbb{K}\cup\{\infty\}$. 

A discrete (sub)\emph{distribution} $\delta$ over a set $A$ is a function $\delta : A \to [0,1] $ with countable support that maps an element $a$ of $A$ to a probability $\delta(a)$ such that $|\delta| \triangleq \sum_{a \in \supp(\delta)} \delta(a) = 1$ (respectively $|\delta | \leq 1$). Any (sub)distribution $\delta$ can be written as $\lmulti \delta(a) : a \rmulti_{a \in \supp(\delta)}$. The set of subdistributions over $A$, denoted by $\Dists(A)$, is closed under denumerable convex combinations $\sum_i p_i \cdot \delta_i \triangleq \lambda a.\sum_i p_i \delta_i(a)$, with $p_i \in [0,1]$ and $\sum_{i} p_i \leq 1$.
Slightly simplifying standard notation, given $f : A \to \Rext$ and a subdistribution $\delta \in \Dists(A)$, we define~$\mathbb{E}_\delta(f)$, \emph{the expectation of $f$ on $\delta$}, by
$\mathbb{E}_\delta(f) \triangleq \Sigma_{a \in supp(\delta)}\delta(a)f(a)$. Note that $\mathbb{E}_\delta(f) \in \Rext$ is always defined,
since the co-domain of $f$ is restricted to non-negative reals.

Bournez and Garnier \cite{BG:RTA:05} introduced the notion of
\emph{Probabilistic Abstract Reduction System} (PARS) as a means to study reduction systems that evolve probabilistically.
A PARS $\toop{}$ on $A$ is a binary relation $\cdot \toop{} \cdot \subseteq A \times  \Dists(A)$. The intended meaning is that when $a \to \delta$, then $a$ reduces to $b \in \supp(\delta)$ with probability $\delta(b)$.
Here, we focus on \emph{deterministic} PARSs, i.e., PARSs $\toop{}$ with $a \toop{c_1} \delta_1$ and $a \toop{c_2} \delta_2$ implies $\delta_1 = \delta_2$.
An object $a \in A$ is called \emph{terminal} if there is no rule $a \toop{c} \delta$, which we write as $a \not\to$.

Every deterministic PARS $\toop{}$ over $A$ naturally lifts to a reduction relation $\tomulti{}$ over distributions so that $\delta \tomulti{} \varepsilon$, if the reduct distribution $\varepsilon$
is obtained from $\delta$ by replacing reducts in $\supp(\delta)$ according to the PARS $\toop{}$.
In fact, we define this lifting in terms of a
ternary relation $\cdot \tomulti{\cdot} \cdot \ \subseteq \Dists(A) \times \Rpos \times \Dists(A)$ on distributions, where in a step $\delta \tomulti{c} \varepsilon$ the \emph{weight} $c$
signifies the probability that a reduction has occurred.
This relation is defined according to the
following three rules.
\begin{center}
  \begin{prooftree}
    \hypo{a \not\to\phantom{\tomulti{c }}}
    \infer1
    {\lmulti 1: a \rmulti \tomulti{0} \lmulti 1: a \rmulti}
  \end{prooftree}
  \quad \quad
  \begin{prooftree}
    \hypo{a \stackrel{\phantom{1}}{\to} \delta }
    \infer1
    { \lmulti 1: a\rmulti \tomulti{1} \delta  }
  \end{prooftree}
  \quad \quad
  \begin{prooftree}
    \hypo{\delta_i \tomulti{c_i} \epsilon_i}
    \hypo{\sum_i p_i \leq 1}
    \infer2
    {\sum_{i} p_i \cdot \delta_i \tomulti{\sum_i p_i c_i} \sum_{i} p_i \cdot \epsilon_i}
  \end{prooftree}
\end{center}
We may sometimes use the $n$-fold ($n \geq 0$) composition of $\tomulti{\cdot}$, denoted $\tomulti{\cdot}^n$, given by
$\delta \tomulti{c}^n \epsilon$ if $\delta \tomulti{c_1} \cdots \tomulti{c_n} \epsilon$ and the weights satisfy $c = \sum_{i=1}^n c_i$.
Notice that since $\toop{}$ is deterministic, so is $\tomulti{c}$ in the sense that
$\delta \tomulti{c_1} \epsilon_1$ and $\delta \tomulti{c_2} \epsilon_2$ implies $c_{1} = c_{2}$
and $\epsilon_{1} = \epsilon_{2}$.
Thus, in particular, for every $a \in A$ there is
precisely one infinite reduction
\[
  \{ 1: a \} = \delta_{0}
  \tomulti{c_0} \delta_{1}
  \tomulti{c_1} \delta_{2}
  \tomulti{c_2} \delta_{3} \tomulti{} \cdots .
\]
For any $b \in A$, the probability $\delta_{i}(b)$ gives the probability that $a$
reduces to $b$ in $i$ steps. Note that when $b$ is terminal, this probability only
increases along reductions (i.e., $\delta_{i}(b) \leq \delta_{i+1}(b)$ for all $i$).
This justifies that we define the \emph{terminal distribution} of $a$ as the distribution
$\delta(b) \triangleq \lim_{i \to \infty} \delta_{i}(b)$. Note that $\delta(b)$ gives the probability that $a$ reaches
$b$ in an arbitrary (but finite) number of steps.
Since the weights $c_{i}$ indicate the probability that a step has been performed from $\delta_{i}$
to $\delta_{i+1}$, the infinite sum $\sum_{i=0}^{\infty} c_{i} \in \Rext$ gives the expected
number of reduction steps carried out, the \emph{expected derivation length of $a$}.

For a PARS $\to$, we denote by $\nf[\to] : A \to \Dists(A)$ the function associating
with each $a \in A$ its terminal distribution. The \emph{expected derivation length function}
$\edl[\to] : A \to \Rext$ associates each $a \in A$ to its expected derivation length.
The PARS $\to$ is \emph{almost surely terminating}~\cite{sharir1984verification} (\emph{a.s. terminating} for short) if $a \in A$ reduces to a terminal object $b\not\to$
with probability $1$, that is, if $|\nf[\to](a)| = 1$ for every $A$.
It is \emph{positive almost surely terminating}, if the expected derivation length is always finite,
that is, $\edl[\to](a) < \infty$ for all $a \in A$.

Apart from termination, we are interested also in questions related to functional correctness,
such as (i)~what is the probability that $a$ reaches a terminal $b$,
(ii)~what is the probability that $a$ reaches a terminal satisfying predicate $P$, and more generally,
(iii)~which value does $f : A \to \Rext$ take, in expectation, when fully reducing an object $a$.
In the literature~\cite{MM05}, one tool to answer all of these are given by \emph{weakest pre-expectation
transformers}, the natural generalisation of classical weakest pre-condition transformers to a quantitative,
probabilistic setting. We suite this notion to PARSs.
\begin{definition}[Weakest pre-expectation]
  The \emph{weakest pre-expectation} for a PARS $\to$ over $A$ is given by the function
  \begin{align*}
    & \wp[\to] : {(A \to \Rext)} \to {(A \to \Rext)} \\
    & \wp[\to] \triangleq \lambda f. \lambda a.\ \E{\nf[\to](a)}{f} \tpkt
  \end{align*}
\end{definition}

For $\mathbf{1}_{b}$ the indicator function evaluating to $1$ on argument $b$ and to $0$ otherwise,
and by seeing a predicate $P$ as $0,1$ valued function,
$\wp[\to] \app \mathbf{1}_{b} \app a$ answers question (i),
$\wp[\to] \app P \app a$ answers (ii), and generally $\wp[\to] \app F \app a$ answers question (iii).
Note also that a PARS is a.s. terminating iff $\wp[\to] \app (\lambda b.\, 1) \app a = 1$ for each $a \in A$.
On the other hand, positive a.s. termination cannot be expressed through an application of $\wp[\to]$.


\subsubsection{Quantum programs as PARSs.}

\newcommand{\cst}{s}
\newcommand{\cstt}{r}
\newcommand{\qst}{\rho}
\newcommand{\tcon}[1]{#1^\dagger}
\newcommand{\cfg}[3]{(#1,#2,#3)}
\newcommand{\cfgc}[2]{(#1,#2)}
\newcommand{\halt}{\downarrow}
\newcommand{\dist}[1]{\{#1\}}
\newcommand{\dirac}[1]{\{ 1: #1 \}}
\begin{figure*}[h]
\hrulefill

\centering
$
\begin{prooftree}
  \hypo{\strut}
  \infer1[(Skip)]{\cfg{\SKIP}{\cst}{\qst} \tooqw \dirac{\cfg{\halt}{\cst}{\qst}}}
\end{prooftree}
\quad
\quad
\begin{prooftree}
  \hypo{\strut}
  \infer1[(Exp)]{\cfg{\x <- \e}{\cst}{\qst} \tooqw \dirac{\cfg{\halt}{\cst[\x := \sem[\cst]{\e}]}{\qst}}}
\end{prooftree}
$
\\
$
\begin{prooftree}
  \hypo{\strut}
\infer1[(Op)]{\cfg{\qs <* \ope}{\cst}{\qst} \tooqw \dirac{\cfg{\halt}{\cst}{\Phi_{U_{\qs}}(\qst)}}}

\end{prooftree}
$
\\
$
\begin{prooftree}
  \hypo{\strut}
  \infer1[(Meas)]{
    \cfg{\x <- \MEAS{\q_{i}}}{\cst}{\qst}
    \tooqw
    \dist{
      tr(\meas{k}{i}\qst):
\cfg{\halt}{\cst[\x := k]}{m_{k,i}(\qst)}}_{k \in \{0,1\}}}
\end{prooftree}
$
\\[0.2cm]
$
\begin{prooftree}
  \hypo{\cfg{\cmd_1}{\cst}{\qst} \tooqw \dist{ p_i : \cfg{\cmd_{\downarrow}^i}{\cst^i}{\qst^i}}_{i \in I}}
  \infer1[(Seq)]{
    \cfg{\cmd_1;\cmd_2}{\cst}{\qst} \tooqw \dist{ p_i : \cfg{\cmd_{\downarrow}^i;\cmd_{2}}{\cst^i}{\qst^i}}_{i \in I}}
\end{prooftree}
$
\\[0.2cm]
$
\begin{prooftree}
  \hypo{\sem[\cst]{\bexp} \in \{0,1\}}
  \infer1[(Cond)]{
    \cfg{\IF \bexp \THEN \cmd_{1} \ELSE \cmd_{0}}{\cst}{\qst} \tooqw \dirac{\cfg{\cmd_{\sem[\cst]{\bexp}}}{\cst}{\qst}}}
  \end{prooftree}
$
\\[0.2cm]
$
\begin{prooftree}
  \hypo{\sem[\cst]{\bexp} =0}
  \infer1[(Wh$_0$)]{\cfg{\WHILE \bexp \DO \cmd}{\cst}{\qst} \tooqw \dirac{\cfg{\halt}{\cst}{\qst}}}
\end{prooftree}
$
\\[0.2cm]
$
\begin{prooftree}
\hypo{\sem[\cst]{\bexp} =1}
\infer1[(Wh$_1$)]{\cfg{\WHILE \bexp \DO \cmd}{\cst}{\qst} \tooqw \dirac{\cfg{\cmd ; \WHILE \bexp \DO \cmd}{\cst}{\qst}}}
\end{prooftree}
$
\\[10pt]
\hrulefill
\vspace{-2mm}
\caption{Operational semantics in terms of PARS.}
\label{fig:os}
\end{figure*}

We now endow quantum programs with an operational semantics defined in terms of a PARS. 
Given a totally ordered set of qubits $Q=\{\q_1,\ldots,\q_n\}$, let $\mathcal{H}_{Q}$ be the $2^{n}$-dimensional Hilbert space defined by $\mathcal{H}_{Q} \triangleq \otimes_{i=1}^n \mathcal{H}_{\q_i}$, with $\mathcal{H}_{\q} = \Comp^{2}$ being the vector space of computational basis $\{\ket{0},\ket{1}\}$ and $\otimes$ being the tensor product. 
With $\bra{k}$ we denote the transpose conjugate of $\ket{k}$, for $k \in \{0,1\}$.
Let $\matrixspace$ be the set of complex square matrices acting on the Hilbert space $\mathcal{H}_{Q}$, i.e., $\matrixspace=\Comp^{2^n \times 2^n}$. 
Given $M \in \matrixspace$, $M^\dagger$ denotes the transpose conjugate of $M$ and let $I_{2^n}$ be the identity matrix over $\matrixspace$. We will write $I$ when the dimension is clear from the context. 

Let $\densop \subsetneq \matrixspace$ be the set of all \emph{density operators} (or quantum states), i.e., positive semi-definite matrices of trace equal to $1$ on $\mathcal{H}_{Q}$.  Density operators  can be viewed as the mathematical representation of a (mixed) \emph{quantum state}. 
A \emph{unitary operator} $U$ is a matrix in $\matrixspace$ such that $UU^\dagger = U^\dagger U = I$.
A \emph{superoperator} $\Phi_U : \densop \to \densop $, an endomorphism over density operators, is attached to each unitary operator $U$  and defined by $\Phi_U \triangleq \lambda \rho.U\rho U^\dagger$.
By definition, $\Phi_U$ is a completely positive trace preserving linear map. Indeed, $tr(U\rho U^\dagger)=tr(\rho)$, by unitarity. Hence $U\rho U^\dagger$ is a density operator in $\densop$ for each $\rho \in \densop$.


Regarding measurements, for each $i$, $1 \leq i \leq card(Q)$, we define $\meas{k}{i} \in \matrixspace$, with $k \in \{0,1\}$, by $\meas{0}{i} \triangleq  I_{2^{i-1}} \otimes (\ket{0}\bra{0}) \otimes I_{2^{n-i}}$ and $\meas{1}{i} \triangleq I - \meas{0}{i}$. The measurement of the qubit $q_i$ (in the computational basis) of a density matrix $\rho \in \densop$, produces the classical outcome $k\in \{0,1\}$ with probability $tr(\meas{k}{i}\rho)$. The (normalized) quantum state, after the measurement, is  defined by
\[
  m_{k,i} (\rho) \triangleq 
  \begin{cases}\frac{\meas{k}{i}\rho\meas{k}{i}^\dagger}{tr(\meas{k}{i}\rho)}, &\text{ if }tr(\meas{k}{i}\rho) \neq 0,\\
    \frac{I}{2^n} & \text{ otherwise.}
  \end{cases}
\]
Note that for all $\rho \in \densop,\ m_{k,i}(\rho) \in \densop$,  as it holds that $tr(m_{k,i}(\rho))=1$. Indeed, $tr(\meas{k}{i}\rho\meas{k}{i}^\dagger)= tr(\meas{k}{i}^2\rho)=tr(\meas{k}{i}\rho)$, as $\meas{k}{i}$ is a projection.
Hence $m_{k,i}$ is a map in $ \densop \to \densop$.

Set $\sem{\Bool}\triangleq \{0,1\}$ and $\sem{\Var} \triangleq \mathbb{N}$.
The classical state is modeled as a (well-typed) \emph{store} $s$ of domain $dom(s)$ mapping each variable $\x$ of type $\mathcal{K}$ to a value in $\sem{\mathcal{K}}$. 
Let $\cst[\x^\mathcal{K} := k]$ with $k \in \sem{\mathcal{K}}$ be the store obtained from $s$ by updating the value assigned to $\x$ in the map $s$. 
Given a store $\cst$, let $\sem[s]{-}: \mathcal{K}\mathtt{Exp} \to \sem{\mathcal{K}}$ be the map associating to each expression $\e$ of type $\mathcal{K}$ and such that $\Bool(\e)\cup \Var(\e) \subseteq dom(\cst)$, a value in $\sem{\mathcal{K}}$, and defined in a standard way.
For example $\sem[\cst]{\x} \triangleq \cst(\x)$, $\sem[\cst]{n} \triangleq n$, $\sem[\cst]{\true} \triangleq 1$, $\sem[\cst]{\nexp_1-\nexp_2} \triangleq \max(0, \sem[\cst]{\nexp_1}-\sem[\cst]{\nexp_2})$, etc.
Let $\Store$ be the set of stores.

Let $\halt$ be a special symbol for termination.
A \emph{configuration} $\mu$, for statement $\cmd \in  \Cmds \cup \{ \halt \}$, store $\cst$, and a quantum state $\qst$, has the form $\ltuple \cmd, \cst,\qst \rtuple$.
Let $\conf$ be the set of configurations. A configuration $\cfg{\cmd}{\cst}{\qst}$ is well-formed with respect to the sets of variables $B$, $V$, and $Q$ if $\Bool(\cmd) \subseteq B$, $\Var(\cmd) \subseteq V$,  $\Qubits(\cmd) \subseteq Q$, $dom(\cst) = B \cup V$, and $\qst \in \densop$.
Throughout the paper, we only consider configurations that are well-formed with respect to the sets of variables of the program under consideration.

The operational semantics is described in Figure~\ref{fig:os} as a PARS $\tooqw$ over objects in $\conf$, where terminal objects are precisely the configurations of the shape $\cfg{\halt}{\cst}{\qst}$.
The (classical or quantum) state of a configuration can only be updated by the three rules (Exp), (Op), and (Meas). 
Rule (Exp) updates the classical store wrt the value of the evaluated expression. 
Rule (Op) updates the quantum state to a new quantum state $\Phi_{U_{\qs}}(\qst) = U_{\qs} \rho U_{\qs}^\dagger$, where $U_{\qs}$ is the unitary operator  in $\matrixspace$  computed by extending the quantum gate $\oper{U}$ to the entire set of qubits $Q$.
Rule (Meas) performs a measurement on qubit $\q_i$.
This rule returns a distribution of configurations corresponding to the two possible outcomes, $k=0$ and $k=1$, with their respective probabilities $tr(\meas{k}{i} \qst)$ and, in each case, updates the classical store and the quantum state accordingly.
In the particular case where $tr(\meas{k_0}{i} \qst)=0$ for some $k_0\in\{0,1\}$, $ \dist{tr(\meas{k}{i}\qst): \cfg{\halt}{\cst[\x := k]}{m_{k,i}(\qst)}}_{k \in \{0,1\}} = \dist{1 : \cfg{\halt}{\cst[\x := 1-k_0]}{m_{1-k_0,i}(\qst)} }$.
Rule~(Seq) governs the execution of a sequence of statements $\cmd_1;\cmd_2$, under the covenant that $\halt ; \cmd \triangleq  \cmd$, for each statement $\cmd$. 
The rule accounts for potential probabilistic behavior when $\cmd_1$ performs a measurement and it is otherwise standard.
All the other rules are standard.

In a configuration $\mu = \cfg{\cmd}{\cst}{\qst}$, the pair $\sigma \triangleq (\cst,\qst)$ is called a state.
Let $\State^\cmd$ be the set of states $\sigma,\tau,\ldots$ that are well-formed wrt statement $\cmd$. 
For simplicity, we will denote this set by $\State$ when $\cmd$ is clear from the context. 
To ease the presentation, we sometimes write $\ltuple \cmd, \sigma \rtuple$ for the configuration $\mu$.

Specifically,
we will be interested also in expected runtimes and expectation based reasoning. These
carry over from PARSs as expected.
In what follows, we call functions $f : \conf \to \Rext$ \emph{expectations}.
\begin{definition}
For a statement $\cmd$, a state $\sigma$, and an expectation $f : \State \to \Rext$, we overload the notions of expected derivation length and weakest pre-expectation by:
  \begin{align*}
    & \edl[\cmd] : \State \to \Rext
    && \qwp[\cmd] : {(\State \to \Rext)} \to {(\State \to \Rext)} \\
    & \edl[\cmd]\triangleq \lambda \sigma. \edl[\tooqw] \ltuple \cmd, \sigma \rtuple
    &&  \qwp[\cmd] \triangleq \lambda f.\lambda \sigma. \wp[\tooqw](f_{st})\cfgc{\cmd}{\sigma}
       \tkom
  \end{align*}
  where $f_{st} \cfgc{\cmd}{\tau} = f(\tau)$.
\end{definition}

\begin{example}\label{ex:cointoss-sem}
  Consider the program $\coin$ given Figure~\ref{fig:cointoss}.
  In the setting of the program $\coin$, $Q=\{\q\}$,
  $\meas{0}{1} = \smx{1 & 0 \\ 0 & 0}$ and $\meas{1}{1} = \smx{0 & 0 \\ 0 & 1}$.
  %
  On an initial state $\sigma = (\cst,\qst)$, the reduction starts deterministically
  as in the classical setting, performing the initialization $\x = \true$ and $\iv = 0$.
  From there, evaluation reaches the loop $\WHILE\x\DO\cmd$. At each loop iteration,
  the loop counter $\iv$ is incremented, and the Hadamard gate applied to the quantum variable $\q$.
  The loop guard is obtained through measuring $\q$.

  To see how this is reflected in the semantics, let us first look at an iteration of the loop.
  If $\x$ was set to false, that is $\x$ holds the value $0$,
  by rule~(Wh$_0$) the loop terminates within one step:
  \begin{equation}
    \lmulti 1 : \cfg{\WHILE\x\DO\cmd}{[\x \asgn 0, \iv \asgn i]}{\qst} \rmulti
    \toomqw{1} \dirac{\cfg{\halt}{[\x \asgn 0, \iv \asgn i]}{\qst}} \tpkt \label{e0}\tag{0}
  \end{equation}
  On the other hand, when $\x$ was previously set to true, the loop executes its body. Precisely,
  we have:
  \begin{align}
    \mparbox{3mm}{\lmulti 1\! :\! \cfg{\WHILE\x\DO\cmd}{[\x\asgn1, \iv\asgn i]}{\qst} \rmulti} \notag{}\\
    &\toomqw{1}\! \lmulti 1\!  :\! \cfg{\iv <- \iv{+}1; \q <- \oper{H}; \x <- \MEAS{\q} ; \WHILE\x\DO\cmd}{[\x \asgn 1,\iv \asgn i]}{\!\qst} \rmulti \label{e1}\\
    &\toomqw{1}\! \lmulti 1 : \cfg{\q <- \oper{H}; \x <-\MEAS{\q};\WHILE\x\DO\cmd}{[\x \asgn 1,\iv \asgn i+1]}{\qst} \rmulti \label{e2}\\
    &\toomqw{1}\! \lmulti 1 : \cfg{\x <-\MEAS{\q}; \WHILE\x\DO\cmd}{[\x \asgn k,\iv \asgn i+1]}{\Phi_H(\qst)}\rmulti \label{e3}\\
    &\toomqw{1}\! \lmulti p_k:  \cfg{\WHILE\x\DO\cmd}{[\x \asgn k, \iv \asgn i+1]}{\qst_{k})} \rmulti_{k \in \{0,1\}}, \label{e4}
      \tkom
  \end{align}
  where in the last step, the probability $p_k$ equals~$\tr(\meas{k}{1}\Phi_H(\qst))$, while the normalized quantum
  state $\qst_{k}$ is given as~$m_{k,1}(\Phi_H(\qst))$.
  The above reduction is obtained by applying the rules of Figure~\ref{fig:os}:
  rule (Wh$_1$) for reduction (\ref{e1});
  rules (Exp) and (Seq) for reduction (\ref{e2});
  rules (Op) and (Seq) for reduction (\ref{e3});
  and finally rules (Meas) and (Seq) for reduction (\ref{e4}).

For an arbitrary initial quantum state
$\qst = \smx{\alpha & \beta \\ \gamma & \delta} \in \densop$
(where $\alpha,\beta,\gamma,\delta \in \Comp$ and $tr(\qst)=\alpha+\delta=1$, $\gamma = \overline{\beta}$, etc.),
it follows that
\begin{align*}
  p_0 = tr(\meas{0}{1} H\qst H^\dagger)
  = tr(\smx{1 & 0 \\ 0 & 0}
       \frac{1}{2}
       \smx{\alpha+\beta+\gamma+\delta &\alpha-\beta+\gamma-\delta\\
            \alpha+\beta-\gamma-\delta & \alpha-\beta-\gamma+\delta})
                                         = \frac{1+\beta + \gamma}{2}
                                         \tkom
\end{align*}
and likewise, $p_1 = 1 - p_0 = \frac{1 - (\beta + \gamma)}{2}$.
Using
$\qst_{k} = \frac{\meas{k}{1}H\qst H^\dagger\meas{k}{1}^\dagger}{tr(\meas{k}{1} H\qst H^\dagger)}
= \frac{(\meas{k}{1}H)\qst (\meas{k}{1}H)^\dagger}{p_k}$,
\begin{align*}
  \qst_{0} &
   = \frac{
        \smx{\sfrac{1}{\sqrt{2}} & \sfrac{1}{\sqrt{2}} \\ 0 & 0}\!
        \smx{\alpha & \beta \\ \gamma & \delta}\!
         \smx{\sfrac{1}{\sqrt{2}} & 0 \\ \sfrac{1}{\sqrt{2}} & 0}
      }{p_0}
    = \smx{1 & 0 \\ 0 & 0}
  &
  \qst_{1} &
   = \frac{
        \smx{ 0 & 0\\ \sfrac{1}{\sqrt{2}} & \sfrac{1}{\sqrt{2}}}\!
        \smx{\alpha & \beta \\ \gamma & \delta}\!
         \smx{ 0 & \sfrac{1}{\sqrt{2}} \\ 0 & \sfrac{1}{\sqrt{2}}}
      }{p_1}
   = \smx{0 & 0 \\ 0 & 1} \tpkt
\end{align*}
Summarizing (\ref{e1})--(\ref{e4}) we thus get:
\begin{align*}
\mparbox{4cm}{\lmulti 1 : \cfg{\WHILE\x\DO\cmd}{[\x \asgn 1, \iv \asgn i]}{\smx{\alpha & \beta \\ \gamma & \delta}} \rmulti}\\
&  \toomqw{4}^{4}
   \lmulti
   \begin{array}[t]{@{}r@{}l@{}}
     p_0 & : \cfg{\WHILE\x\DO\cmd}{[\x \asgn 0, \iv \asgn i+1]}{\qst_0}, \\[1mm]
     p_1 & : \cfg{\WHILE\x\DO\cmd}{[\x \asgn 1, \iv \asgn i+1]}{\qst_1} \rmulti.
   \end{array} \tpkt
\end{align*}
Putting everything together, we have
\[
\begin{array}{r@{\,}c@{\,}l}
  \cfg{\coin}{\cst}{\smx{\alpha & \beta \\ \gamma & \delta}}
  & \toomqw{\!\quad 2 \quad\! }^{2} & \lmulti 1 : \cfg{\WHILE\x\DO\cmd}{[\x\asgn1,\iv\asgn0]}{\qst} \rmulti \\
  & \toomqw{\!\quad 4 \quad\! }^{4} &
    \lmulti
    \begin{array}[t]{@{}r@{}l@{}}
      p_0 & : \cfg{\WHILE\x\DO\cmd}{[\x \asgn 0, \iv \asgn 1]}{\qst_0}, \\[1mm]
      p_1 & : \cfg{\WHILE\x\DO\cmd}{[\x \asgn 1, \iv \asgn 1]}{\qst_1} \rmulti
    \end{array}
  \\[1mm]
    & \toomqw{p_0+4p_1}^{4} &
    \lmulti
    \begin{array}[t]{@{}r@{}l@{}}
      p_0 & : \underline{\cfg{\halt}{[\x \asgn 0, \iv \asgn 1]}{\qst_0}}, \\[1mm]
      \frac{p_1}{2} & : \cfg{\WHILE\x\DO\cmd}{[\x \asgn 0, \iv \asgn 2]}{\qst_0}, \\[1mm]
      \frac{p_1}{2} & : \cfg{\WHILE\x\DO\cmd}{[\x \asgn 1, \iv \asgn 2]}{\qst_1} \rmulti
    \end{array}
  \\
    & \toomqw{\frac{p_1}{2}+4\frac{p_1}{2}}^{4} &
    \lmulti
    \begin{array}[t]{@{}r@{}l@{}}
      p_0 & : \underline{\cfg{\halt}{[\x \asgn 0, \iv \asgn 1]}{\qst_0}}, \\[1mm]
      \frac{p_1}{2} & : \underline{\cfg{\halt}{[\x \asgn 0, \iv \asgn 2]}{\qst_0}}, \\[1mm]
      \frac{p_1}{4} & : \cfg{\WHILE\x\DO\cmd}{[\x \asgn 0, \iv \asgn 3]}{\qst_0}, \\[1mm]
      \frac{p_1}{4} & : \cfg{\WHILE\x\DO\cmd}{[\x \asgn 1, \iv \asgn 3]}{\qst_1} \rmulti
    \end{array}
  \\[1mm]
      & \toomqw{\frac{p_1}{4}+4\frac{p_1}{4}}^{4} & \quad \cdots ,
\end{array}
\]
where terminal configurations are underlined.
This reduction converges to the terminal distribution
\begin{align*}
  \nf[\coin](\cst,\qst) =
    \begin{array}[t]{@{}l@{}}
      \lmulti p_0 : \cfg{\halt}{[\x \asgn 0, \iv \asgn 1]}{\qst_0} \rmulti
      + \lmulti \frac{p_1}{2^{i}} : \cfg{\halt}{[\x \asgn 0, \iv \asgn i+1]}{\qst_0} \rmulti_{i\geq 1}
    \end{array}
\tkom
\end{align*}
with an expected derivation length of
  \[
    \edl[\coin](\cst,\smx{\alpha & \beta \\ \gamma & \delta})
    = 2 + 4 + (p_0+4p_1) + \sum_{i=1}^\infty \frac{5p_1}{2^i} = 7+8p_1 = 11 - 4(\beta + \gamma)
    .
  \]
  For the expectation $f : \State \to \Rext$  defined by $f(\cst,\rho)\triangleq s(\iv)$ we have
  \[
    \qwp[\coin]\app f \app (\cst,\smx{\alpha & \beta \\ \gamma & \delta}) = p_0\times 1 + \sum_{i=1}^\infty \frac{p_1}{2^i}(i+1)=p_0+p_1+2p_1= 2 - (\beta + \gamma),
  \]
  that is, the mean value held by $\iv$ holds after execution is $2 - (\beta + \gamma)$.
  The termination probability is
  \[
    \qwp[\coin] \app (\lambda \sigma. 1) \app (\cst,\smx{\alpha & \beta \\ \gamma & \delta}) = p_0\times 1 + \sum_{i=1}^\infty \frac{p_1}{2^i} \times 1=p_0+p_1=1,
  \]
  i.e., the program is almost surely terminating.
\end{example}

\section{Weakest pre-expectations and the arithmetical hierarchy}
\label{s:mapping}

In this section, we study the hardness of some natural quantitative problems for weakest pre-expectations and expected derivation length.

\subsection{Computability-aimed restrictions}

\newcommand{\Alg}{\overline{\Q}}
\newcommand{\FADists}{\Dists_{\RAlg^+}^{\text{fin}}}
\newcommand{\confCT}{\conf_{\mathtt{CT}}}
\newcommand{\CmdsCT}{\Cmds_{\mathtt{CT}}}
\newcommand{\StateCT}{\State_{\mathtt{CT}}}
\newcommand{\ExCT}{{\tt E}_{\mathtt{CT}}}

This subsection is devoted to putting some restrictions on programs and on the considered notion of expectation to overcome the above computability issues.

\paragraph{Algebraic numbers.} Towards that end, a solution is to target a subset of complex numbers, where simple operations like equality are decidable.
We consider the set $\Alg$  of algebraic numbers, i.e., complex numbers in $\mathbb{C}$ that are roots of a non-zero polynomial in $\mathbb Q[X]$.
Let $\RAlg \triangleq \Alg \cap \mathbb{R}$ be the real closed field of real algebraic numbers in $\mathbb{R}$.
The following inclusions trivially hold
\begin{inparaenum}[(i)]
  \item $\mathbb{N} \subseteq \mathbb{Q} \subseteq \RAlg \subseteq \mathbb{R} \subseteq \mathbb{C}$ and
  \item $\Alg \subseteq \mathbb{C}$.
\end{inparaenum}
It was proved in~\cite[Proposition 2.2]{HalHarHir05} that equality over $\Alg$ and inequality over $\RAlg$ are decidable using Cohn's representation~\cite{cohn2002further} and is well-known that the product and sum over $\Alg$ are computable in polynomial time.

The program semantics is then restricted to matrices and density operators over the algebraic numbers. Given a totally ordered set of qubits $Q=\{\q_1,\ldots,\q_n\}$, let $\tilde{\mathcal{H}}_{Q}$ be the  Hausdorff pre-Hilbert space $\Alg^{2^{n}}$ (i.e., the completeness requirement on Hilbert spaces is withdrawn) of $n$ qubits defined by $\tilde{\mathcal{H}}_{Q} \triangleq \otimes_{i=1}^n \tilde{\mathcal{H}}_{\q_i}$, with $\tilde{\mathcal{H}_{\q}} \triangleq \Alg^{2}$ being the vector space of computational basis $\{\ket{0},\ket{1}\}$ over the field $\Alg$.
Let $\matrixspacealg$ and $\densopalg$ be the set of matrices and density operators on $\tilde{\mathcal{H}}_{Q}$, respectively. 

\paragraph{Clifford+T gates.} For the program semantics to be defined on the space $\densopalg$, the considered quantum gates will be restricted to gates whose corresponding unitary operators are  in $\matrixspacealg$, i.e., have a matrix representation over the algebraic numbers . We consider a restriction to the Clifford+T gates: $\oper{I}$, $ \oper{X}$, $\oper{Y}$, $\oper{Z}$, $\oper{H}$, $\oper{S}$, $\oper{CNOT}$, and $\oper{T}$, whose unitary matrices are given below:
\[
  I\triangleq\smx{1 & 0 \\ 0 &1},
  \
  X\triangleq \smx{0 & 1 \\ 1 &0},
  \
  Y\triangleq \smx{0 & -\img \\ \img &0},
  \
  Z\triangleq \smx{1 & 0 \\ 0 &-1},
  \
  H\triangleq\frac{1}{\sqrt{2}} \smx{1 & 1 \\ 1 &-1 },
\]
\[
  S\triangleq\frac{1}{\sqrt{2}} \smx{1 & 0 \\ 0 &\img},
  \
  CNOT\triangleq
  \smx{
    1 & 0 & 0 &0 \\
    0 & 1 & 0 & 0\\
    0 & 0 & 0 & 1\\
    0 & 0 & 1 & 0},
  \
  T\triangleq \smx{1 & 0 \\ 0 & e^{\img\frac{\pi}{4}}}
  \tpkt
\]
The Clifford+T fragment is the set of unitary transformations generated by sequential (matrix multiplication)  and parallel (Kronecker product) compositions of  the gates $H$, $S$, $CNOT$, and $T$. 
This constitutes a reasonable restriction for unitary operators as  it is known to be the simplest approximately universal fragment of quantum mechanics~\cite{aaronson2004improved}. 

Moreover, the superoperator of any unitary operator of the Clifford+T fragment is an endomorphism over density operators in  $\densopalg$.
 \begin{lemma}
The Clifford+T fragment preserves $\densopalg$, i.e., there exist $Q$ and $\qs \in Q$ such that for each unitary operator $U$ of the Clifford+T fragment $\Phi_{U_{\qs}} \in \densopalg \to \densopalg$. 
\end{lemma}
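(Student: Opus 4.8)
The plan is to split the claim into two parts that can be treated independently: (a) that the image $U_{\qs}\rho U_{\qs}^\dagger$ again has entries in $\Alg$ whenever $\rho$ does, and (b) that it is still a genuine density operator. Part (b) is already available from the discussion preceding the lemma: since $U_{\qs}$ is unitary, $\Phi_{U_{\qs}}$ is completely positive and trace preserving, so $U_{\qs}\rho U_{\qs}^\dagger$ is positive semi-definite of trace $1$ for every input density operator. Hence the only genuinely new content is part (a), the preservation of algebraicity.

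First I would record the two algebraic facts underpinning everything. On the one hand $\Alg$ is a field and is closed under complex conjugation, since the conjugate $\bar{z}$ of an algebraic number $z$ is again a root of any polynomial in $\Q[X]$ annihilating $z$, hence algebraic. Consequently $\matrixspacealg$, the matrices over $\Alg$, is closed under matrix product, Kronecker product, and the conjugate transpose $(-)^\dagger$. On the other hand each elementary gate lies in $\matrixspacealg$: the only non-rational entries occurring in $H$, $S$, $CNOT$, $T$ are $\tfrac{1}{\sqrt2}$ (a root of $2X^2-1$), $\img$ (a root of $X^2+1$), and $e^{\img\pi/4}$ (a primitive eighth root of unity, root of $X^4+1$), all algebraic.

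Next I would propagate these facts through the construction of Clifford+T unitaries. By definition a Clifford+T unitary $U$ is generated from $H$, $S$, $CNOT$, $T$ by sequential (matrix multiplication) and parallel (Kronecker product) composition; so a straightforward induction on this generation, using closure of $\matrixspacealg$ under product and Kronecker product, yields $U \in \matrixspacealg$, and then $U^\dagger \in \matrixspacealg$ by closure under conjugate transpose. The register extension $U_{\qs}$ acting on all of $Q$ is obtained from $U$ by tensoring with identity matrices on the untouched qubits and conjugating by the qubit-permutation (swap) matrices needed to route $U$ onto $\qs$; these auxiliary matrices have $0/1$ entries and so also lie in $\matrixspacealg$. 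Closure then gives $U_{\qs}, U_{\qs}^\dagger \in \matrixspacealg$, whence for $\rho \in \densopalg$ the product $U_{\qs}\rho U_{\qs}^\dagger$ is a product of three matrices over $\Alg$ and thus has algebraic entries. Combining this with part (b) places the image in $\densopalg$.

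I do not expect a real obstacle: once the algebraicity of the three irrational scalars and the closure of $\Alg$ under conjugation are in hand, the argument is a pure closure-plus-induction computation. The only point demanding mild care is the exact shape of the extension $U_{\qs}$, namely making explicit that the swap/permutation matrices used to place $U$ on the correct qubits are (trivially) algebraic, so that no step of the construction ever leaves $\matrixspacealg$.
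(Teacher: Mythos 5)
Your proposal is correct, and it follows exactly the route the paper intends: the lemma is in fact stated without an explicit proof, being treated as immediate from the closure of $\Alg$ under sums, products and complex conjugation together with the algebraicity of the entries of $H$, $S$, $CNOT$ and $T$ — precisely the closure-plus-induction argument you spell out, and the same facts are invoked verbatim in the appendix proof of Lemma~\ref{l:to-ct-closed}. Your split into (a) algebraicity of the entries of $U_{\qs}\rho U_{\qs}^\dagger$ and (b) preservation of the density-operator property via unitarity matches the discussion preceding the lemma, and your explicit observation that the identity and qubit-permutation matrices used to build $U_{\qs}$ are $0/1$-valued (hence in $\matrixspacealg$) is a detail the paper leaves implicit.
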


Notice that, while a restriction to Clifford+T is reasonable in terms of quantum mechanics and universality, our result can be extended by adding any quantum gate preserving the above lemma. For example, the phase shift gate, defined by $
P_\varphi\triangleq \smx{1 & 0 \\ 0 &e^{\img\varphi}}
$,  preserves $\densopalg$ whenever $\varphi = r \pi$, for any $r \in \mathbb{Q}$.

%

Let $\CmdsCT$ be the set of statements restricted to quantum gates computing Clifford+T unitary operators (hence a subset of $\Cmds$), $\StateCT$ be the set of states whose quantum state is in $\densopalg$, and $\confCT$ be the set of well-formed configurations in $(\CmdsCT \cup \{\halt \}) \times \StateCT$.
Let $\StateCT^\cmd$ be the set of states in $\StateCT$ that are well-formed wrt statement $\cmd$.  Once again, by abuse of notation, we will denote this set by $\State$ when $\cmd$ is clear from the context.

A crucial observation is that $\confCT$ is closed under reduction, in the following sense.
Let $\FADists(A) \subseteq \Dists(A)$ be the set of \emph{finitely supported} sub-distributions $\delta$ with algebraic probabilities, i.e., $\delta(a) \in \RAlg^+$ for all $a \in A$.
\begin{restatable}{lemma}{toctclosed}\label{l:to-ct-closed}
  The set $\FADists(\confCT)$ is stable under reduction, more precisely,
  if $\delta \in \FADists(\confCT)$ and $\delta \toomqw{c} \varepsilon$, then $\varepsilon \in \FADists(\confCT)$ and $c \in \RAlg^{+}$.
\end{restatable}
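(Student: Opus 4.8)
The plan is to split the statement into two independent facts: a single-step property of the underlying PARS $\tooqw$, and a closure property of $\FADists(\confCT)$ under the finite, algebraically-weighted combinations that the lifting performs.

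First I would prove the single-step claim: for every configuration $\mu \in \confCT$, if $\mu \tooqw \delta$ then $\delta \in \FADists(\confCT)$. This goes by induction on the derivation of $\mu \tooqw \delta$, inspecting each rule of Figure~\ref{fig:os}. The rules (Skip), (Exp), (Cond), (Wh$_0$) and (Wh$_1$) return a Dirac distribution $\dirac{\cdot}$ leaving the quantum state untouched and keeping the statement within $\CmdsCT$, so membership in $\FADists(\confCT)$ is immediate (the single weight $1$ lies in $\RAlg^+$). For (Op) I would invoke the preceding lemma that the Clifford+T fragment preserves $\densopalg$, giving $\Phi_{U_{\qs}}(\qst) \in \densopalg$; the resulting distribution is again Dirac. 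The interesting case is (Meas): the support has at most two elements, the weights are $tr(\meas{k}{i}\qst)$, and the post-measurement states are $m_{k,i}(\qst)$. I would argue that, since $\meas{k}{i}$ has entries in $\{0,1\}$ and $\qst$ has entries in $\Alg$, and since $\Alg$ is closed under sum and product, the matrices $\meas{k}{i}\qst\meas{k}{i}^\dagger$ have entries in $\Alg$ and the traces $tr(\meas{k}{i}\qst)$ are non-negative real algebraic numbers, i.e.\ lie in $\RAlg^+$; the normalised states then stay in $\densopalg$ because they are either $I/2^n$ (rational entries) or a quotient of an $\Alg$-entry matrix by a nonzero element of $\RAlg$, using that $\Alg$ and $\RAlg$ are fields, and they are genuine density operators by the trace computation already recorded in the text. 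Finally (Seq) follows from the induction hypothesis: the premise reduction lies in $\FADists(\confCT)$, and post-composing each residual statement with $\cmd_2 \in \CmdsCT$ preserves both well-formedness and the Clifford+T restriction.

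Second I would lift this to the distribution level. Rather than inducting on the derivation of $\delta \toomqw{c}\varepsilon$ --- which is awkward because rule (3) of the lifting may split $\delta$ into summands that are not themselves in $\FADists(\confCT)$ --- I would exploit the determinism of $\toomqw{c}$ to work with its normal form. Expanding $\delta = \sum_{a \in \supp(\delta)} \delta(a)\cdot\dirac{a}$ and applying rule (1) to each terminal $a$ and rule (2) to each reducible $a \to \delta_a$, then recombining with rule (3), produces the (by determinism, unique) reduct $\varepsilon = \sum_{a \not\to}\delta(a)\cdot\dirac{a} + \sum_{a \to \delta_a}\delta(a)\cdot\delta_a$ with weight $c = \sum_{a \to \delta_a}\delta(a)$. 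Since $\delta \in \FADists(\confCT)$, the index set $\supp(\delta)$ is finite and each $\delta(a) \in \RAlg^+$; by the first part each $\delta_a \in \FADists(\confCT)$; and $\FADists(\confCT)$ is closed under finite $\RAlg^+$-linear combinations, because a finite union of finite supports is finite and $\RAlg^+$ is closed under sums and products. Hence $\varepsilon \in \FADists(\confCT)$, and $c$, being a finite sum of elements of $\RAlg^+$, lies in $\RAlg^+$.

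The main obstacle I anticipate is precisely the lifting step: rule (3) grants enough freedom that a derivation-directed induction does not go through, since the components into which $\delta$ is decomposed need carry neither finite support nor algebraic weights. Reducing to the deterministic normal form above removes this freedom and confines all the arithmetic to the finite support of $\delta$ with $\RAlg^+$ weights. The only other delicate point is the algebraicity bookkeeping in the (Meas) rule, where the normalisation divides by $tr(\meas{k}{i}\qst)$; this is harmless precisely because $\Alg$ and $\RAlg$ are fields, so that division by a nonzero algebraic number stays within the algebraic numbers.
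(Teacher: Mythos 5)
Your proof is correct and follows essentially the same route as the paper's: first a case analysis on the rules of Figure~\ref{fig:os} establishing the single-step property (with the algebraicity of traces and normalised post-measurement states as the only nontrivial case), then a lift to the distribution level that exploits determinism of $\toomqw{\cdot}$ to decompose $\delta$ over its finite support and conclude by closure of $\FADists(\confCT)$ under finite $\RAlg^{+}$-combinations. Your treatment is somewhat more explicit than the paper's on the division by $tr(\meas{k}{i}\qst)$ in the (Meas) case and on why a derivation-directed induction over the lifting rules would be awkward, but these are refinements of the same argument rather than a different one.
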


\paragraph{Computable expectations.}
We also restrict the expectation codomain to algebraic numbers. Hence the considered expectations will be functions in $ \StateCT \to \RAlg^+$. On its own, this restriction is not sufficient as the set $\StateCT \to \RAlg^+$ is not countable. It implies that there exist expectations  in $ \StateCT \to \RAlg^+$ that are not computable functions. To solve this issue, we adopt the convention by considering only computable expectations in $\StateCT \to \RAlg^+$:
\[
  \ExCT \triangleq \{ f \ | \ f : \StateCT \to \RAlg^+,\ f \text{ computable} \} \tpkt
\]
An immediate consequence of Lemma~\ref{l:to-ct-closed} is that $\nf[\cmd](\sigma) \in \Dists(\confCT)$ for any $\cmd \in \CmdsCT$ and $\sigma \in \StateCT$.
In consequence, $\qwp[\cmd] \app f \app \sigma$ is well-defined for all $f \in \StateCT$.
This justifies that in our treatment below, we restrict expectations to the class $\ExCT$.
However, keep in mind that despite Lemma~\ref{l:to-ct-closed}, the subdistribution $\nf[\cmd](\sigma)$, obtained at the limit, does not fall within $\FADists(A)$.
It is neither finite nor are probabilities algebraic ($\RAlg^{+}$ is not complete).
In particular, in general~$\qwp[\cmd] \app f \app \sigma$ is a real number, rather than an algebraic one.

\subsection{Quantitative problems}
We now define formally the quantitative problems that we study.
\paragraph{Testing problems.}
Some natural quantitative problems related to weakest pre-expectations are to determine for a given program $\cmd$, a given  state $\sigma$, a given expectation $f$, and a given algebraic number $a$, whether the corresponding weakest pre-expectation $ \qwp[\cmd] \app f \app \sigma$ is smaller or equal than $a$. In this setting, it makes sense to consider any possible relation in the set $ \{<,\leq,=,\geq,>\} \subseteq \mathcal{P}(\RAlg \times \RAlg)$ as one could be interested in finding precise values, (strict) upper or lower bounds.

\begin{definition}

The \emph{testing problem} sets $\WP{\mathcal{R}} \subseteq \confCT \times \ExCT \times \RAlg^+$, for $\mathcal{R} \in  \{<,\leq,=,\geq,>\}$,  are  defined by:
 $$(\cmd,\sigma,f,a) \in \WP{\mathcal{R}}  \defiff   (\qwp[\cmd] \app f \app \sigma) \mathrel{\mathcal{R}} a.$$
\end{definition}
The consideration of both $\WP{\leq}$ and $\WP{>}$ may seem redundant, as $\WP{>}$ can be viewed as the complement of $\WP{\leq}$. However, it makes perfect sense to distinguish both properties, when considering the corresponding universal problems. I.e.\ properties for any possible input.

\paragraph{Finiteness problem.}
Another problem of interest consists in checking whether the weakest pre-expectations produces some finitary output.
\begin{definition}
The \emph{finiteness problem} set $\WP{\neq \infty} \subseteq \confCT \times \ExCT$ is  defined by:
 $$(\cmd,\sigma,f) \in \WP{\neq \infty}  \defiff   \qwp[\cmd] \app f \app \sigma < \infty.$$
\end{definition}

\paragraph{Termination problems.}
We also define two termination problems for almost sure termination and positive almost sure termination:

\begin{definition}
  The sets of (\emph{positive}) \emph{almost-sure terminating} configurations $\AST \subseteq \confCT$ ($\PAST \subseteq \confCT$) are defined by:
  \begin{align*}
    (\cmd,\sigma) \in \AST &\defiff  |\nf[\cmd](\sigma)|=1\\
  (\cmd,\sigma) \in \PAST &\defiff  \edl[\cmd](\sigma) < \infty \tpkt
\end{align*}
\end{definition}
It is well-known that $\PAST \subsetneq  \AST$, cf.~\cite{BG:RTA:05}.

\paragraph{Universal problems.}
Another kind of natural problems arises if one tries to check some properties for each possible program input (i.e., for each state $\sigma$). We can thus defined universal properties for each of the sets described previously.
\begin{definition}
  The sets of universal testing, finiteness and (positive) a.s. termination problems are defined by:
\end{definition}
\begin{align*}
    (\cmd,f,g) \in \UWP{\mathcal{R}} \subseteq \CmdsCT \times \ExCT^2  &\iff   \forall \sigma \in \StateCT, \ (\cmd,\sigma,f,g(\sigma)) \in \WP{\mathcal{R}},\\
   (\cmd,f) \in \UWP{\neq \infty} \subseteq \CmdsCT \times \ExCT
   & \iff   \forall \sigma \in \StateCT, \ (\cmd,\sigma,f) \in \WP{\neq \infty},\\
  \cmd \in \UAST \subseteq \CmdsCT
   &\iff  \forall \sigma \in \StateCT, \ (\cmd,\sigma) \in \AST,\\
  \cmd \in \UPAST \subseteq \CmdsCT
   & \iff  \forall \sigma  \in \StateCT, \ (\cmd,\sigma) \in \PAST \tpkt
\end{align*}

\begin{example}
  We have $\coin \in \UAST$ and $\coin \in \UPAST$, for the program $\coin$ of Figure~\ref{fig:cointoss}.
  Indeed, it was shown in Example~\ref{ex:cointoss-sem} that $\coin$ terminates with probability $1$ and a finite expected derivation length. This property holds for any input of the domain.
  In the same example, we have proven $(\coin,f) \in \WP{\neq\infty}$ for $f(\cst,\qst) = \cst(i)$. Indeed, we have shown the stronger property
  $(\coin,f,g) \in \WP{=}$, where $g(\cst,\smx{\alpha & \beta \\ \gamma & \delta}) = 2 - (\beta + \gamma)$.
\end{example}

\subsection{Completeness results in the arithmetical hierarchy}
In what follows, we place the introduced quantitative problems within
the \emph{arithmetical hierarchy}~\cite{odifreddi1992classical}.
The arithmetical hierarchy is a means to classify and relate \emph{undecidable} problems wrt.\ to their inherent difficulty,
measured in terms of the number of (unbounded) quantifier alternations needed to state the problem
as a formula in first-order arithmetic, based on a decidable (recursive) predicate.


\paragraph*{Reminder on the arithmetical hierarchy.}
Classes of the arithmetical hierarchy are defined inductively as follows: 
\begin{align*}
\Pi^0_0 = \Sigma_0^0&\triangleq \textsc{rec}, \quad \text{ \textsc{rec} being the class of decidable problems (recursive sets)} \\
\Pi^0_{n+1} &\triangleq \{ \psi \ | \ \exists \phi \in \Sigma^0_n,\ \forall \overline{x}. (\psi(\overline{x})  \iff  \forall \overline{y}.\phi(\overline{x},\overline{y}))\},\\
\Sigma^0_{n+1} &\triangleq \{ \psi \ | \ \exists \phi \in \Pi^0_n,\ \forall \overline{x}. (\psi(\overline{x})  \iff  \exists \overline{y}.\phi(\overline{x},\overline{y}))\} \tpkt
\end{align*}
For each $n$, $\Pi^0_n$ is the complement of $\Sigma^0_n$ (i.e., $\Pi^0_n =$ co-$\Sigma^0_n$, and vice versa) and it is a well-known result that $\Sigma^0_ 1$ and  $\Pi^0_1$ correspond to the classes $\textsc{re}$ of recursively enumerable (i.e., semi-decidable) problems and co-$\textsc{re}$ of co-recursively enumerable (i.e., co-semi-decidable) problems, respectively.
Given the sets $A \subseteq X$ and $B \subseteq Y$, we write $A \reduces B$ ($A$ is many-one reducible to $B$) if there exists a computable function $f: X \to Y$ such that $\forall x \in X,\ x \in A \iff f(x) \in B$.
Given a class $\textsc{c}$ of the arithmetical hierarchy and a set $A$, $A$ is $\textsc{c}$-\emph{hard} if  $\forall B \in \textsc{c},\ B \reduces A$.
A set $A$ is  $\textsc{c}$-\emph{complete} if $A \in \textsc{c}$ and $A$ is $\textsc{c}$-hard.
It is well-known that if a set $A$ is $\textsc{c}$-complete then its complement, noted co-$A$, is (co-$\textsc{c}$)-complete.

\newcolumntype{C}{>{$}w{l}{2cm}<{$}}
\newcolumntype{D}{>{$}c<{$}}

\newcommand{\mrk}{${}^{(\ddag)}$}
\begin{table}[t]
  \newcommand{\FWP}{\WP{\neq\infty}}
  \newcommand{\FUWP}{\UWP{\neq\infty}}

  \centering
  \begin{tabular}{>{\em}l @{\qquad\qquad}  CDl w{c}{5mm} CDl}
    \toprule
    \multicolumn{1}{c}{} & \multicolumn{3}{c}{Standard} && \multicolumn{3}{c}{Universal}\\
    \cmidrule{2-4} \cmidrule{6-8}
    \multicolumn{1}{c}{} & \multicolumn{1}{c}{Problem} & \multicolumn{2}{c}{Class} && \multicolumn{1}{c}{Problem} & \multicolumn{2}{c}{Class} \\
    \cmidrule(r){1-1}
    \cmidrule{2-8}
    Testing
     & \WP{>}    & \Sigma^0_1 &      && \UWP{>}    & \Pi^0_2 & \mrk \\
     & \WP{\geq} & \Pi^0_2    & \mrk && \UWP{\geq} & \Pi^0_2 & \mrk \\
     & \WP{=}    & \Pi^0_2    &      && \UWP{=}    & \Pi^0_2 & \mrk \\
     & \WP{\leq} & \Pi^0_1    & \mrk && \UWP{\leq} & \Pi^0_1 & \mrk \\
     & \WP{<}    & \Sigma^0_2 &      && \UWP{<}    & \Pi^0_3 & \mrk \\[2mm]
    Finiteness
     & \FWP      & \Sigma^0_2 &      && \FUWP      & \Pi^0_3 & \mrk \\[2mm]
    Termination
     & \AST      & \Pi^0_2    &      && \UAST      & \Pi^0_2\\
     & \PAST     & \Sigma^0_2 &      && \UPAST     & \Pi^0_3\\[1mm]
    \bottomrule
  \end{tabular}
  \vspace{2mm}
  \caption{Completeness results for quantitative problems in the arithmetical hierarchy.}\label{fig:mapping}
\end{table}

\paragraph*{Results.}
Table~\ref{fig:mapping} associates the quantum decision problems to the corresponding classes
in the arithmetical hierarchy for which we have proven them \emph{complete}, that is, we have proven membership and hardness for the corresponding class. Some of the results may seem surprising. For instance, the testing problem
$\WP{>}$, i.e., deciding $\qwp[\cmd] \app f \app \sigma > a$ within the Clifford+T fragment,
turns out to be recursive enumerable.
It is thus not more difficult than the \emph{halting problem} $\mathcal{H}$.\footnote{In our context the halting set $\mathcal{H}$ can be defined
as the class of classical programs and states $(\qmd,\sigma)$ for which $\qmd$ is halting on~$\sigma$.}
Remarkable, through the restriction to the Clifford+T fragment, corresponding problems are ranked within the arithmetical
hierarchy identical to their non-quantum counterparts (\cite{SchnablS11,KaminskiKatoen}).
This observation holds for all problems apart those marked with (\ddag) which, to the best of our knowledge,
have not been studied in a classical/probabilistic setting.

A crucial observation towards these results is that, restricting to the Clifford+T fragment, the weakest pre-expectation
of a program $\qmd$ can be \emph{approximated} through \emph{computable} transformers
$\qwp<n>[\cmd] : \ExCT \to \ExCT$
that limit execution of $\cmd$ to at most $n \in \N$ reduction steps.
That is,
$$
  \qwp<n>[\cmd] \app f \app \sigma \triangleq \E{\nf<n>[\cmd](\sigma)}{f} \tkom
$$
for $\nf<n>[\cmd](\sigma)$ 
the distribution of terminal configurations obtained within $n$ reduction steps, when evaluating $\cfgc{\cmd}{\sigma}$.
With regards to the above mentioned $\WP{>} \in \Sigma^0_1$ for instance, observe that:
\begin{align*}
  (\cmd,f,\sigma,a) \in \WP{>}
  & \iff \qwp[\cmd] \app f \app \sigma > a \\
  & \iff \lim_{i \to \infty} \qwp<n>[\cmd] \app f \app \sigma > a \\
  & \iff \exists n \in \N,\, \exists \delta \in \RAlg^{+} \setminus \{0\},\ \qwp<n>[\cmd] \app f \app \sigma \geq a + \delta \tpkt
\end{align*}
Crucially, the predicate $\qwp<n>[\cmd] \app f \app \sigma \geq a + \delta$ becomes computable. In essence, this is a consequence of
Lemma~\ref{l:to-ct-closed}: The $n$-th step normal form distribution $\nf<n>[\cmd](\sigma)$
is finite and computable, as $f$ is computable so is thus $\qwp<n>[\cmd] \app f \app \sigma$.
From here, the result follows now as equality on $\RAlg$ is decidable.
The proof of this,
as well as all completeness proofs listed in Table~\ref{fig:mapping}
can be found in the Appendix.
The following constitutes our first main result.
\begin{theorem}\label{thm:mapping}
  All completeness results in Table~\ref{fig:mapping} hold.
\end{theorem}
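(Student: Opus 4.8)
The plan is to prove each of the sixteen completeness results in Table~\ref{fig:mapping} by establishing, for every problem, both \emph{membership} in the stated class and \emph{hardness} for that class. The unifying tool is the computable step-indexed approximation $\qwp<n>[\cmd]$ introduced just before the theorem: by Lemma~\ref{l:to-ct-closed}, the $n$-step terminal distribution $\nf<n>[\cmd](\sigma)$ is finitely supported with algebraic weights, so $\qwp<n>[\cmd]\app f\app\sigma \in \RAlg^{+}$ is computable (using that $f\in\ExCT$ is computable and that sum, product, and comparison over $\RAlg$ are decidable). This gives me a decidable base predicate on which to mount the arithmetic-hierarchy quantifier analysis.

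For the \emph{membership} direction I would treat the testing problems first, since the others reduce to the same analysis. The key monotone limit is $\qwp[\cmd]\app f\app\sigma = \lim_{n\to\infty}\qwp<n>[\cmd]\app f\app\sigma = \sup_n \qwp<n>[\cmd]\app f\app\sigma$, which lets me trade the real-valued transformer for quantifiers over $n$ and over a rational (or algebraic) threshold $\delta$. Concretely, as already sketched for $\WP{>}$, I get the $\Sigma^0_1$ form $\exists n\,\exists\delta>0.\ \qwp<n>[\cmd]\app f\app\sigma \geq a+\delta$; dually $\WP{\leq}$ is $\forall n.\ \qwp<n>[\cmd]\app f\app\sigma \leq a$, giving $\Pi^0_1$. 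The two-alternation problems ($\WP{\geq},\WP{<},\WP{=}$) arise by combining an approach-from-below witness with an $\varepsilon$-closeness condition, e.g.\ $\WP{\geq}$ as $\forall\varepsilon>0\,\exists n.\ \qwp<n>[\cmd]\app f\app\sigma \geq a-\varepsilon$, which is $\Pi^0_2$; $\WP{=}$ intersects $\WP{\leq}\in\Pi^0_1$ with $\WP{\geq}\in\Pi^0_2$, staying in $\Pi^0_2$. For $\AST$ and finiteness I use the analogous approximations: $\AST$ is $|\nf[\cmd](\sigma)|=1$, i.e.\ $\qwp[\cmd]\app(\lambda\sigma.1)\app\sigma=1$, which is the $\Pi^0_2$ instance of $\WP{=}$ at $a=1$ combined with the trivial upper bound; $\PAST$ and $\FWP$ use the step-indexed expected derivation length $\edl^{\leq n}$, and finiteness of a monotone limit is naturally $\Sigma^0_2$ ($\exists$ bound $\forall$ step). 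The universal problems prepend a single $\forall\sigma$ over the (countable, effectively enumerable) set $\StateCT$, raising each class by one quantifier block where the inner formula begins with $\exists$ and leaving it unchanged where it already begins with $\forall$—this explains why $\UWP{\leq}$ stays $\Pi^0_1$ while $\WP{>}\in\Sigma^0_1$ jumps to $\UWP{>}\in\Pi^0_2$.

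For the \emph{hardness} direction the strategy is reduction from canonical complete problems. The non-quantum analogues are known to be complete for these classes in the classical and probabilistic settings~\cite{SchnablS11,KaminskiKatoen}, so the plan is to embed purely classical (or classical-probabilistic) programs into $\CmdsCT$ via the syntactic sugar for basis-state initialization and the $\coin$-style measurement gadget, which faithfully simulates a fair coin and hence any probabilistic choice using only Clifford+T gates. Then $\AST$-hardness reduces from the $\Pi^0_2$-complete universal-halting-style problem, $\PAST$-hardness from a $\Sigma^0_2$-complete problem, and the testing/finiteness reductions follow by instrumenting counter variables whose expected value encodes the target quantity. The results marked $(\ddag)$, having no prior classical counterpart, require fresh hardness arguments, and I expect these to be the main obstacle: I would build them by direct reduction from a suitable universal predicate, carefully arranging the quantifier structure so that the inner decidable predicate is exactly the computable $\qwp<n>$ comparison, and verifying that the classical-to-Clifford+T simulation preserves both termination probability and expected values exactly (not merely approximately), which is where the algebraic-amplitude closure of Lemma~\ref{l:to-ct-closed} is essential. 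Assembling all individual membership and hardness lemmas then yields Theorem~\ref{thm:mapping}.
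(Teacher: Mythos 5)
Your membership arguments coincide with the paper's in every case: the same computable approximants $\qwp<n>[\cmd]$ justified by Lemma~\ref{l:to-ct-closed}, the same quantifier normal forms (including $\WP{=}=\WP{\leq}\cap\WP{\geq}$, the treatment of $\AST$ through $|\nf[\cmd](\sigma)|=\qwp[\cmd]\app\underline{1}\app\sigma$, and the observation that prefixing $\forall\sigma$ over the countable set $\StateCT$ lifts a $\Sigma^0_n$ predicate to $\Pi^0_{n+1}$ while leaving a $\Pi^0_n$ one unchanged). That half of your plan is complete and correct, and it is exactly the paper's route.

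The gap is in the hardness half, which you leave at the level of ``reduction from a suitable universal predicate.'' Two concrete ingredients are missing. First, you never identify the anchor problems at each level. The paper reduces from the halting problem $\mathcal{H}$ ($\Sigma^0_1$-complete), the universal halting problem $\mathcal{UH}$ ($\Pi^0_2$-complete), and---crucially for the three $\Pi^0_3$-complete entries $\UWP{<}$, $\UWP{\neq\infty}$ and $\UPAST$---the cofiniteness problem $\COF$, which is $\Sigma^0_3$-complete, and its complement. Without naming a $\Sigma^0_3$-complete source, no amount of coin-gadget simulation produces the $\Pi^0_3$-hardness claims, and these are precisely the $(\ddag)$ cases you flag as the main obstacle. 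Second, several reductions cannot be carried out by ``instrumenting counter variables'' alone: for the $\Sigma^0_2$-hardness of $\WP{<}$ (and likewise for $\UWP{=}$ and $\WP{\neq\infty}$) the paper encodes the computation into the \emph{post-expectation} itself, taking $f_{\qmd}(\cst,\qst)=\qwpeq<\cst(\kv)>[\qmd](\underline{1})(\sigma_{\cst(\iv)})\cdot 2^{\cst(\kv)+1}$, where the factor $2^{\cst(\kv)+1}$ cancels the weight of a geometrically sampled step index so that the weakest pre-expectation of a two-line sampling program collapses to $\sum_{i}2^{-(i+1)}$ ranging over exactly the inputs on which $\qmd$ halts. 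This trick of hiding a decidable step-bounded predicate inside a computable expectation in $\ExCT$ is the idea that makes the testing and finiteness hardness proofs go through, and it does not appear in your proposal. Similarly, the $\COF$ reductions need a specific loop that runs $\qmd$ on successively enumerated stores, aborting with probability $\sfrac{1}{2}$ per iteration and getting stuck on any non-terminating input, so that cofiniteness of the halting set is reflected in whether the expected final value of the counter attains its limit or falls strictly short. Until these constructions are supplied, the hardness direction of Theorem~\ref{thm:mapping} is not established.
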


\section{Quantum expectation transformers}
\label{s:wp}
\newcommand{\qmap}{\chi}
\newcommand{\proba}[2]{p_{#1,#2}}
We now define a notion of \emph{quantum expectation transformer} as a means to compute symbolically the weakest pre-expectation of a program.
To this end, we first introduce some preliminary notations in order to lighten the presentation.
\paragraph*{Notations.}
  For any expression $\e$, $\sem{\e}$ is a shorthand notation for the function $\lambda \ltuple \cst, \qst \rtuple. \sem[\cst]{\e} \in \State \to \Rext$.
  We will also use $f[\x := \e]$ for the expectation $\lambda \ltuple \cst, \qst \rtuple.f\ltuple \cst[\x := \sem[\cst]{\e}], \qst \rtuple$.
Similarly, for a given map $\qmap : \densop \to  \densop$,  $f[\qmap] \triangleq \lambda (\cst,\qst). f(\cst, \qmap(\qst) )$.   We will also sometimes group such state modifications,   for instance, $f[\x := \e\text{; }\qmap]$ stands for $(f[\x := \e])[\qmap]$
and $f[\x := \e,\y:=\e']$ stands for $(f[\x := \e])[\y:=\e']$.

  For $p  \in \State \to [0,1]$ and $f,g \in \State \to \Rext$, $f \up{p} g$ denotes the
  function $\lambda \sigma. p(\sigma) \cdot f(\sigma) + (1-p(\sigma)) \cdot g(\sigma) \in \State \to \Rext$,
  similar we use $f \cdot g$ to denote $\lambda \sigma. f(\sigma) \cdot g(\sigma) \in \State \to \Rext$.
  Thus, for instance, $f[\x := \x + 1] +_{\sem{\x = 1}} f$ behaves like~$f$, except that $\x$ is first incremented when applied to states
  with classical variable $\x$ equal to $1$.
  In correspondence to the normalization of quantum state $m_{k,i}$, we define probabilities $\proba{k}{i} \triangleq \lambda \qst. tr(\meas{k}{i}\qst\meas{k}{i}^{\dagger})$.
  We overload this function from $\densop$ to $\State$ s.t. $\proba{k}{i}(\cst,\qst) = \proba{k}{i}(\qst)$.
  In this way, $f[\x := 0\text{; }m_{0,i}] \up{\proba{0}{i}} f[\x := 1\text{; }m_{1,i}]$
  computes precisely the expected value of $f$ on the distribution of states obtained by measuring the $i$-th qubit and
  assigning the outcome to classical variable $\x$.

  Finally, we denote by $\leq$ also the pointwise extension of the order from $\Rext$ to functions, that is,
  $f \leq g$ holds iff $\forall \sigma \in \State,\ f(\sigma) \leq g(\sigma)$.

\begin{figure*}[t]
\hrulefill
\begin{align*}
  \wpt{\SKIP}{f} & \triangleq f
  \\
  \wpt{\x <- \e}{f} & \triangleq f[\x := \e]
  \\
  \wpt{\cmd_1 ; \cmd_2}{f} &\triangleq \wpt{\cmd_1}{\wpt{\cmd_2}{f}}
  \\
  \wpt{\IF \bexp \THEN \cmd_1 \ELSE \cmd_2}{f} & \triangleq \wpt{\cmd_1}{f}\up{\sem{\bexp}} \wpt{\cmd_2}{f}
  \\
  \wpt{\WHILE \bexp \DO \cmd}{f} &\triangleq \lfp\left(\lambda F. \wpt{\cmd}{F} \up{\sem{\bexp}} f \right)
  \\
  \wpt{\qs <* \ope }{f} &\triangleq f[\Phi_{U_{\qs}}]
  \\
  \wpt{\x <- \MEAS{\q_i}}{f}
  & \triangleq f[\x := 0\text{; }m_{0,i}] \up{\proba{0}{i}} f[\x := 1\text{; }m_{1,i}] \tpkt
\end{align*}
\hrulefill
\vspace{-2mm}
\caption{Quantum expectation Transformer $\wpt{\cdot}{\cdot}$}
\label{fig:qwpt}
\end{figure*}

\begin{definition}[Quantum expectation transformer]
The \emph{quantum expectation transformer} consists in a program semantics mapping expectations to expectations in a continuation passing style
 \[
   \wpt{\cdot}{\cdot} : \Cmds \to (\State \to \Rext) \to (\State \to \Rext)
 \]
 and is defined inductively on statements in Figure~\ref{fig:qwpt}.
\end{definition}
This transformer corresponds to the notion of \emph{expected value transformer} of~\cite{AMPPZ:LICS:22} on the Kegelspitze $\CSd = (\Rext,+_{\mathtt{f}})$, with $+_{\mathtt{f}}$ being the forgetful addition.
In the case of loops, the least fixed point $\lfp$ is defined with respect to the pointwise
ordering on the function space $\State \to \Rext$. Equipped with this ordering, this space forms a $\omega$-CPO.
As the quantum transformer can be shown to be $\omega$-continuous, the fixed-point is always defined, cf.~\cite{Winskel:93}.

\begin{restatable}[Adequacy]{theorem}{tadequacy}\label{t:adequacy}
 The following holds:
 $$\forall \cmd \in \Cmds,\ \forall f : \State \to \Rext,\ \qwp[\cmd](f) = \wpt{\cmd}{f}.$$
\end{restatable}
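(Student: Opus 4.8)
The plan is to proceed by structural induction on the statement $\cmd$, unfolding in each case the semantic definition $\qwp[\cmd](f)(\sigma) = \E{\nf[\cmd](\sigma)}{f_{st}}$ and matching it against the corresponding clause of Figure~\ref{fig:qwpt}. For the atomic and branching constructs ($\SKIP$, classical and quantum assignments, measurement, conditionals) a single application of the relevant operational rule of Figure~\ref{fig:os} already determines the terminal distribution up to the residual behaviour, so that $\qwp[\cmd](f)(\sigma)$ reduces directly to the right-hand side of the defining equation after invoking the induction hypothesis on the immediate subterms. The measurement case, for instance, splits the terminal distribution along the two outcomes with weights $\proba{0}{i}$ and $\proba{1}{i}$ and post-states $m_{0,i}$, $m_{1,i}$, which is exactly $f[\x := 0\text{; }m_{0,i}] \up{\proba{0}{i}} f[\x := 1\text{; }m_{1,i}]$.

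The sequencing case requires an auxiliary \emph{decomposition lemma} for terminal distributions: since rule (Seq) merely threads $\cmd_2$ through every reduction of $\cmd_1$, the terminal distribution of $\cmd_1;\cmd_2$ is the monadic ``bind'' of $\nf[\cmd_1](\sigma)$ with $\lambda\tau.\,\nf[\cmd_2](\tau)$. Establishing this identity is already a limit argument over the step-indexed approximants $\nf<n>[\cmd]$, after which linearity of expectation yields $\qwp[\cmd_1;\cmd_2](f) = \qwp[\cmd_1](\qwp[\cmd_2](f))$, and the induction hypothesis closes the case.

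The genuinely delicate case is the loop. Writing $\Psi(F) \triangleq \wpt{\cmd}{F} \up{\sem{\bexp}} f$ for the characteristic functional, the induction hypothesis on the body gives $\Psi(F) = \qwp[\cmd](F) \up{\sem{\bexp}} f$, so it suffices to prove $\qwp[\WHILE \bexp \DO \cmd](f) = \lfp\,\Psi$. One inequality is easy: unfolding a single step of the loop through rules (Wh$_0$) and (Wh$_1$), together with the decomposition lemma and the induction hypothesis, shows that $\qwp[\WHILE \bexp \DO \cmd](f)$ is a fixed point of $\Psi$, whence $\lfp\,\Psi \leq \qwp[\WHILE \bexp \DO \cmd](f)$ by minimality. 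The converse is the crux: I would bound the step-indexed pre-expectation $\qwp<n>[\WHILE \bexp \DO \cmd](f)$ above by the Kleene iterate $\Psi^{n}(\lambda\sigma.0) \leq \lfp\,\Psi$ --- using that at most one iteration can complete per handful of reduction steps, so $n$ steps expose no more than $n$ unfoldings --- and then pass to the limit. Since $f$ is non-negative and the terminal distribution is the increasing limit $\sup_n \nf<n>[\cmd]$, monotone convergence gives $\qwp[\WHILE \bexp \DO \cmd](f) = \sup_n \qwp<n>[\WHILE \bexp \DO \cmd](f) \leq \lfp\,\Psi$.

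The main obstacle is exactly this leastness direction: it forces the careful step-indexed bookkeeping relating bounded operational executions to the syntactic Kleene iterates of $\Psi$, and it relies on commuting the supremum over step indices with the expectation, which is legitimate only because expectations range over $\Rext$ and $f \geq 0$. The same limiting mechanism --- expressing $\nf[\cmd]$ as $\sup_n \nf<n>[\cmd]$ and exploiting $\omega$-continuity of $\wpt{\cdot}{\cdot}$ --- underlies both the sequencing decomposition and the loop, and is where essentially all the work of the proof is concentrated.
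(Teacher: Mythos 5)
Your proposal is correct in outline, but it takes a genuinely different route from the paper. The paper does not carry out a direct induction at all: it observes that $\qwp[\cmd](f)$ coincides with the notion of expected value $\mathsf{evalue}_{\cmd}(f)(\sigma)$ of~\cite{AMPPZ:LICS:22}, and that $\wpt{\cmd}{f}$ is the expected value transformer $\qev{\cdot}{\cdot}$ of that work instantiated on the Kegelspitze $\CSd = (\Rext,+_{\mathtt{f}})$ with forgetful addition, so adequacy follows in one line from Corollary~4.8 there. Your argument is instead a self-contained structural induction whose load-bearing parts are (i) the monadic decomposition of $\nf[\cmd_1;\cmd_2]$ as the bind of $\nf[\cmd_1]$ with $\nf[\cmd_2]$, and (ii) the two-inequality treatment of loops, with the leastness direction handled by dominating the step-indexed approximants $\qwp<n>[{\WHILE \bexp \DO \cmd}](f)$ by the Kleene iterates of the characteristic functional and passing to the supremum via monotone convergence (legitimate since $f \geq 0$ and values live in $\Rext$). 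Both of these are exactly the lemmas one must prove if one refuses to outsource the work; they are standard but genuinely fiddly, and the one place to be careful is that the bound $\qwp<n>[{\WHILE \bexp \DO \cmd}](f) \leq \Psi^{n}(\lambda\sigma.0)$ compares a \emph{step-bounded} execution against iterates built from the \emph{full} body semantics, which is sound only because sub-distributions and non-negative expectations make everything monotone. What your route buys is independence from the external framework and an explicit view of where continuity is used; what the paper's route buys is brevity, at the price of inheriting the (non-trivial) adequacy proof of~\cite{AMPPZ:LICS:22} as a black box.
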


Apart from continuity, the quantum expectation transformer satisfies several useful laws, see Figure~\ref{fig:idents}.
The \ref{idents:mono} Law permits us to reason modulo upper-bounds: actual expectations can be always substituted by upper-bounds.
It is in fact an immediate consequence from the \ref{idents:cont} Law,
which is defined for any $\omega$-chain $(f_i)_i$.
The \ref{idents:ui} Law constitutes a generalization of the notion of invariant stemming from Hoare calculus.
It is used to find closed-form upper-bounds $g$ to expectations $f$ of loops.
The pre-conditions state that $g$ should dominate $f$ on states where the loop would immediately exist, and otherwise, should remain invariant under iteration.
It is worth mentioning that this proof rule is not only sound, but also complete, in the sense that
any upper-bound satisfies the two constraints.
The following example illustrates the use of this rule on the running example.

\begin{figure*}[t]
\hrulefill
\begin{alignat*}{2}
 \quad
& \law[idents:cont]{continuity}                &  & \textstyle \wpt{\cmd}{\sup_i f_i} = \sup_i \wpt{\cmd}{f_i} \\
& \law[idents:mono]{monotonicity}              &  & f \leq g \Rightarrow \wpt{\cmd}{f} \leq \wpt{\cmd}{g}                                                                         \\
& \law[idents:ui]{up. invariance}             &~& (\sem{\neg \bexp}{\cdot}f \leq g \wedge \sem{\bexp}{\cdot}\wpt{\cmd}{g} \leq g) \Rightarrow \wpt{\WHILE \bexp \DO \cmd}{f} \leq g 
\end{alignat*}
\hrulefill
\vspace{-2mm}
\caption{Universal laws derivable for the quantum expectation transformer.}
\label{fig:idents}
\end{figure*}

\begin{example}\label{ex:cointossqwpt}
  Following up on Example~\ref{ex:cointoss-sem}, we
  over-approximate $\wpt{\coin}{f}$, for $f(\cst,\qst) = \cst(\iv)$
  the post-expectation measuring the classical variable $\iv$.

  To this end, observe that
  the function $g: \State \to \Rext$ is an upper-invariant (Figure~\ref{fig:idents}) to the while loop $\WHILE\x\DO\cmd$, given a post-expectation $f: \State \to \Rext$. Recall that the loop body $\cmd$ comprises $(\iv <- \iv{+}1; \q <* \oper{H}; \x <- \MEAS{\q})$. To fulfill
  the conditions of the \ref{idents:ui} Law the following inequalities have to be met
  \begin{align}
    \sem{\neg \x} \cdot f &\leq g  &
                                     \sem{\x} \cdot \wpt{\iv <- \iv{+}1;\q <* \oper{H}; \x <- \MEAS{\q}}{g} & \leq g \label{ee2}
                                                                                                              \tpkt
  \end{align}
  By unfolding the definition, we see
  \begin{align*}
    \mparbox{3mm}{\wpt{\iv <- \iv{+}1;\q <* \oper{H}; \x <- \MEAS{\q}}{g}}\\
    &= \wpt{\iv <- \iv{+}1}{\wpt{\q <* \oper{H}}{\wpt{\x <- \MEAS{\q}}{g}}} \\
    &= \wpt{\iv <- \iv{+}1}{\wpt{\q <* \oper{H}}{g[\x \asgn 0; m_{0,1}] +_{\proba{0}{1}} g[\x\asgn 1;m_{1,1}]}}\\
    &= \wpt{\iv <- \iv{+}1}{g[\x \asgn 0; m_{0,1}; \Phi_H] +_{\proba{0}{1}\cdot \Phi_{H}} g[\x\asgn 1;m_{1,1}; \Phi_H]} \\
    &= g[\x \asgn 0; m_{0,1}; \Phi_H;\iv \asgn \iv{+}1] +_{\proba{0}{1}\cdot \Phi_{H}} g[\x\asgn 1;m_{1,1}; \Phi_H;\iv \asgn \iv{+}1] \\
    &= \lambda (\cst,\qst). \sum_{k \in \{0,1\}} \proba{k}{1}(\Phi_H(\qst)) \cdot g(\cst[\x = k,\iv \asgn \iv{+}1],m_{k,1}(\Phi_H(\qst)))
      \tpkt
      \  
             \tpkt
  \end{align*}
  By using the identities computed already in Example~\ref{ex:cointoss-sem},
  we thus obtain
  \begin{equation}
    \label{ee3}
    \wpt{\cmd}{g}(s,\smx{\alpha & \beta \\ \gamma & \delta}) =
    \sum_{k \in \{0,1\}} p_k \cdot g(\cst[\x = k,\iv\asgn\iv{+}1],\qst_{k}) \tkom
  \end{equation}
  where, as in Example~\ref{ex:cointoss-sem},
  $p_{0} = \frac{1 + \beta + \gamma}{2}$,
  $p_{1} = \frac{1 - (\beta + \gamma)}{2}$
  $\qst_0 = \smx{1 & 0 \\ 0 & 0}$ and
  $\qst_1 = \smx{1 & 0 \\ 0 & 0}$

  We claim that
  $  g (\cst,\smx{
    \alpha& \beta \\
    \gamma& \delta}) \triangleq
  \cst(\iv) + \cst(\x) \cdot (2 - (\beta + \gamma))
  $
  is an upper-bound to the pre-expectation of the while loop wrt. to the post expectation $f$.
  To this end, we check \eqref{ee2}.
  The first inequality is trivially satisfied. Concerning the second, notice that by definition,
  \[
    g(\cst[\x = 0, \iv \asgn \iv{+}1],\smx{1 & 0 \\ 0 & 0}) = \cst(\iv) + 1
    \quad\text{and}\quad
    g(\cst[\x = 1, \iv \asgn \iv{+}1],\smx{0 & 0 \\ 0 & 1}) = \cst(\iv) + 3.
  \]
  By \eqref{ee3} we have
  \begin{align*}
    \wpt{\cmd}{g}(\cst,\smx{\alpha& \beta \\ \gamma& \delta})
  & = \frac{1+\beta+\gamma}{2} (\cst(\iv) + 1)
    + \frac{1-(\beta+\gamma)}{2} (\cst(\iv) + 3) \\
  & = (\cst(\iv) + 2) -(\beta+\gamma) = g(\cst, \smx{\alpha& \beta \\ \gamma& \delta}) \tkom
  \end{align*}
  from which now the second constraint follows by case analysis on the value of $\x$.
  Hence $\wpt{\WHILE\x\DO\cmd}{f} \leq g$ and, by monotonicity (Figure~\ref{fig:idents}),
  \begin{align*}
    \wpt{\coin}{f}(\cst,\smx{\alpha& \beta \\ \gamma& \delta})
    & \leq \wpt{\x <- \true ; \iv <- 0 }{g}(\cst,\smx{\alpha& \beta \\ \gamma& \delta}) \\
    &  = g([\x \asgn 1,\iv \asgn 0],\smx{\alpha& \beta \\ \gamma& \delta})
       = 2 - (\beta + \gamma) \tpkt
  \end{align*}
  Note that, in this case, the computed bound is exact.
\end{example}

One question of interest is to find the $\wpt{\cdot}{\cdot}$ of a given statement.
We obtain the following completeness results as a corollary of Theorems~\ref{t:uwp-eq-complete},~\ref{t:uwp-ub-complete}, and~\ref{t:adequacy} on the Clifford+T fragment.
\begin{corollary}\label{c:wpt-complete}
The following completeness results hold:
\begin{itemize}
\item $\{(\cmd,f,g) \in \CmdsCT \times \ExCT^2 \ | \  \forall \sigma, \ \wpt{\cmd}{f}(\sigma)\ = g(\sigma)\}$
is $\Pi^0_2$-complete.
\item $\{(\cmd,f,g) \in \CmdsCT \times \ExCT^{2} \ | \  \forall \sigma, \ \wpt{\cmd}{f}(\sigma)\ \leq g(\sigma)\}$
is $\Pi^0_1$-complete.
\end{itemize}
\end{corollary}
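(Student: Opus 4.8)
The plan is to reduce both bullet points to the completeness results for the universal testing problems $\UWP{=}$ and $\UWP{\leq}$ (Theorems~\ref{t:uwp-eq-complete} and~\ref{t:uwp-ub-complete}) by eliminating the syntactic transformer $\wpt{\cdot}{\cdot}$ in favour of the semantic weakest pre-expectation $\qwp[\cdot]$, using adequacy (Theorem~\ref{t:adequacy}). First I would invoke the Adequacy Theorem, which yields $\qwp[\cmd](f) = \wpt{\cmd}{f}$ for every statement $\cmd$ and every expectation $f : \State \to \Rext$. Since $\ExCT \subseteq (\StateCT \to \RAlg^+) \subseteq (\State \to \Rext)$ and $\CmdsCT \subseteq \Cmds$, this identity specializes in particular to $\cmd \in \CmdsCT$ and $f \in \ExCT$. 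Hence for every state $\sigma$ and every $g \in \ExCT$, the predicate $\wpt{\cmd}{f}(\sigma) = g(\sigma)$ is equivalent to $\qwp[\cmd]\app f\app \sigma = g(\sigma)$, and likewise with $\leq$ replacing $=$.

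Next I would unfold the definitions of the universal testing problems. Recalling that $(\cmd,\sigma,f,g(\sigma)) \in \WP{\mathcal{R}} \iff \qwp[\cmd]\app f\app\sigma \mathrel{\mathcal{R}} g(\sigma)$, and that $\UWP{\mathcal{R}}$ quantifies $\sigma$ over $\StateCT$ (which is exactly the range of the ``$\forall\sigma$'' appearing in the corollary, namely the well-formed states of the Clifford+T fragment), the equivalences above give the set identities
\[
\{(\cmd,f,g)\mid \forall\sigma,\ \wpt{\cmd}{f}(\sigma) = g(\sigma)\} = \UWP{=}, \qquad \{(\cmd,f,g)\mid \forall\sigma,\ \wpt{\cmd}{f}(\sigma) \leq g(\sigma)\} = \UWP{\leq},
\]
understood as subsets of $\CmdsCT \times \ExCT^2$.

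Finally I would observe that completeness in the arithmetical hierarchy is a property of a set: both membership in a class $\textsc{c}$ and $\textsc{c}$-hardness transfer verbatim across an equality of sets, since the defining predicates coincide and any many-one reduction witnessing hardness of $\UWP{=}$ (resp.\ $\UWP{\leq}$) is simultaneously a reduction to the left-hand set. Thus the $\Pi^0_2$-completeness of $\UWP{=}$ (Theorem~\ref{t:uwp-eq-complete}) and the $\Pi^0_1$-completeness of $\UWP{\leq}$ (Theorem~\ref{t:uwp-ub-complete}) are inherited immediately, establishing both claims. There is essentially no obstacle beyond bookkeeping, as all the mathematical content resides in the cited theorems. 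The only point deserving care is to confirm that the quantification domains genuinely agree (both ranging over $\StateCT$) and that adequacy is legitimately instantiated at the restricted computable expectations $f,g \in \ExCT$ rather than at arbitrary $f : \State \to \Rext$; once this is verified the set-level identifications are exact and completeness transfers directly.
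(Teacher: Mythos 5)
Your proposal is correct and matches the paper's own argument exactly: the paper derives this corollary precisely by combining the Adequacy Theorem (Theorem~\ref{t:adequacy}) with the completeness results for $\UWP{=}$ and $\UWP{\leq}$ (Theorems~\ref{t:uwp-eq-complete} and~\ref{t:uwp-ub-complete}), which is the set-identification you carry out. Your added care about the quantification domain ($\StateCT$) and the legitimacy of instantiating adequacy at $\ExCT$ is sound bookkeeping that the paper leaves implicit.
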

The same kind of result can be straightforwardly obtained for each of the quantitative problems defined in previous section. All the corresponding sets are undecidable: they are at best (co-)semi-decidable as illustrated by Figure~\ref{fig:mapping}. This motivates us for restricting the problem a bit further to find a class of functions for which the quantitative problems for $\wp[\cmd]\app f$ can be decided.

%

\newcommand{\CC}[1]{\greenleaf{$#1$}}

\section{Decidability of qet  inference over a real closed field}
\label{s:decidability}

Corollary \ref{c:wpt-complete} illustrates that it is not sufficient to relax the problem of finding the quantum expectation transformer of a given statement to upper bounds in order to make it decidable.
The undecidability of finding the quantum expectation transformer of a given program is due to two other issues:
\begin{itemize}
\item \emph{Issue 1:} the  computation of a fixed-point for $\wpt{\cdot}{\cdot}$ in the case of loops,
\item \emph{Issue 2:} the check for inequalities over the set $\ExCT$, whose first-order theory is not decidable.
\end{itemize}
This section is devoted to overcome these two issues by finding an expressive fragment on which the inference of an upper-bound of the quantum expectation transformer becomes decidable.

\subsection{Symbolic inference}
\newcommand{\TERM}{\mathsf{ETerm}}
\newcommand{\VS}{\mathsf{EVar}}

As a first step towards automated inference, we define a symbolic variant of the quantum expectation transformer in Figure~\ref{fig:approx}.
In the case of conditionals, loops, and measurements, we
 will use fresh variables for expectations; \CC{\text{side conditions}}  will guarantee that these variables
indeed denote (upper-bounds to) the corresponding expectations.
This means that the symbolic version yields correct results only when the expectations assigned  to these variables satisfy all the side conditions.
By solving the generated constraints, viz., by finding an interpretation of ascribed variables that
satisfy the imposed side-conditions, we effectively arrive at an inference procedure overcoming Issue 1.

\begin{figure*}[t]
  \hrulefill
  \[
    \begin{array}{r@{\ }c@{\ }l r}
      \qinfer{\SKIP}{F}                                     & \triangleq & F
                                                                                                                                              \\[1mm]
      \qinfer{\x <- \e}{F}                                  & \triangleq & F[\x := \e]
                                                                                                                                              \\[1mm]
      \qinfer{\cmd_1 ; \cmd_2}{F}                           & \triangleq & \qinfer{\cmd_1}{\qinfer{\cmd_2}{F}}
                                                                                                                                              \\[1mm]
     \qinfer{\begin{array}{@{}l@{}}\IF<\ell> \bexp\\ \THEN \cmd_1\\ \ELSE \cmd_2\end{array}}{F} & \triangleq & X_{\ell},  ~~ \text{with side-cond.}\ \CC{\begin{cases}\bexp \vdash \qinfer{\cmd_1}{F} \leq X_{\ell}   \\  \neg\bexp \vdash \qinfer{\cmd_2}{F} \leq X_{\ell} \end{cases}\hspace{-3mm}}\\[1mm]
        \qinfer{\WHILE<\ell> \bexp \DO \cmd}{F}               & \triangleq & X_{\ell},  ~~ \text{with side-cond.}\  \CC{\begin{cases} \bexp \vdash \qinfer{\cmd}{X_{\ell}} \leq X_{\ell}  \\ \neg\bexp \vdash F \leq X_{\ell}\end{cases}\hspace{-3mm}} \\[1mm]
      \qinfer{\qs <* \ope }{F}                              & \triangleq & F[\phi_{U_{\qs}}]
      \\[1mm]
      \qinfer{\x <- \MEAS<\ell>{\q_i}}{F} & \triangleq & X_{\ell}, ~~ \text{with side-cond.}\
        \CC{\begin{cases}
          {\proba{0}{i} = 0                      \vdash F[\x := 1\text{; }m_{1,i}] \leq X_{\ell}} \\
          {\proba{1}{i} = 0                      \vdash F[\x := 0\text{; }m_{0,i}] \leq X_{\ell}} \\
          {\begin{array}[h]{@{}l@{}}\proba{k}{i} \neq 0 \vdash 
                                                                         F[\x := 0\text{; }m_{0,i}]\\
                                                                         \quad \quad \up{\proba{0}{i}} F[\x := 1\text{; } m_{1,i}] \leq X_{\ell}
                                                                       \end{array} }\\
        \end{cases}\hspace{-3mm}} 
    \end{array}
  \]
  \vspace{-2mm}
  \hrulefill
  \vspace{-2mm}
\caption{Term representations of $\qinfer{\cdot}{\cdot}$ and their corresponding side-conditions.}
\label{fig:approx}
\end{figure*}

To formalize this approach, we associate  a unique label $\ell$ with each loop, conditional, and measurement, occurring in the considered program.
This follows standard practice in (static) program analysis, cf.~\cite{Nielson:2005}.
Notationally, we write $\WHILE<\ell> \bexp \DO \cmd$ / $\IF<\ell> \bexp \THEN \cmd_1 \ELSE \cmd_2$ / $\MEAS<\ell>{\q}$.
Such labels permit us to associate a unique expectation variable $X_\ell$ to each of these constructs.
Given a set of such expectation variables $\VS$, the set of terms $\TERM$, upon which the symbolic
quantum expectation transformer operates, is defined according to the following grammar:
\[
  \begin{array}{llll}
    \TERM & F,G & \rgl & X \mid F[\x := \e] \mid F[\qmap] \mid F \up{p} G,
  \end{array}
\]
where $X$ stand for arbitrary expectations variable in $\VS$. As stressed above, $X$ will be used
to denote certain expectations wrt. loops, conditionals, and measurements.
We have already introduced the notations $F[\x := \e]$ and $F[\qmap]$ to represent updates to the classical and
quantum state, respectively, of the expectation term $F$.
Here, $\qmap$ will always denote a finite composition of superoperators $\phi_U$ and measurements $m_{k,i}$.
By ensuring that normalization of quantum states $m_{k,i}(\qst)$ is never considered in the degenerate case of a zero-probability measurement $\proba{k}{i}(\qst)$,
it will thereby always be possible to write $\qmap$ as $\lambda \qst. \frac{{M}\qst {M}^\dagger}{tr(N\qst N^\dagger)}$,
for some $M \in \matrixspacealg$ in the Clifford+T fragment.
Finally, following the same reasoning, in the barycentric sum $F \up{p} G$ the
probability $p$ is a function in the quantum state, and will always be of general form
$\lambda \qst. \frac{tr(M \qst M^\dagger)}{tr(N \qst N^\dagger)}$, 
for some $M,N \in \matrixspacealg$ in the Clifford+T fragment.
Similar to before, we may use the notation $F[\x := \e;\qmap]$.

The symbolic variation of the expectation transformer can now be defined as
\[
  \qinfer{\cdot}{\cdot} : \Cmds \to \TERM \to \TERM,
\]
generating also a set of side-conditions of the shape $\Gamma \vdash  F \leq G$,
with the intended meaning that $G$ binds $F$ on all input states that satisfy the predicate $\Gamma$.
The full definition of $\mathsf{qinfer}$ is given in Figure~\ref{fig:approx}.
As already hinted, the side conditions ensure that introduced variables $X_{\ell}$ indeed yield an upper-bound
on the corresponding expectation, in the case of conditionals by case-analysis, and in the case of loops
via an application of the upper-invariant law from Figure~\ref{fig:idents}.
In the case of measurements, $m_{k,i}$ and $\proba{k}{i}$ are defined exactly as before.
Here, we single out the two cases where the probability of a measurement, either
$\proba{0}{i}(\qst) = tr(\meas{0}{i}\qst) = tr(\meas{0}{i}\qst\meas{0}{i}^\dagger)$
or $\proba{1}{i}(\qst) = 1 - \proba{0}{i}(\qst)$, is zero. This way, we avoid the case analysis
underlying the definition of $m_{k,i}$ and may,  wlog., assume that
it is indeed of the form $\lambda \qst. \frac{\meas{k}{i}\qst\meas{k}{i}^\dagger}{tr(\meas{k}{i}\qst\meas{k}{i}^\dagger)}$,
with non-zero trace $tr(\meas{k}{i}\qst\meas{k}{i}^\dagger)$.

  \newcommand{\MM}{\mathsf{m}}
  \newcommand{\WW}{\mathsf{w}}
  \newcommand{\XW}{X_\WW}
  \newcommand{\XM}{X_\MM}

\begin{example}\label{ex:qinfer}
  In correspondence to Example~\ref{ex:cointossqwpt}, let us consider the application
  of the inference procedure on the program $\coin$, wrt. to the
  post-expectation $f(\cst,\qst) = \cst(\iv)$.
  We label the loop and measurement with $\MM$ and $\WW$, respectively.

  Let $X$ denote the post-expectation $f$.
  Unfolding the definition, we see
  \begin{align*}
    \qinfer{\coin}{X}
    & = \qinfer{\x <- \true; \iv <- 0; \WHILE<\WW> \x \DO \cmd}{X}  \\
    &= \XW[\x \asgn 1; \iv \asgn 0]
      \tkom
  \end{align*}
  generating the two side-conditions
  $\CC{\x  \vdash \XM[\Phi_{H};\iv \asgn \iv{+}1] \leq \XW}$ and
  $\CC{\neg\x  \vdash X \leq \XW}$.
  The left-hand side of the first constraint is obtained from
  \begin{align*}
    \qinfer{\cmd}{\XW}
    & = \qinfer{\iv <- \iv{+}1}{\qinfer{\q <* H}{\qinfer{\MEAS<\MM>{\q}}{\XW}}} \\
    &  = \XM[\Phi_{H};\iv \asgn \iv{+}1] \tpkt
  \end{align*}
  Note that this expansion generates further constraints, this time on $\XM$ representing the measurement.
  Specifically, it yields the following constraints:
  \begin{align*}
    & \CC{\proba{1-k}{1} = 0 \vdash \XW[\x \asgn k; m_{k,1}] \leq \XM}, \qquad (\text{for  $k \in \{0,1\}$}), \ \\
    & \CC{\proba{0}{1} \neq 0 \neq \proba{1}{1} \vdash \XW[\x \asgn 0; m_{0,1}] \up{\proba{0}{1}} \XW[\x \asgn 1; m_{1,1}] \leq \XM}
    \tpkt
  \end{align*}

Using the analysis from Example~\ref{ex:cointossqwpt}, we interpret $\XW$ and $\XM$ as:
  \begin{align*}
    \alpha(\XW) & \triangleq \lambda (\cst,\smx{\alpha & \beta \\ \gamma & \delta}).\ \cst(\iv) + \cst(\x) (2 - (\beta + \gamma)),\\
    \alpha(\XM) & \triangleq \lambda (\cst,\smx{\alpha & \beta \\ \gamma & \delta}).\ \cst(\iv) + 2 - 2 \alpha
    \tpkt
  \end{align*}
  Furthermore, we interpret the input variable $X$ as $f$, i.e., $\alpha(X) \triangleq \lambda (\cst,\qst).\ \cst(\iv)$.
  Notice how $\alpha(\XW)$ just corresponds to the upper-invariant $g$ derived in Example~\ref{ex:cointossqwpt}.
  Using the assignment, it is now standard to check that it is a solution to the five constraints.
  For instance, considering states $\sigma=(\{\iv \asgn n, \x \asgn x\},\smx{\alpha & \beta \\ \gamma & \delta})$,
  the ultimate constraint amount to the implication
  \[
    \alpha \neq 0 \neq \delta \Rightarrow n +_{\alpha} (n + 2) \leq n + 2 - 2\alpha
    \tkom
  \]
  which trivially holds.
  Finally, recall $\qinfer{\coin}{X} = \XW[\x \asgn 1; \iv \asgn 0]$. This term is interpreted as
  $\lambda (\cst,\smx{\alpha & \beta \\ \gamma & \delta}).\ 2 - (\beta + \gamma)$, yielding the optimal bound
  computed in Example~\ref{ex:cointossqwpt}.
\end{example}

\begin{example}\label{ex:rusfinal}
  Re-consider program $\rus$ depicted in Figure~\ref{fig:rus}.
  We are interested in an upper-bound on the number of $T$-gates, counted by the program variable $\iv$.
  As before, we label the loop and measurement with $\MM$ and $\WW$, respectively.
  Let
  \[
    \cmd =  \overbrace{\q_2 <- \ket{0}; \ldots }^{\cmd_0} ; \x <- \MEAS<\MM>{\q_2} \tkom
  \]
  be the body of the while loop statement  (see Figure~\ref{fig:rus}).
  We proceed with the analysis backwards.
  By the rules of \Cref{fig:approx}
  it holds that $\qinfer{\cmd_0}{F} = F[\Phi; \iv\asgn\iv{+}2]$ for any $F$,
  where $\Phi$ gives the quantum state updates within $\cmd_0$.
  Unfolding definitions, we have
  \[
    \qinfer{\rus{}}{X} = \XW[\x\asgn 0;\iv\asgn 1]
    \text{ with }
    \CC{\x \vdash \XM[\Phi; \iv\asgn \iv{+}2] \leq \XW}
    ,
    \CC{\neg \x \vdash X \leq \XW} \tkom
  \]
  since, by the above observation,
  \[
    \qinfer{\cmd}{\XW} = \qinfer{\cmd_0}{\qinfer{\x <- \MEAS<\MM>{\q_2}}{\XW}}
    =\XM[\Phi;\iv\asgn\iv{+}2] \tkom
  \]
  subject to the following additional  constraints stemming from measurements:
  \begin{align*}
    & \CC{\proba{1-k}{2} = 0 \vdash \XW[\x \asgn k; m_{k,2}] \leq \XM}, \qquad (\text{for  $k \in \{0,1\}$}), \ \\
    & \CC{\proba{0}{2} \neq 0 \neq \proba{1}{2} \vdash \XW[\x \asgn 0; m_{0,2}] \up{\proba{0}{2}} \XW[\x \asgn 1; m_{1,2}] \leq \XM}
    \tpkt
  \end{align*}


  Taking
  $\alpha(X) \triangleq \lambda (\cst,\qst).\ \cst(\iv)$ and solving the constraints
  yields a constant upper bound of $\sfrac{8}{3}$ on the expected number of $T$-gates used by the program.
  This is due to the fact that the probability of the internal measurement is always~$\frac{3}{4}$.
  Note that this bound is tight.
\end{example}

The transformer $\mathsf{qinfer}$ can be linked to $\mathsf{qet}$ of course
only when variables $X_{\ell}$ are interpreted in a way that the side conditions generated by $\mathsf{infer}$
are met. 
To spell this out formally,
let $\alpha : \VS \to \ExCT$ be an \emph{assignment} of expectations to variables in $\VS$,
and let $\sem[\alpha]{F} : \ExCT$ denote the interpretation of $F \in \TERM$
under $\alpha$
defined in the natural way, e.g., $\sem[\alpha]{X_{\ell}} = \alpha(X_{\ell})$, $\sem[\alpha]{F[\chi]} = \sem[\alpha]{F}[\chi]$, etc.

We say that a constraint $\Gamma \vdash F \leq G$ is \emph{valid under $\alpha$}
if $\sem[\alpha]{F}(\sigma)\leq \sem[\alpha]{G}(\sigma)$ holds for all
states $\sigma \in \StateCT$ with $\Gamma(\sigma)$.
An assignment $\alpha$ is a \emph{solution} to a set of constraints $\mathcal{C}$, if
it makes every constraint in $\mathcal{C}$ valid.
Finally, we say $\alpha$ is a solution to $\qinfer{\cmd}{f}$ if it is a solution to the set of constraints
generated by $\qinfer{\cmd}{f}$. We have the following correspondence:

\begin{theorem}\label{th:infer-sound}
  For any $\alpha \in  \VS \to \ExCT$, if $\alpha$ is solution to $\qinfer{\cmd}{F} = G$, then it holds that
  $\wpt{\cmd}{\sem[\alpha]{F}} \leq \sem[\alpha]{G}$.
\end{theorem}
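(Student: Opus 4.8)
The plan is to proceed by structural induction on the statement $\cmd$, proving the claim simultaneously for \emph{every} input term $F \in \TERM$ while keeping $\alpha$ fixed. Quantifying over $F$ is essential: in the loop and sequence cases the inductive hypothesis must be applied to subterms fed with input terms different from the original $F$ (notably with the fresh variable $X_\ell$ itself). I would also record the elementary bookkeeping fact that the constraint set emitted by $\qinfer{\cmd}{F}$ contains the constraints emitted by every recursive call $\qinfer{\cmd'}{F'}$ arising in its computation; consequently a solution $\alpha$ to the whole set restricts to a solution of each sub-call, which licenses every appeal to the inductive hypothesis.

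The atomic and purely structural cases are immediate matchings of the two definitions (Figures~\ref{fig:qwpt} and~\ref{fig:approx}). For $\SKIP$, $\x <- \e$, and $\qs <* \ope$ the two transformers literally agree, so $\wpt{\cmd}{\sem[\alpha]{F}} = \sem[\alpha]{\qinfer{\cmd}{F}}$ with equality, using that the symbolic update $F[\phi_{U_{\qs}}]$ denotes exactly the semantic update by $\Phi_{U_{\qs}}$. For a sequence $\cmd_1;\cmd_2$, set $G' \triangleq \qinfer{\cmd_2}{F}$: the hypothesis on $\cmd_2$ gives $\wpt{\cmd_2}{\sem[\alpha]{F}} \leq \sem[\alpha]{G'}$, the hypothesis on $\cmd_1$ with input $G'$ gives $\wpt{\cmd_1}{\sem[\alpha]{G'}} \leq \sem[\alpha]{\qinfer{\cmd_1}{G'}}$, and chaining these through the \ref{idents:mono} Law yields the claim since $\wpt{\cmd_1;\cmd_2}{\sem[\alpha]{F}} = \wpt{\cmd_1}{\wpt{\cmd_2}{\sem[\alpha]{F}}}$. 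For a conditional, $\sem[\alpha]{\qinfer{\cdots}{F}} = \alpha(X_\ell)$ and I argue pointwise: on a state where $\bexp$ holds the barycentric sum defining the semantic transformer collapses to $\wpt{\cmd_1}{\sem[\alpha]{F}}$, bounded by $\sem[\alpha]{\qinfer{\cmd_1}{F}}$ (inductive hypothesis) and then by $\alpha(X_\ell)$ (first side condition); the $\neg\bexp$ region is symmetric.

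The measurement case is the technically fiddly one, handled by a pointwise case split according to which of $\proba{0}{i}(\sigma)$, $\proba{1}{i}(\sigma)$ vanishes; since $\proba{0}{i}+\proba{1}{i}=1$ these three cases cover $\StateCT$. Here $\wpt{\x <- \MEAS{\q_i}}{\sem[\alpha]{F}}(\sigma)$ is the barycentric sum $\sem[\alpha]{F}[\x := 0\text{; }m_{0,i}] \up{\proba{0}{i}} \sem[\alpha]{F}[\x := 1\text{; }m_{1,i}]$ at $\sigma$. When $\proba{0}{i}(\sigma)=0$ the first summand carries weight zero and drops out, so the degenerate normalization $I/2^n$ hidden inside $m_{0,i}$ never contributes, leaving exactly $\sem[\alpha]{F[\x := 1\text{; }m_{1,i}]}(\sigma)$ with $m_{1,i}$ in its well-defined non-degenerate form; this is bounded by $\alpha(X_\ell)(\sigma)$ via the first side condition. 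The case $\proba{1}{i}(\sigma)=0$ is symmetric, and when both probabilities are non-zero the full sum matches the third side condition verbatim. This is precisely why the symbolic rule singles out the zero-probability cases.

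The loop case is the conceptual crux, as it is where the fixpoint (\emph{Issue 1}) is eliminated, and it rests entirely on the upper-invariance law \ref{idents:ui}. Since $\qinfer{\WHILE<\ell> \bexp \DO \cmd}{F} = X_\ell$, its interpretation is $\alpha(X_\ell)$, so I take the invariant $g \triangleq \alpha(X_\ell)$ and $f \triangleq \sem[\alpha]{F}$. The exit precondition $\sem{\neg\bexp}\cdot f \leq g$ follows from the side condition $\neg\bexp \vdash F \leq X_\ell$ on the $\neg\bexp$ region and from non-negativity of expectations (the factor $\sem{\neg\bexp}$ being $0$) elsewhere. For the invariance precondition $\sem{\bexp}\cdot\wpt{\cmd}{g} \leq g$, the inductive hypothesis on the body $\cmd$ with input term $X_\ell$ gives $\wpt{\cmd}{\alpha(X_\ell)} = \wpt{\cmd}{\sem[\alpha]{X_\ell}} \leq \sem[\alpha]{\qinfer{\cmd}{X_\ell}}$, which the side condition $\bexp \vdash \qinfer{\cmd}{X_\ell} \leq X_\ell$ bounds by $\alpha(X_\ell)$ on the $\bexp$ region (off it, the factor $\sem{\bexp}$ is $0$). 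The law \ref{idents:ui} then delivers $\wpt{\WHILE \bexp \DO \cmd}{\sem[\alpha]{F}} \leq \alpha(X_\ell)$, closing the induction. The only genuine obstacles are stating the inductive hypothesis generally enough in $F$ so that $X_\ell$ can be fed into the body, and the care needed around the degenerate normalization in the measurement case; everything else reduces to matching definitions and invoking the already-established \ref{idents:mono} and \ref{idents:ui} laws.
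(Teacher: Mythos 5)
Your proof is correct and follows exactly the strategy the paper intends (the paper itself does not spell out a proof of Theorem~\ref{th:infer-sound}, but the surrounding text explicitly attributes the conditional case to case-analysis and the loop case to the \ref{idents:ui} law, which is precisely your argument). The two points you flag as needing care --- generalizing the induction hypothesis over the input term $F$ so the body can be fed $X_\ell$, and the zero-probability split in the measurement case --- are indeed the only delicate steps, and you handle both correctly.
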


It is worth mentioning that the above procedure could have been defined without restriction to  the full space $\State \to \Rext$ of expectations. In this case, this symbolic approach is also complete, in the sense that
if $\wpt{\cmd}{f} = g$ then $\qinfer{\cmd}{X} = G$ for some $G$ such that the side-conditions have a solution $\alpha$, with $\alpha(X) = f$ and $\sem[\alpha]{G} = g$.
As our main focus is on decidability, however, we have made the choice to restrict ourself to the Clifford+T setting.

\subsection{Restriction to polynomials over the real closed field $\RAlg$}
\label{s:rcf}

\newcommand{\QINFER}{\textsc{Qinfer}}
We now turn our eyes towards constraint solving, addressing the remaining Issue 2 through
restricting the domain of expectations to \emph{polynomials over algebraic numbers}.
To be more precise, we consider the following problem.
\begin{definition}
  Let $E \subseteq \ExCT$ be a class of expectations.
  The \emph{inference problem} $\QINFER(E) \subseteq \CmdsCT \times E \times (\VS \to E)$ is given by
  \[
    (\cmd,f,\alpha) \in \QINFER(E) \iff \text{$\alpha[X := f]$ is solution to $\qinfer{\cmd}{X}$}
  \]
\end{definition}
In the above definition, $ (\cmd,f,\alpha) \in \QINFER(E)$ is satisfied if the statement $\cmd$ has solution $\alpha[X := f]$ wrt. the expectation $f$. Hence it can be seen as checking whether $f$ is a post-expectation for $\cmd$.
In particular, any solution $\alpha[X := f]$ constitutes an upper bound on the weakest pre-expectation of $f$ (see Theorem~\ref{th:infer-sound}). We will now see that $\QINFER(E)$ is decidable, for $E$ the set of \emph{(real algebraic) polynomial expectations}
of (arbitrary but fixed) degree $d$.
For states $\StateCT$ over $n$ classical variables $\y_{1},\dots,\y_n$ and $m$ qubits,
let $\RAlg^d[\StateCT]$ denotes the class of functions of \emph{polynomial expectations} of the form
\begin{equation}
  \label{eq:polyexp}
  \lambda (\{ \y_i := Y_i\}_{1 \leq i \leq n},(A_{j,k} + \img B_{j,k})_{1 \leq j,k \leq 2^m}).\ P \tkom
\end{equation}
where variables $Y_{i}$ refer  to the classical, where variables $A_{j,k}$ and $B_{j,k}$ refer to the real part and imaginary part, respectively, of each algebraic coefficient in the quantum state. Further, $P\in \RAlg[Y_1,\ldots,Y_n,A_{1,1},\ldots,A_{2^m,2^m},B_{1,1},\ldots,B_{2^m,2^m}]$ is a multivariate polynomial with coefficients in $\RAlg$.
The index $d$ refers to the (total) degree of the underlying polynomial $P$.
For instance,
\[
  \lambda (\{\x := X; \iv := I\},
  \smx{A_{1,1}+\img B_{1,1} & A_{1,2}+\img B_{1,2} \\ A_{2,1}+\img B_{2,1} & A_{2,2}+\img B_{2,2}}).
  \ I + X ( 2 - (A_{1,2} + A_{2,1}))
  \in \RAlg^2[\StateCT]
\]
One important remark here is that we allow for possibly negative polynomials whereas expectations only output positive real algebraic numbers. Consequently, some side conditions are put on the admissible coefficients $A_{j,k}$ and $B_{j,k}$ of the input density matrix to preserve this condition (the matrix is positive, has trace $1$, is hermitian). For example, $\sum_{i=1}^{2^m}A_{i,i}=1$, $\sum_{i=1}^{2^m}B_{i,i}=0$ (trace is $1$) and $\forall i,\ k,\ A_{i,k}=A_{k,i}$ and $B_{i,k}=-B_{k,i}$ (self-adjointness). One can easily check that the expectations defined in Example~\ref{ex:qinfer} are in $\RAlg^d[\StateCT]$, for $d \geq 1$.

The restriction to polynomials is made on purpose as quantifier elimination is decidable
in the theory of real closed fields, a well known result due to Tarski and Seidenberg. 
Recall that the theory of real closed fields is the first-order theory in which the primitive operations are multiplication, addition, the order relation $\leq$, and the constants $0$ and $1$. 
Consequently, the only numbers that can be defined are the real algebraic numbers.
Specifically, we will make use of the following result, quantifying the complexity of the quantifier elimination
decision procedure as a function exponential in number of variables, and double-exponential in the number of
quantifier alternations.

\begin{proposition}[{\cite[Theorem 6]{HRS90}}]\label{prop:QE}
  Let $\mathbf{A}$ be an integral ring over a real closed field $\mathbf{R}$.
  Let $\psi = Q_1 \vec{x}_1. Q_{2}\vec{x}_2. \cdots Q_{l} \vec{x}_l.\ \phi$
  be a formula in prenex-normal form, where $\forall k,\ Q_k \in \{\forall,\exists\}$, $Q_k \neq Q_{k+1}$, 
and $\phi$ is a quantifier-free formula over $i$ variables and $j$ atomic propositions of the shape $P \geq 0$, each $P$ being a polynomial of degree at most $d$ with coefficients in $\mathbf{A}$.
  There exists an algorithm computing a quantifier-free formula equivalent to $\psi$
  in time $O(|\psi|) \cdot (jd)^{i^{O(l)}}$.
\end{proposition}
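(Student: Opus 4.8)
The plan is to follow the block-elimination strategy of effective real algebraic geometry, processing the quantifier blocks from the innermost outward and reducing each block to a single existential elimination via the duality $\forall \vec{x}.\,\phi \equiv \neg\,\exists \vec{x}.\,\neg\phi$ (negation merely toggles the atomic sign conditions $P \geq 0$ into $P < 0$, which the procedure handles uniformly). The core subroutine is a \emph{projection} operation: given a finite family $\mathcal{P}$ of $j$ polynomials of degree $\leq d$ in the innermost block variables $\vec{x}_l$ together with the remaining parameter variables, produce a new finite family $\mathcal{P}'$ in the parameters alone such that the truth value of $\exists \vec{x}_l.\,\phi$ is constant on every cell of the sign partition cut out by $\mathcal{P}'$. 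It then suffices to evaluate $\phi$ at one sample point of each such cell, so the quantifier-free output is a Boolean combination of the sign conditions on $\mathcal{P}'$.

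First I would establish the single-block projection through the \emph{critical point method}. To decide, for fixed parameter values, whether some $\vec{x}_l$ realizes a prescribed sign condition on $\mathcal{P}$, it is enough to exhibit one point in each connected component of each realizable sign set. Such points arise as critical points of a generic linear projection (or distance) function restricted to the real variety, after (i) compactifying and (ii) perturbing by infinitesimals $\varepsilon$ so that the relevant variety becomes bounded and smooth; formally one passes to the real closed extension $\mathbf{R}\langle \varepsilon \rangle$, works with Puiseux series, and removes the infinitesimals at the limit $\varepsilon \to 0$. The coordinates of these critical points are encoded by univariate representations of controlled degree, and eliminating $\vec{x}_l$ from the critical-point equations (via subresultants) yields the parameter family $\mathcal{P}'$.

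The quantitative heart is to bound the growth of the two parameters, the number $j$ of polynomials and the degree $d$, under one projection, and then to iterate. Using the Oleinik--Petrovsky--Thom--Milnor bounds in their refined Basu--Pollack--Roy form, the number of realizable sign conditions of $j$ polynomials of degree $d$ in $s$ variables is $(O(jd))^{s}$; combined with the degree bounds coming from the critical-point system, one shows that eliminating a block of $s = |\vec{x}_l|$ variables sends $(j,d) \mapsto (j',d')$ with $j',d' \leq (jd)^{O(s)}$. Composing this estimate across the $l$ blocks, whose variable counts sum to $i$, the exponent multiplies out to $i^{O(l)}$, giving the size/degree bound $(jd)^{i^{O(l)}}$ on the output formula; the extra factor $O(|\psi|)$ accounts for a linear pass over the quantifier-free matrix, and the arithmetic cost (resultant and sign computations on univariate representations, each polynomial in the current $(j,d)$) is absorbed into the same bound.

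The step I expect to be the main obstacle is the precise degree bookkeeping through the infinitesimal deformations and compactification: one must ensure that the perturbations genuinely restore smoothness and properness without inflating the degree beyond $(jd)^{O(s)}$ per block, and that the limit $\varepsilon \to 0$ is taken effectively while preserving the sign-condition structure on the parameters. Achieving the exponent exactly $i^{O(l)}$, rather than the tower-of-exponentials that a naive Tarski--Seidenberg substitution or a cylindrical-algebraic-decomposition recursion would produce, hinges entirely on keeping the per-block blow-up multiplicative \emph{in the exponent}; this is precisely where the critical point method, which processes a whole block at once, is essential, as opposed to CAD's variable-by-variable projection that costs double-exponentially in the total number of variables $i$.
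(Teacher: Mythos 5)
The paper does not prove this proposition at all: it is imported verbatim as Theorem~6 of the cited reference \cite{HRS90} and used as a black box in the proof of Theorem~\ref{thm:bound}. So there is no in-paper argument to measure your proposal against; what you have written is a reconstruction of the proof of the cited theorem itself.

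As such a reconstruction, your outline is faithful to how this bound is actually obtained in the effective real algebraic geometry literature (Heintz--Roy--Solern\'o, Renegar, and later Basu--Pollack--Roy): block-at-a-time elimination using $\forall\vec{x}.\,\phi \equiv \neg\,\exists\vec{x}.\,\neg\phi$, a projection subroutine based on the critical point method with infinitesimal deformation to restore compactness and smoothness, and sample points in every connected component of every realizable sign condition, encoded by univariate representations extracted via subresultants. Your quantitative bookkeeping is also right at the level of asymptotics: a per-block map $(j,d)\mapsto(jd)^{O(s)}$, composed over blocks whose sizes sum to $i$, yields an exponent $\prod_k O(s_k) = i^{O(l)}$, which is precisely what separates this bound from the doubly-exponential-in-$i$ cost of cylindrical algebraic decomposition. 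You correctly identify where the real technical weight lies---controlling degrees through the deformations and taking the limit $\varepsilon \to 0$ effectively without disturbing the sign-condition structure---but your proposal defers that work rather than carrying it out, so it should be read as an accurate proof plan for the cited theorem rather than a self-contained proof. For the purposes of this paper, nothing beyond the citation is required, and your sketch is consistent with what the citation stands for.
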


As $\RAlg$ constitutes both an integral ring and a real closed field, the above theorem is in particular
applicable taking $\mathbf{A} = \mathbf{R} \triangleq \RAlg$. In the particular case where $\psi$ is a closed formula, the resulting quantifier-free formula is simply a Boolean combination of inequalities over constants from $\RAlg$.
Since we already observed that these can be decided in polynomial time,
the above proposition thus implies that validity of $\psi$ is decidable under the given time bound.

By restricting assignment $\alpha$ to polynomial expectations,
it becomes decidable to check that $\alpha$ is a solution to a given
constraint set $C$. Indeed, under such a polynomial assignment $\alpha$, a constraint $\Gamma \vdash F \leq G$
becomes expressible as a formula in the theory of real closed field $\RAlg$.
By letting $\alpha$ range over polynomial expectations
with undetermined coefficients, we can this way arrive at the main decidability result of this section.

\begin{restatable}{theorem}{thmbound}\label{thm:bound}
  For any degree $d \in \N$, $d\geq 1$, the problem $\QINFER(\RAlg^d[\StateCT])$ is decidable
  in time $2^{2^{dO(n)}}$,   where $n$ is the size of the considered program.
\end{restatable}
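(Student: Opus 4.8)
The plan is to reduce membership in $\QINFER(\RAlg^d[\StateCT])$ to the truth of a finite conjunction of sentences in the first-order theory of the real closed field $\RAlg$, and then to invoke the quantifier-elimination procedure of Proposition~\ref{prop:QE}. This disposes of both obstructions at once: Issue~1 (the loop fixed point) has already been removed by working with the symbolic transformer $\qinfer{\cdot}{\cdot}$, whose side conditions replace the least fixed point by an ascribed variable $X_\ell$ constrained through the up-invariance law; Issue~2 (undecidable inequality checking over $\ExCT$) is exactly what quantifier elimination will settle once expectations are restricted to polynomials. Concretely, I would first run $\qinfer{\cmd}{X}$ as in Figure~\ref{fig:approx}. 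As each construct contributes one term and a bounded number of side conditions, this yields, in time polynomial in the program, a finite constraint set $\mathcal{C} = \{\Gamma_r \vdash F_r \leq G_r\}_r$ of size linear in $\cmd$. Writing $\alpha' = \alpha[X := f]$, Theorem~\ref{th:infer-sound} reduces $(\cmd,f,\alpha)\in\QINFER(\RAlg^d[\StateCT])$ to validity of every constraint under $\alpha'$, so it suffices to decide each.

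Next I would turn one validity statement into an $\RAlg$-sentence. Fixing coordinates on $\StateCT$ as in (\ref{eq:polyexp}), the classical store becomes real unknowns $Y_1,\dots,Y_n$ and the density matrix the unknowns $A_{j,k},B_{j,k}$ ($1\le j,k\le 2^m$). Under $\sem[\alpha']{\cdot}$ every term $F_r,G_r$ unfolds to a rational function of these unknowns: quantum updates $F[\Phi_U]$ act by a fixed linear (hence degree-controlled) substitution, classical updates $F[\x:=\e]$ by a polynomial substitution, and the only denominators arise from the normalizations $m_{k,i}$ and the barycentric weights $tr(M\rho M^\dagger)/tr(N\rho N^\dagger)$. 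Crucially, in $\qinfer$ these appear only under the guard $\proba{k}{i}\neq 0$, so the traces are strictly positive and denominators can be cleared without flipping $\leq$. The validity of $\Gamma_r\vdash F_r\le G_r$ then reads
\[
  \forall Y_1\dots\forall Y_n\ \forall A_{1,1}\dots\forall B_{2^m,2^m}.\ \bigl(\Phi_{\mathrm{dm}}\wedge \Gamma_r\bigr)\Rightarrow P_r\ge 0 ,
\]
where $\Phi_{\mathrm{dm}}$ expresses that $(A_{j,k}+\img B_{j,k})_{j,k}$ is hermitian, has trace $1$ and is positive semidefinite, and $P_r$ is the polynomial surviving after clearing denominators. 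I would quantify the $Y_i$ over $\RAlg$ rather than $\N$: this relaxation is what brings the problem inside RCF, and it is sound for the intended upper bounds (what holds for all reals holds for all naturals); on the quantum side nothing is lost, since algebraic density matrices are dense in the real ones and $P_r$ is continuous. Conjoining the finitely many such sentences gives one closed $\forall$-sentence equivalent to membership in $\QINFER$.

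The complexity then follows from Proposition~\ref{prop:QE}. The number of quantified real variables is $i = n + 2\cdot 4^m = 2^{O(n)}$, dominated by the $4^m$ density-matrix entries; the quantifier alternation depth is constant (a single $\forall$ block for a fixed $\alpha$, plus one $\exists$ block over the $\binom{2^{O(n)}+d}{d}=2^{O(dn)}$ undetermined coefficients in the synthesis reading where $\alpha$ ranges over degree-$d$ polynomials); and the polynomials $P_r$ together with $\Phi_{\mathrm{dm}}$ have degree at most $2^{O(n)}\cdot d$ and number at most $2^{O(n)}$. Substituting $i=2^{O(dn)}$, $l=O(1)$ and these bounds into $O(|\psi|)\cdot (jd)^{i^{O(l)}}$ collapses to a running time of $2^{2^{dO(n)}}$, as claimed.

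I expect the encoding step to be the real obstacle, everything downstream being a mechanical call to quantifier elimination. Two points need care: making $\qinfer$'s divisions rigorous, i.e.\ justifying the denominator clearing through the zero-probability guards while keeping the degree of $P_r$ under control through compositions of substitutions; and encoding positive semidefiniteness of an exponentially large symbolic density matrix by a polynomially bounded family of polynomial inequalities---e.g.\ via the coefficients of its characteristic polynomial rather than its $2^{2^m}$ principal minors. It is precisely the exponential qubit dimension $4^m$ entering the variable count $i$ that turns the single-exponential quantifier-elimination bound into the stated double-exponential one.
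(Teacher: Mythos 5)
Your proposal is correct and follows essentially the same route as the paper's proof: generate the side conditions of $\qinfer{\cmd}{X}$, clear the denominators introduced by measurements and barycentric sums under the non-zero-probability guards, encode the classical store, the density-matrix entries (with the admissibility constraints), and the degree-$d$ polynomial templates as variables of a single $\exists^*\forall^*$ sentence over the real closed field $\RAlg$, and then apply Proposition~\ref{prop:QE} with the same parameter accounting ($l$ constant, $i=2^{dO(n)}$ variables dominated by the $4^m$ matrix entries, $j=2^{O(n)}$ atoms), which yields the stated $2^{2^{dO(n)}}$ bound. Your explicit degree bound after substitution and your caution about encoding positive semidefiniteness succinctly are, if anything, slightly more careful than the paper's own bookkeeping.
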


\subsection{Practical algorithm}\label{sec:pa}
Theorem~\ref{thm:bound} established a computable algorithm on the inference of upper bounds on weakest pre-expectation on
quantitative program properties of any given mixed classical-quantum program. Nevertheless, the complexity
of this algorithm --- double-exponential in the program size --- is forbiddingly high. In order to turn this procedure into a
practical algorithm, we have to tame this inherent complexity.
For this, significant further restrictions on the class of bounding functions are necessary. We propose to proceed as follows.
%
\begin{inparaenum}[(1)]
\item \emph{Bounding functions:} in~\eqref{eq:polyexp} we restricted the class of expectations to polynomials, which in turn yield a
  bound on the weakest pre-expectation. Based on an analysis of concrete examples considered in the literature (e.g.,~\cite{LZY22,AMPPZ:LICS:22}), this can
  be tightened further to degree 2~polynomials.
\item \emph{Approximate solutions}: Theorem~\ref{thm:bound} rests upon (the decidability) of quantifier elimination. Thus the constraints $C$
  induced through the symbolic inference of $\qinfer{\cmd}{X} = G$ ($G,X \in \TERM$) are solved exactly. Over-approximation, however, suffices, if we are only interested in soundness of the inference mechanism.
\end{inparaenum}

The restriction of the class of bounding functions is in essence a question of applicability of the automation, taking into account particular use-cases.
With respect to approximate solutions, we observe that the actual constraints $C$ considered have at most one quantifier alternation and admit a quantifier prenex of the form $\exists^\ast \forall^\ast$, that is, a sequence of existential quantifier follows by
a sequence of universal quantifiers. Roughly speaking the existential quantifiers refer to the inference of coefficients in the
bounding polynomials, while the universal quantifiers refer to program variables. It is well-known that universal quantification in optimization
problems can be turned into existential quantification, like Farka's lemma or generalizations thereof, cf.~\cite{Schrijver:1999,Handelman88}. (E.g.,~\cite{AMS20,LMZ:2022} for instances of this approach for the inference of expected program costs.)

Summarizing, the inference mechanism detailed in Section~\ref{s:rcf} can be over-approximated to generate purely existential
constraints. The latter can be effectively solved via SMT. 
We expect that (full) automation of the inference mechanism can captialize on these ideas. 
Working out the details and in particular implementation of an effective prototype is subject to future work.

\section{Conclusion and future work}
We have studied the complexity and inference of techniques for obtaining qualitative program properties.  One particular property of interest would be the cost of quantum programs, that is average time, average number of gates, mean value of a variable, etc. We show that these problems were undecidable in general by placing them in the arithmetic hierarchy and saw that inference could become decidable on a restricted fragment: quantum gates in Clifford+T and a function space with a decidable theory (polynomials of bounded degree over a real closed field). Further, we sketch how the latter can be transformed into an efficient synthesis method.

Many open questions remain. The studied notion of expectation transformer describes  \emph{local} properties of the quantum state, while it would be interesting to extend this technique to the \emph{global} state so as to study a mixed state in a quantum-only setting (without classical variables and stores). Another question of interest is to what extent a characterization of the quantum class $\ZBQP$, the class of problems computed by quantum programs in polynomial expected runtime, could be obtained using this tool.

%

\bibliography{references}

\newpage

\appendix

\section{Further PARS Preliminaries}\label{ss:a:pars}
Let $\toop{}$ be a PARS.
If $\delta \tomulti{c}^n \epsilon$ then $\delta$ reduces to $\epsilon$ in $n$ steps,
with an expected runtime of $c$.
Notice that since $\toop{}$ is deterministic, so is $\tomulti{c}^n$ in the sense that
$\delta \tomulti{c_1}^{n} \epsilon_1$ and $\delta \tomulti{c_2}^n \epsilon_2$ implies $c_{1} = c_{2}$
and $\epsilon_{1} = \epsilon_{2}$.
It thus justifies to define $\edl<n>[\to](a) \triangleq c$, when $\{1: a\} \tomulti{c}^n \delta$.
Likewise, we denote by $\nf<n>[\to](a) \triangleq \{ \delta(b) : b \mid b \not\to  \wedge\ \exists c \in \Rpos,\ \{1: a\} \tomulti{c}^n \delta \}$
the \emph{distribution of terminal objects} reachable within \emph{up-to} $n$ steps.

\begin{definition}[Terminal distribution]
  For a PARS $\to$ over $A$ and $a \in A$,
  the \emph{terminal distribution} of $a \in A$ is given by
  \[
    \nf[\to](a) \triangleq \sup_{n \in \N} \nf<n>[\to](a).
   \]
\end{definition}
Here, the supremum should be understood wrt. the pointwise ordering on distributions.
This supremum always exists, the probability of reaching a certain terminal object only increases during reductions, i.e., $\nf<n>[\to](a)$ is monotonic in $n$.
The (sub)distribution $\nf[\to](a)$ yields for each terminal object $b \not\to$ the probability that $a$ reduces to $b$ in a finite number of steps.
Specifically, the total weight $|\nf[\to](a)|$ corresponds to the probability that $a$ is terminating.

\begin{definition}[Expected derivation length]
  The \emph{expected derivation length} of a PARS $\to$ over $A$ is given by the function:
  \vspace{-2mm}
\begin{align*}
& \edl[\to] : A \to \Rext\\
& \edl[\to]\triangleq \lambda a. \sup_{n \in \N} (\edl<n>[\to] \app a).
 \end{align*}
\end{definition}

We define the following approximations of $\wp[\to]$:
\begin{align*}
  \wp<n>[\to](f) &\triangleq \lambda a.\E{\nf<n>[\to](a)}{f}
  &
    \wpeq<n>[\to](f)(\sigma) &\triangleq \begin{cases} \wp<n>[\to](f)(\sigma) & \text{if } n= 0, \\  \wp<n>[\to] \app f \app \sigma - \wp<n-1>[\to](f)(\sigma) &\text{if } n>0. \end{cases}
\end{align*}
          By definition, it holds that
          \begin{align*}
            \wp[\to](f)(\sigma)
            = \sup_{n} \wp<n>[\to](f)(\sigma)
            = \sup_{n} \sum_{i=0}^{n}\wpeq<i>[\to](f)(\sigma)
            = \sum_{i=0}^{\infty}\wpeq<i>[\to](f)(\sigma).
          \end{align*}
Notice also that the weight of a terminal distribution---thus the termination probability---can be expressed as a particular weakest pre-expectation.

Let $\underline{1} \triangleq \lambda a. 1$.
\begin{proposition}\label{p:sizeaswp}
$|\nf[\to] \app a |  = \wp[\to] \app \underline{1} \app a$.
\end{proposition}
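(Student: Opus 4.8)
The plan is to prove the identity by a direct unfolding of the relevant definitions; no fixpoint or limit argument is needed, since both sides already refer to the terminal distribution $\nf[\to](a)$ rather than to any intermediate approximant. First I would expand the right-hand side using the definition of the weakest pre-expectation $\wp[\to] \triangleq \lambda f.\lambda a.\ \E{\nf[\to](a)}{f}$, which immediately gives $\wp[\to] \app \underline{1} \app a = \E{\nf[\to](a)}{\underline{1}}$. Then I would apply the definition of the expectation operator, $\E{\delta}{f} = \sum_{b \in \supp(\delta)} \delta(b) f(b)$, specialised to $\delta = \nf[\to](a)$ and $f = \underline{1}$, obtaining $\E{\nf[\to](a)}{\underline{1}} = \sum_{b \in \supp(\nf[\to](a))} \nf[\to](a)(b) \cdot \underline{1}(b)$.

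Since $\underline{1} = \lambda a.\,1$ is the constant function equal to $1$, each factor $\underline{1}(b)$ equals $1$, so the sum collapses to $\sum_{b \in \supp(\nf[\to](a))} \nf[\to](a)(b)$. By the definition of the weight $|\cdot|$ of a (sub)distribution, namely $|\delta| \triangleq \sum_{b \in \supp(\delta)} \delta(b)$, this last sum is exactly $|\nf[\to](a)|$, which is the left-hand side. Chaining these equalities establishes the claim.

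There is no genuine obstacle here: the statement is a purely definitional consequence, and the only point to verify is that the index set over which the expectation sums coincides with the one over which the weight is computed — both are $\supp(\nf[\to](a))$ — so no terms are gained or lost. The proposition's real content is conceptual rather than technical: it records that the termination probability $|\nf[\to](a)|$ is recoverable as a particular instance of the weakest pre-expectation transformer (take $f = \underline{1}$), which is precisely what makes almost-sure termination expressible through $\wp[\to]$, as noted when $\AST$ was characterised earlier.
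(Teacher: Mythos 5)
Your proof is correct, but it takes a more direct route than the paper's. You unfold $\wp[\to] \app \underline{1} \app a = \E{\nf[\to](a)}{\underline{1}}$ and observe that, term by term, this sum is literally the definition of $|\nf[\to](a)|$ — a purely definitional collapse requiring no limit argument. The paper instead works through the finite approximants: it writes $\nf[\to](a) = \sup_{n} \nf<n>[\to](a)$, uses continuity and monotonicity of $|\cdot|$ to exchange the weight with the supremum, applies your definitional identity at each finite stage to get $|\nf<n>[\to](a)| = \E{\nf<n>[\to](a)}{\underline{1}} = \wp<n>[\to] \app \underline{1} \app a$, and concludes via $\sup_n \wp<n>[\to] = \wp[\to]$. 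Both arguments are sound; yours is shorter and makes clear that the proposition is a tautology of the definitions, while the paper's version additionally records the intermediate identity $|\nf[\to](a)| = \sup_n \wp<n>[\to] \app \underline{1} \app a$, which is the form actually exploited later (e.g.\ in the $\Pi^0_2$-membership argument for $\AST$, where the termination probability must be approximated by computable finite-stage quantities). Your closing remark about the index sets coinciding is the right thing to check and is indeed the only point of potential mismatch; there is no gap.
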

\begin{proof}
  As $|\cdot|$ is an increasing and continuous function, the following holds:
  \begin{align*}
    |\nf[\to](a)| &= |\sup_{n \in \N} \nf<n>[\to](a)| = \sup_{n \in \N}|\nf<n>[\to](a)| = \sup_{n \in \N}\E{\nf<n>[\to](a)}{\underline{1}} =\sup_{n \in \N}\wp<n>[\to] \app \underline{1} \app a\\
                  &=\wp[\to] \app \underline{1} \app a.
  \end{align*}
\end{proof}

\section{Proofs from Section~\ref{s:mapping}}

\toctclosed*
\begin{proof}
  Let us first observe that if $\mu \tooqw \delta$ then $\delta \in \FADists(\confCT)$.
  By looking at the rules of Figure~\ref{fig:os}, it is immediate that right-hand sides $\delta$ are finitely supported.
  In addition, quantum states are only updated by rules (Op) and (Meas). In both cases, the resulting density operator is in $\mathfrak{D}(\tilde{\mathcal{H}_{Q}})$, as $\forall i,\  \forall k,\ \meas{k}{i} \in \mathfrak{D}(\tilde{\mathcal{H}_{Q}})$ and algebraic numbers are closed under product and finite sum.
  Following the same reasoning, in each rule of Figure~\ref{fig:os}, the probabilities are also algebraic numbers in $\RAlg^+$.
  Conclusively $\delta \in \FADists(\confCT)$.

  Now consider a step
  $\delta \toomqw{c} \varepsilon$, for $\delta \in  \FADists(\confCT)$. Then, since $\cdot \toomqw{\cdot}\cdot$ is deterministic, wlog $\delta$ is of the form $\{ p_{i} : \mu_{i} \}_{0 \leq i \leq n}$
  and $\varepsilon = \sum_{i=0}^{n} p_{i} \cdot \varepsilon_{i}$ for $\{1 : \mu_{i}\} \toomqw{\cdot} \varepsilon_{i}$, i.e., $\varepsilon_{i}$ is either $\{1 : \mu_{i}\}$ (when $\mu_{i}$ is terminal) or $\mu_{i} \tooqw \varepsilon_{i}$. In any case, by the previous observation the scaled distributions
  $\p_{i} \cdot \varepsilon_{i} \in \FADists(\confCT)$, and so is their finite summation $\varepsilon$.
\end{proof}

In the remainder, we establish the completeness results, summarized in Table~\ref{fig:mapping},
that is we prove Theorem~\ref{thm:mapping}. We start with some preliminaries.

\subsection{Preliminaries}
Let $\Cmdsk$ be the subset of $\Cmds$ that only contains classical statements (i.e., statements that do not contain any quantum feature).
This fragment of our language is clearly still a Turing-complete programming language.
It obviously holds that $\Cmdsk \subsetneq \CmdsCT$.
For the sake of the hardness proofs, we briefly introduce the following well-known problems and the corresponding class for which they are  complete (see~\cite{odifreddi1992classical}):
\begin{align*}
&\hspace{-3mm}\text{(Halting Problem)}\ \mathcal{H}: \ \Sigma_1^0\text{-complete}\\
&(\cmd,\cst) \in \mathcal{H} \subseteq  \Cmdsk \times   \Store \iff  \exists s' \in \Store, \ \exists n \in \mathbb{N}, \ (\cmd, \cst, \cdot) \toomqw{\cdot}^{n} (\downarrow,\cst',\cdot), \\
&\hspace{-3mm}\text{(Univeral Halting Problem)}\ \mathcal{UH}: \ \Pi_2^0\text{-complete}\\
&\cmd \in \mathcal{UH} \subseteq  \Cmdsk \iff  \forall s \in \Store, \ (\cmd, s) \in \mathcal{H}, \\
&\hspace{-3mm}\text{(Cofiniteness Problem)}\ \mathcal{COF}: \ \Sigma_3^0\text{-complete}\\
&\cmd \in \mathcal{COF} \subseteq  \Cmdsk \iff \compl{\{ s \ |\  (\cmd, s) \in \mathcal{H} \}} \text{ is cofinite}.
\end{align*}
It is well-known that if a set $A$ is $\textsc{c}$-complete then its complement, noted co-$A$, is (co-$\textsc{c}$)-complete and that if $A$ is $\textsc{c}$-complete and $A \reduces B$ then $B$ is $\textsc{c}$-hard.
Recall that a set is \emph{cofinite}, if its complement is finite.
In the above definition, we use a $\cdot$ for the quantities that are not relevant. E.g., the quantum state is useless for the halting problem with respect to classical statements.

The following syntactic sugar will be used throughout the paper to represent some statements in $\CmdsCT$:
\begin{align*}
\q^\Qubits <-  \ket{+} &\quad\triangleq\quad \q <-  \ket{0} ; \q  <* H \tkom\\
\x^\Bool <- \coin() &\quad\triangleq\quad \pw{\q' <- \ket{+};\x <- \MEAS{\q'}} \tkom\\
\x^\Var <- \geo() &\quad\triangleq\quad \x' <- \true;  \x = - 1;   \WHILE \x' \DO \{\x' <- \coin(); \x <- \x + 1\} \tkom
\end{align*}
for some fresh variables $\q'$ and $\x'$ of type $\Qubits$ and $\Bool$, respectively.
The two first notations initialize a qubit to the basis vector $\ket{0}$ of the standard basis and to the basis vector $\ket{+}$ of the orthogonal basis, respectively. The penultimate notation simulate a fair coin toss by performing a measurement on $\ket{+}$. The last notation amounts to sampling an integer from a geometric distribution, assigning probability $\frac{1}{2^{n+1}}$ to each integer $n \in \N$.

Given a classical statement $\qmd \in \Cmdsk$ and a store $\cst$, we will also define $\qmd(\cst)$ to be a statement in $\Cmdsk$ simulating $\qmd$ on variables initialized by values of $\cst$.
Given a computable enumeration of stores $(\cst_{i})_{i \in \N}$ and a variable $\nexp^\Var$, $\qmd(\cst_{\nexp})$ will be used as syntactic sugar for the statement
which, on input store $\cstt$, first initializes the classical state to $\cst_{\cstt(\nexp)}$, and then simulates~$\qmd$. The existence of such a program is implied by the Turing-completeness of $\Cmdsk$.

\subsection*{Termination Problems}

\begin{theorem}
  \label{t:ast-complete}\label{t:uast-complete}
  $\AST$ is $\Pi^0_2$-complete and the corresponding universal problem $\UAST$ is $\Pi^0_2$-complete.
\end{theorem}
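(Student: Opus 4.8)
The plan is to establish, for each of the two problems, membership in $\Pi^0_2$ and $\Pi^0_2$-hardness separately.

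\textbf{Membership.} First I would rephrase almost-sure termination in terms of the computable step-indexed approximants $\qwp<n>$. By Proposition~\ref{p:sizeaswp}, applied to the PARS $\tooqw$, the termination probability is a weakest pre-expectation, so $|\nf[\cmd](\sigma)| = \sup_{n} \qwp<n>[\cmd] \app \underline{1} \app \sigma$, where $\underline{1}$ is the constant expectation $\lambda \sigma.\,1$ and $\qwp<n>[\cmd] \app \underline{1} \app \sigma = |\nf<n>[\cmd](\sigma)|$ is the probability of halting within $n$ steps. Since each $\nf<n>[\cmd](\sigma)$ is finitely supported with algebraic weights (Lemma~\ref{l:to-ct-closed}) and $\sigma \in \StateCT$, this quantity is computable and lies in $\RAlg^+$. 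As the approximants are nondecreasing and bounded above by $1$, I would then observe
\[
  (\cmd,\sigma) \in \AST
  \iff \sup_{n} \qwp<n>[\cmd] \app \underline{1} \app \sigma = 1
  \iff \forall k \in \N.\ \exists n \in \N.\ \qwp<n>[\cmd] \app \underline{1} \app \sigma \geq 1 - \tfrac{1}{k+1},
\]
whose matrix is decidable because comparison of algebraic numbers is decidable. This is a $\Pi^0_2$ formula. For $\UAST$ I would prepend a universal quantifier over $\StateCT$; since algebraic numbers are computably enumerable, $\StateCT$ is a c.e.\ domain, and merging this quantifier with the leading $\forall k$ keeps the formula in $\Pi^0_2$.

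\textbf{Hardness.} For both problems I would reduce from the universal halting problem $\mathcal{UH}$, which is $\Pi^0_2$-complete. For $\UAST$ the reduction is the syntactic inclusion $\Cmdsk \hookrightarrow \CmdsCT$: a purely classical statement $\qmd$ is deterministic, so on any state $(\cst,\qst)$ its termination probability is $0$ or $1$ and is independent of $\qst$; hence $(\qmd,(\cst,\qst)) \in \AST$ iff $\qmd$ halts on $\cst$, and therefore $\qmd \in \UAST$ iff $\qmd \in \mathcal{UH}$. For $\AST$, where only a single input is available, I would internalise the universal quantifier over inputs using probabilistic choice: map $\qmd$ to the configuration $(\cmd',\sigma_0)$ with $\cmd' \triangleq (\iv <- \geo(); \qmd(\cst_\iv))$ and $\sigma_0$ any fixed algebraic state. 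Here $\iv <- \geo()$ samples $\iv = n$ with probability $2^{-(n+1)}$ and is itself a.s.\ terminating, while $\qmd(\cst_n)$ simulates $\qmd$ on the $n$-th store. If $\qmd \in \mathcal{UH}$ then every branch halts and $\cmd'$ is a.s.\ terminating; conversely, if $\qmd$ diverges on some $\cst_m$, the branch of positive probability $2^{-(m+1)}$ never halts, so $|\nf[\cmd'](\sigma_0)| \leq 1 - 2^{-(m+1)} < 1$. This yields $\mathcal{UH} \reduces \AST$, and combined with membership gives $\Pi^0_2$-completeness of both problems.

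\textbf{Main obstacle.} The delicate points are, on the membership side, justifying that the approximants $\qwp<n>[\cmd] \app \underline{1} \app \sigma$ are genuinely computable with algebraic values, which rests squarely on Lemma~\ref{l:to-ct-closed} and the decidability of equality over $\RAlg$; and, on the hardness side, the probabilistic encoding for $\AST$. There one must simultaneously verify that the geometric pre-sampling is almost surely terminating, so that it does not by itself spoil termination, yet assigns strictly positive probability to every potential diverging input, so that a single divergent store pulls the overall termination probability strictly below $1$. This is exactly what turns the \emph{universal} halting condition into a \emph{single-input} almost-sure termination condition.
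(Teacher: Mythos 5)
Your proof is correct and, for three of the four sub-claims, takes exactly the paper's route: membership of $\AST$ (and hence of $\UAST$, by absorbing the extra universal quantifier over $\StateCT$ into the leading one) via the monotone, computable, algebraic-valued approximants $|\nf<n>[\cmd](\sigma)|$ guaranteed by Lemma~\ref{l:to-ct-closed} together with decidability of comparisons in $\RAlg$, and $\Pi^0_2$-hardness of $\AST$ by reducing $\mathcal{UH}$ through the program that first samples $\iv$ from a geometric distribution and then simulates $\qmd$ on the $\iv$-th store, so that a single diverging store leaves a residual non-termination probability of at least $2^{-(\iv+1)}$. The only divergence is in the hardness of $\UAST$: the paper reuses the same geometric-sampling program (whose behaviour is independent of the initial state, so the universal quantification over states is harmless), whereas you take the reduction to be the syntactic inclusion $\Cmdsk \hookrightarrow \CmdsCT$, noting that a measurement-free classical program is deterministic, hence almost surely terminating on a state $(\cst,\qst)$ iff it halts on $\cst$, giving $\qmd \in \UAST$ iff $\qmd \in \mathcal{UH}$ directly. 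Both reductions are valid; yours is the more elementary one for $\UAST$ since it needs no probabilistic machinery there, while the paper's choice lets a single construction serve both hardness claims.
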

\begin{proof}
\begin{proofcases}
  \proofcase{$\AST$}
  Observe that
  \begin{align*}
    (\cmd,\sigma) \in \AST
    & \iff |\nf[\cmd](\sigma)|=1 \\
    & \iff \sup_{n \in \N} |\nf<n>[\cmd](\sigma)| = 1 \\
    & \iff \forall \epsilon \in \Rpos,\ \exists n,\  |\nf<n>[\cmd](\sigma)| \geq 1 - \epsilon \\
    & \iff \forall \epsilon \in \RAlg^{+},\ \exists n,\  |\nf<n>[\cmd](\sigma)| \geq 1 - \epsilon
  \end{align*}
  Since $|\nf<n>[\cmd](\sigma)| \in \RAlg^{+}$ is computable, $\Pi^0_2$-membership of $\AST$ follows.
  Concerning $\Pi^0_2$-hardness,
we directly  obtain $\mathcal{UH} \reduces \AST$, via the reduction
  \begin{align*}
    r &: \Cmdsk \to  \confct\\
    r & (\qmd) \triangleq  ((\pw{i<- \geo(); \qmd(\textit{\cst}_{i})}), \sigma^{*}),
  \end{align*}
  for fresh variable $\pw{i}$ of type $\Var$ and arbitrary but fixed initial state $\sigma^{*}$.
  Observe how this program is almost surely terminating, precisely if $\qmd(\cst_{i})$ is terminating for all $i \in \N$, i.e., if $\qmd \in \mathcal{UH}$.

  \proofcase{$\UAST$}
  Since $\cmd \in \UAST \iff \forall \sigma \in \StateCT,\ (\cmd,\sigma) \in \AST$, by the first part of the theorem, $\UAST \in \Pi^0_2$.
  $\Pi^0_2$-hardness is proven identically to that of $\AST$, via the reduction $r(\qmd) \triangleq (\pw{i<- \geo(); \qmd(\textit{\cst}_{i})})$  for some fresh variable $\pw{i}$ of type $\Var$,  and showing that $\mathcal{UH} \reduces \UAST$. 
  \end{proofcases}
\end{proof}

\begin{theorem}
  \label{t:past-complete}\label{t:upast-complete}
  $\PAST$ is $\Sigma^0_2$-complete and the corresponding universal problem $\UPAST$ is $\Pi^0_3$-complete.
  \qed
\end{theorem}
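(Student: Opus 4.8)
The plan is to prove membership and hardness separately for $\PAST$ and $\UPAST$, reusing the step-bounded approximation $\edl<n>$ from Appendix~\ref{ss:a:pars}.

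For $\PAST \in \Sigma^0_2$, I would first note that $\edl<n>[\cmd](\sigma)$ is non-decreasing in $n$ and, by Lemma~\ref{l:to-ct-closed}, is a computable element of $\RAlg^+$ (a finite sum of algebraic step-weights along the unique $n$-step reduction, with decidable comparisons). Since $\edl[\cmd](\sigma) = \sup_n \edl<n>[\cmd](\sigma)$, finiteness of the supremum is equivalent to $\exists B \in \N.\ \forall n \in \N.\ \edl<n>[\cmd](\sigma) \le B$, a $\Sigma^0_2$ predicate. For $\Sigma^0_2$-hardness I would reduce from the finiteness problem $\mathrm{Fin} = \{\qmd \in \Cmdsk : \qmd(\cst_i) \text{ halts for only finitely many } i\}$, which is $\Sigma^0_2$-complete. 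Given $\qmd$, the reduction $r(\qmd) \in \CmdsCT$ runs a loop over stages $n = 1,2,\dots$, continuing from stage $n$ to stage $n{+}1$ with probability $\tfrac12$ (via a fresh-qubit $\coin()$), and at stage $n$ dovetailing a bounded simulation of $\qmd(\cst_i)$ for every $i \le n$ during $n$ steps; whenever a previously-undetected input is found to halt, it performs $2^n$ steps of busy-work. Each halting input is detected exactly once, so the number of stages that pay the $2^n$ penalty equals the number of halting inputs. As stage $n$ is reached with probability $2^{-(n-1)}$ and the (polynomial) simulation overhead contributes a convergent series, the expected derivation length is finite precisely when finitely many stages pay---each paying stage contributing $2^{-(n-1)}\cdot 2^n = 2$---i.e.\ precisely when $\qmd \in \mathrm{Fin}$. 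This gives $\mathrm{Fin} \reduces \PAST$; note $r(\qmd)$ is always a.s.\ terminating, consistent with $\PAST \subsetneq \AST$.

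For $\UPAST \in \Pi^0_3$, the key observation is that $\RAlg$, and hence $\densopalg$ and $\StateCT$, are countable and computably enumerable with decidable validity (algebraic numbers admit Cohn representations, and the density-operator constraints are decidable over the real closed field $\RAlg$). Thus $\forall \sigma \in \StateCT$ is a genuine arithmetical number quantifier, and unfolding $\cmd \in \UPAST \iff \forall \sigma.\ \exists B.\ \forall n.\ \edl<n>[\cmd](\sigma) \le B$ exhibits the $\forall\exists\forall$ shape, so $\UPAST \in \Pi^0_3$.

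For $\Pi^0_3$-hardness the outer quantifier $\forall\sigma$ must now carry real information (unlike the $\UAST$ reduction, where the program discarded its input). I would reduce from the $\Pi^0_3$-complete problem $C = \{\qmd : \forall j,\ \{m : \qmd(\cst_{\langle j,m\rangle}) \text{ halts}\} \text{ is finite}\}$ of programs all of whose sections are finite; its completeness follows by writing an arbitrary $\Pi^0_3$ set as $\forall j.\ S(j)$ with $S \in \Sigma^0_2$, using $\Sigma^0_2$-completeness of $\mathrm{Fin}$ to obtain a computable $g$ with $S(j) \iff g(j) \in \mathrm{Fin}$, and realizing a $\qmd$ whose $j$-th section is the domain of program $g(j)$. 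The reduction $r'(\qmd)$ reads $j$ from a designated classical variable of its input state (ignoring the quantum part, using only fresh qubits for its coins) and then runs the $\mathrm{Fin}$-gadget of the previous paragraph on the inputs $\langle j,0\rangle,\langle j,1\rangle,\dots$; hence $r'(\qmd)$ is positively a.s.\ terminating on a state carrying value $j$ iff section $j$ is finite. As $\forall\sigma$ ranges over all values of $j$, we obtain $r'(\qmd) \in \UPAST \iff \qmd \in C$.

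The main obstacle is the hardness gadget: arranging the expected runtime to diverge exactly when infinitely many halting events are detected while keeping the program almost-surely terminating, which forces the bookkeeping to balance precisely---each detection must cost $2^n$ at a stage reached with probability $2^{-(n-1)}$, every halting input must trigger exactly one payment, and the remaining per-stage simulation cost must stay $o(2^n)$ so that the $\mathrm{Fin}$ case converges. Pinning down the $\Pi^0_3$-completeness of the all-sections-finite problem $C$ is the secondary point requiring care.
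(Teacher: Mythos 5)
Your proposal is correct, and the membership halves coincide with the paper's: both express $\edl[\cmd](\sigma)<\infty$ as $\exists c\,\forall n\,.\,\edl<n>[\cmd](\sigma)<c$ with the inner predicate computable (justified, as you note, by Lemma~\ref{l:to-ct-closed} and decidability of comparisons in $\RAlg$), and both obtain $\Pi^0_3$ membership of $\UPAST$ by prefixing a number quantifier over the countable, effectively enumerable set $\StateCT$. Where you genuinely diverge is on hardness: the paper does not construct any reduction itself but imports the classical ones from Kaminski--Katoen (their Theorems~8 and~10), observing only that the fair coin they need is realized by $\x \mathrel{\texttt{<-}} \coin()$, i.e.\ by measuring $\ket{+}$; those reductions start from $\compl{\mathcal{UH}}$ for $\PAST$ and from a $\compl{\COF}$-type problem for $\UPAST$, using a \emph{sequential} dovetail (run $\qmd$ on input $0$, then $1$, \dots, paying $2^n$ busy-work at probability $2^{-n}$ each time the simulation advances). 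You instead build self-contained reductions from $\mathrm{Fin}$ and from the ``all sections finite'' problem $C$, using a \emph{bounded} dovetail over all inputs with the same cost-amplification trick. Both are sound --- your accounting works out because the total expected busy-work is $2\sum_n k_n = 2\cdot|\{i : \qmd(\cst_i)\text{ halts}\}|$ even when a single stage detects several new halting inputs, and your gadgets are always a.s.\ terminating as required. What the paper's route buys is brevity and reuse of textbook-complete source problems; what yours buys is self-containment at the price of one extra lemma (the $\Pi^0_3$-completeness of $C$, which your uniformization sketch via $\Sigma^0_2$-completeness of $\mathrm{Fin}$ does establish, and which the $\compl{\COF}$ route would let you avoid since the same gadget, started at input $j$ read from the state, already parameterizes the $\UPAST$ reduction). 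The only points needing polish are implementation-level: the interpreter overhead per stage must be kept $O(2^n/n^2)$-summable against the $2^{-(n-1)}$ reach probabilities (you flag this), and in the universal reduction the auxiliary classical variables and qubits must be explicitly re-initialized since the adversarial input state fixes \emph{all} variables --- the paper's $\coin()$ sugar already handles the quantum part.
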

\begin{proof}
  \begin{proofcases}
    
  \proofcase{$\PAST$}
  Membership is established as follows:
  \begin{align*}
    (\cmd,\sigma) \in \PAST
    & \iff  \edl[\cmd](\sigma) < \infty \\
    & \iff \sup_{n \in \N} \edl<n>[\cmd](\sigma) < \infty \\
    & \iff \exists c, \forall n,\  \edl<n>[\cmd](\sigma) < c.
  \end{align*}
  Since $\edl<n>[\cmd](\sigma) < c$ is computable, $\Sigma^0_2$-membership of $\PAST$ follows. Hardness can be shown using the reduction of \cite[Theorem~8]{KaminskiKatoen} and the statement $\x^\Bool <- \coin()$ to simulate a fair coin tossing.

  \proofcase{$\UPAST$}
  Since $\cmd \in \UPAST \iff \forall \sigma \in \StateCT,\ (\cmd,\sigma) \in \PAST$, by Theorem~\ref{t:past-complete}, $\UPAST \in \Pi^0_3$.
  Hardness can be shown using the reduction of \cite[Theorem~10]{KaminskiKatoen} and the statement $\x^\Bool <- \coin()$ to simulate a fair coin tossing.
  
  \end{proofcases}
\end{proof}

\subsection*{Testing problems}

To simplify notations, we extend the notion of approximated weakest pre-expectation (see Section~\ref{ss:a:pars}
transformers from PARSs to quantum programs:
$$\qwp<n>[\cmd](f)(\sigma) \triangleq \wp<n>[\tooqw](f_{st})\cfgc{\cmd}{\sigma} \quad \qwpeq<n>[\cmd](f)(\sigma) \triangleq \wpeq<n>[\tooqw](f_{st})\cfgc{\cmd}{\sigma}.$$
In a similar vein,
we set $\nf[\tooqw]\cfgc{\cmd}{\sigma} \triangleq \nf[\cmd](\sigma)$.



\begin{theorem}
  \label{t:wp-slb-complete}\label{c:wp-ub-complete}
  $\WP{>}$ is $\Sigma^0_1$-complete and $\WP{\leq}$ is $\Pi^0_1$-complete.
  \qed
\end{theorem}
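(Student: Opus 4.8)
The plan is to prove membership and hardness for $\WP{>}$, and then read off the result for $\WP{\leq}$ by complementation.

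First I would show $\WP{>} \in \Sigma^0_1$. The key tool is the step-indexed approximation $\qwp<n>[\cmd]$ from \Cref{ss:a:pars}, together with the identity $\qwp[\cmd] \app f \app \sigma = \sup_{n \in \N} \qwp<n>[\cmd] \app f \app \sigma$. Since for any set of reals $\sup S > a$ holds iff some element of $S$ exceeds $a$, I obtain
\[
  (\cmd,\sigma,f,a) \in \WP{>} \iff \exists n \in \N,\ \qwp<n>[\cmd] \app f \app \sigma > a .
\]
It then remains to argue that the inner predicate is decidable. This is exactly where \Cref{l:to-ct-closed} is used: the $n$-step terminal distribution $\nf<n>[\cmd](\sigma)$ is finitely supported with probabilities in $\RAlg^{+}$, and is effectively computable by unfolding $n$ reduction steps of the (deterministic) PARS $\tooqw$. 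As $f \in \ExCT$ is computable and $\RAlg$ is closed under sum and product, the finite expectation $\qwp<n>[\cmd] \app f \app \sigma = \E{\nf<n>[\cmd](\sigma)}{f}$ is a computable element of $\RAlg^{+}$; finally, $>$ is decidable over $\RAlg$. Hence $\WP{>}$ is an existential quantification over $\N$ of a decidable predicate, i.e.\ it lies in $\Sigma^0_1$.

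Next I would establish $\Sigma^0_1$-hardness by reducing the halting problem $\mathcal{H}$, which is $\Sigma^0_1$-complete. Using the notation $\qmd(\cst) \in \Cmdsk$ for the classical program simulating $\qmd$ on variables initialized from $\cst$, I take the (clearly computable) reduction
\[
  r(\qmd,\cst) \triangleq \bigl(\qmd(\cst),\ \sigma^{*},\ \underline{1},\ 0\bigr),
\]
for an arbitrary fixed well-formed state $\sigma^{*}$ and the constant expectation $\underline{1} = \lambda \sigma.\,1 \in \ExCT$. Since $\qmd(\cst)$ is classical and therefore deterministic, its terminal distribution has weight $1$ when $\qmd(\cst)$ halts and weight $0$ otherwise. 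By the quantum instance of \Cref{p:sizeaswp}, $\qwp[\qmd(\cst)] \app \underline{1} \app \sigma^{*} = |\nf[\qmd(\cst)](\sigma^{*})|$, so this value equals $1 > 0$ precisely when $(\qmd,\cst) \in \mathcal{H}$, and equals $0 \not> 0$ otherwise. Thus $\mathcal{H} \reduces \WP{>}$, which together with membership yields that $\WP{>}$ is $\Sigma^0_1$-complete.

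Finally, $\WP{\leq}$ is exactly the complement of $\WP{>}$ as a set of tuples, because $\qwp[\cmd] \app f \app \sigma \leq a$ is the negation of $\qwp[\cmd] \app f \app \sigma > a$. Since the complement of a $\Sigma^0_1$-complete set is $\Pi^0_1$-complete, I conclude that $\WP{\leq}$ is $\Pi^0_1$-complete. I expect the main obstacle to be the decidability of the inner predicate $\qwp<n>[\cmd] \app f \app \sigma > a$ in the membership argument: it is here that the restriction to the Clifford+T fragment is essential, as \Cref{l:to-ct-closed} keeps every finite-step quantity inside the computable, order-decidable field $\RAlg$. The remaining steps — the $\sup$/$\exists$ interchange, the reduction, and the complementation — are routine.
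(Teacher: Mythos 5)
Your proposal is correct. The membership argument is essentially the paper's: both rewrite $\qwp[\cmd] \app f \app \sigma > a$ as an existential over the step-indexed approximations $\qwp<n>[\cmd]$ and appeal to Lemma~\ref{l:to-ct-closed} plus decidability of comparisons in $\RAlg$ to make the inner predicate recursive (the paper phrases the unfolding as $\exists k\, \exists \delta>0.\ \qwp<k>[\cmd]\app f\app\sigma \geq a+\delta$, yours as $\exists n.\ \qwp<n>[\cmd]\app f\app\sigma > a$; these are equivalent and both work). The complementation step for $\WP{\leq}$ is also identical. Where you genuinely diverge is the hardness reduction: the paper builds the gadget $\cmd_{\qmd} = (\x <- \MEAS{\q};\ \IF \x \THEN \qmd \ELSE \SKIP)$ run on $\ket{+}\!\bra{+}$, so that the termination probability is $\sfrac{1}{2} + \sfrac{1}{2}\cdot[\qmd \text{ halts on } \cst]$, and tests against threshold $\sfrac{1}{2}$; you instead reduce directly via $(\qmd(\cst),\sigma^{*},\underline{1},0)$, exploiting that a deterministic classical program has termination probability exactly $0$ or $1$, so strictness of $>$ at threshold $0$ already separates the two cases. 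Your reduction is simpler and perfectly valid here (note $0 \in \RAlg^{+}$ by the paper's definition of $\mathbb{K}^{+}$ as the non-negative elements, and even a threshold of $\sfrac{1}{2}$ would work with the same program). What the paper's probabilistic gadget buys is reusability: essentially the same construction is recycled for the universal variant $\UWP{>}$ (Theorem~\ref{t:uwp-slb-complete}), where the initial state is universally quantified and one wants the pre-expectation pinned to $\sfrac{1}{2} + \sfrac{1}{2}\cdot\qwp[\qmd]\app\underline{1}$ uniformly in $\sigma$; your degenerate reduction would also adapt, but the paper keeps one uniform template. The only cosmetic caveat is that your ``arbitrary fixed $\sigma^{*}$'' must be well-formed for $\qmd(\cst)$, hence computed from the input; the paper commits the same harmless abuse elsewhere.
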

\begin{proof}
  \begin{proofcases}
    \proofcase{$\WP{>}$}
  As $\qwp<n>[\cmd](f)$ is monotonically increasing, the following equivalences hold: 
  \begin{align*}
    (\cmd,\sigma,f,a) \in \WP{>}
    & \iff \qwp[\cmd](f)(\sigma) > a \\
    & \iff \sup_{n \in \mathbb{N}} \qwp<n>[\cmd](f)(\sigma) > a \\
    & \iff \exists k \in \mathbb{N}, \exists \delta > 0, \ \qwp<k>[\cmd](f)(\sigma) \geq a + \delta
  \end{align*}
 Consequently,   $\WP{>} \in \Sigma^0_1$.
  For $\Sigma^0_1$-hardness, we have $\mathcal{H} \reduces \WP{>}$
  via the reduction
  \begin{align*}
    r &: \Cmdsk \times \Store \to  \confct \times \ExCT \times \RAlg^+\\
    r &(\qmd,\cst) \triangleq  (\cmd_\qmd,\cst,\ketbra{+}, \underline{1},\sfrac{1}{2}) \tkom
\end{align*}
where
$\cmd_{\qmd} \triangleq (\pw{ x <- \MEAS{\q}; \IF x \THEN \qmd \ELSE \SKIP})$, for some fresh variable $\x$ of type $\Bool$.
Note that
  \[
    \qwp[\cmd_{\qmd}](\underline{1})(\cst,\ketbra{+}) = \frac{\qwp[\qmd](\underline{1})(\cst[\x := 1],\ketbra{1})}{2} + \frac{1}{2} =  \frac{\qwp[\qmd](\underline{1})(\cst,\ketbra{+})}{2} + \frac{1}{2} \tpkt
  \]
  After performing the measure, the program executes $\qmd$ with probability $\sfrac{1}{2}$. Note that $\ketbra{1}$ is the quantum state obtained from $\ketbra{+}$ just before executing $\qmd$. The second equality above holds as $\qmd \in \Cmdsk$ and $\x$ is fresh with respect to $\qmd$. Finally, $\qwp[\qmd](\cst,\ketbra{+})(\underline{1}) = 1$ iff $\qmd$ is terminating on $\cst$, as programs in $ \Cmdsk$ do not depend on the quantum state. Thus, it holds that:
  \[
    (\qmd,\cst) \in \mathcal{H}
    \iff \qwp[\cmd_{\qmd}](\underline{1})(\cst, \ketbra{+}) > \sfrac{1}{2}
    \iff r(\qmd,\cst) \in \WP{>}.
  \]

  \proofcase{$\WP{\leq}$}
  Since $\WP{\leq} =$ co-$\WP{>}$, the claim follows by the first part of the theorem.
  \end{proofcases}

\end{proof}

\begin{theorem}
  \label{t:wp-sub-complete}\label{c:wp-lb-complete}
  $\WP{<}$ is $\Sigma^0_2$-complete and $\WP{\geq}$ is $\Pi^0_2$-complete.
\end{theorem}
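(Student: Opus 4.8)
The plan is to mirror the treatment of $\WP{>}$ and $\WP{\leq}$ in Theorem~\ref{t:wp-slb-complete}, exploiting that $\qwp[\cmd] \app f \app \sigma$ is the supremum of the \emph{computable}, monotonically increasing approximations $\qwp<n>[\cmd] \app f \app \sigma$. For membership I would first rewrite, for any $(\cmd,\sigma,f,a)$,
\[
  \qwp[\cmd] \app f \app \sigma < a
  \iff \sup_{n \in \N} \qwp<n>[\cmd] \app f \app \sigma < a
  \iff \exists \delta \in \RAlg^{+}\setminus\{0\},\ \forall n \in \N,\ \qwp<n>[\cmd] \app f \app \sigma \leq a - \delta,
\]
where the second equivalence expresses that the supremum lies strictly below $a$ exactly when a single positive gap $\delta$ separates every approximant from $a$. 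By Lemma~\ref{l:to-ct-closed} the distribution $\nf<n>[\cmd](\sigma)$ is finite with algebraic weights and computable, hence $\qwp<n>[\cmd] \app f \app \sigma \in \RAlg^{+}$ is computable; since $\leq$ over $\RAlg$ is decidable, the quantifier-free matrix is decidable. As the leading quantifier block is $\exists\forall$, this places $\WP{<}$ in $\Sigma^0_2$. Dually, since for every value the conditions $\cdot < a$ and $\cdot \geq a$ are mutually exclusive and exhaustive, we have $\WP{\geq} = {}$co-$\WP{<}$, so $\WP{\geq} \in \Pi^0_2$.

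For hardness the key observation is that a termination probability never exceeds $1$: by Proposition~\ref{p:sizeaswp}, $\qwp[\cmd] \app \underline{1} \app \sigma = |\nf[\cmd](\sigma)| \leq 1$, so $\qwp[\cmd] \app \underline{1} \app \sigma \geq 1$ collapses to $\qwp[\cmd] \app \underline{1} \app \sigma = 1$, i.e.\ to $(\cmd,\sigma) \in \AST$. I would therefore exhibit the computable reduction $r(\cmd,\sigma) \triangleq (\cmd,\sigma,\underline{1},1)$ witnessing $\AST \reduces \WP{\geq}$, via
\[
  (\cmd,\sigma) \in \AST
  \iff \qwp[\cmd] \app \underline{1} \app \sigma = 1
  \iff \qwp[\cmd] \app \underline{1} \app \sigma \geq 1
  \iff r(\cmd,\sigma) \in \WP{\geq}.
\]
Since $\AST$ is $\Pi^0_2$-complete (Theorem~\ref{t:ast-complete}), $\WP{\geq}$ is $\Pi^0_2$-hard, and together with membership it is $\Pi^0_2$-complete. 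Finally, using that the complement of a $\textsc{c}$-complete set is (co-$\textsc{c}$)-complete, from $\WP{<} = {}$co-$\WP{\geq}$ I conclude that $\WP{<}$ is $\Sigma^0_2$-complete.

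The step I expect to require the most care is the hardness direction for $\WP{<}$: there is no single reduction as transparent as the $\AST$-to-$\WP{\geq}$ one, so the cleanest route is to establish $\Pi^0_2$-completeness of $\WP{\geq}$ first and transfer to $\WP{<}$ by complementation (alternatively, one could reduce a directly $\Sigma^0_2$-complete problem such as $\PAST$ from Theorem~\ref{t:past-complete}, or co-$\mathcal{UH}$). A secondary point demanding attention is justifying that the matrix predicate in the membership argument is genuinely decidable, which hinges entirely on Lemma~\ref{l:to-ct-closed} guaranteeing that each finite-step approximation $\nf<n>[\cmd](\sigma)$ stays within the finitely supported, algebraic-weight distributions over $\confct$.
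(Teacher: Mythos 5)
Your proof is correct, but the hardness direction takes a genuinely different route from the paper. The membership argument for $\WP{<}$ (an $\exists\forall$ formula over the computable, monotonically increasing approximants $\qwp<n>[\cmd] \app f \app \sigma$, with decidability of the matrix resting on Lemma~\ref{l:to-ct-closed}) is essentially identical to the paper's. For hardness, however, the paper works in the opposite direction: it proves $\Sigma^0_2$-hardness of $\WP{<}$ \emph{directly} by reducing $\compl{\mathcal{UH}}$, via a program $\pw{i <- \geo(); k <- \geo()}$ together with a tailor-made expectation $f_{\qmd}(\cst,\qst) = \qwpeq<\cst(\kv)>[\qmd](\underline{1})(\sigma_{\cst(\iv)}) \cdot 2^{\cst(\kv)+1}$ whose pre-expectation telescopes to $\sum_{i,k} \qwpeq<k>[\qmd](\underline{1})(\sigma_i)/2^{i+1}$, equal to $1$ exactly when $\qmd$ halts universally; it then obtains $\WP{\geq}$ for free as $\text{co-}\WP{<}$. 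You instead prove $\Pi^0_2$-hardness of $\WP{\geq}$ first, via the one-line reduction $\AST \reduces \WP{\geq}$ with $r(\cmd,\sigma) = (\cmd,\sigma,\underline{1},1)$ (exploiting Proposition~\ref{p:sizeaswp} and the fact that $\geq 1$ collapses to $= 1$ for termination probabilities — the same reduction the paper uses for $\WP{=}$), and then transfer to $\WP{<}$ by complementation of a complete set. Both arguments are sound; yours is more economical in that it reuses the already-established $\Pi^0_2$-completeness of $\AST$ and avoids constructing $f_{\qmd}$, whereas the paper's explicit construction pays for itself later, since the same $\cmd$ and $f_\qmd$ are recycled verbatim in the proof that $\UWP{=}$ is $\Pi^0_2$-hard (Theorem~\ref{t:uwp-eq-complete}).
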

\begin{proof}
\begin{proofcases}
  \proofcase{$\WP{<}$}
    
  As $\qwp<n>[\cmd](f)$ is monotonically increasing, the following equivalences hold: 
  \begin{align*}
    (\cmd,\sigma,f,a) \in \WP{<}
    & \iff \qwp[\cmd](f)(\sigma) < a \\
    & \iff \sup_{n \in \mathbb{N}} \qwp<n>[\cmd](f)(\sigma) < a \\
    & \iff \exists b \in \RAlg^+-\{0\},\ \forall k \in \mathbb{N}, < \qwp<k>[\cmd](f)(\sigma) < a - b
  \end{align*}
 Consequently,   $\WP{<} \in \Sigma^0_2$.  To prove completeness, we show that co-$\mathcal{UH} \reduces \WP{<}$,
  via the reduction
  \begin{align*}
    r &: \Cmdsk  \to  \confct \times \ExCT \times \RAlg^+\\
    r & (\qmd) \triangleq  (\cmd,\sigma^*,f_{\qmd}, 1),
\end{align*}
  where $\cmd \triangleq (\pw{i <- \geo();k <- \geo()})$,   
  $\sigma^*$ is an arbitrary but fixed initial state and
  \[
    f_{\qmd}(\cst,\qst) \triangleq \qwpeq<\cst(\mathtt{k})>[\qmd](\underline{1})(\sigma_{\cst(\iv)}) \cdot 2^{\cst(\kv)+1},
  \]
for $(\sigma_{i})_{i\in \mathbb{N}}$ a sequence of states whose stores provide an enumeration of  classical states in $\Store$ for $\qmd$.
Being computed as a finite sum, $f_{\qmd}$ is a computable expectation in $\ExCT$. 
  Notice that after execution, the values held by $\iv$ and $\kv$ are independently distributed according to a geometric distribution,
  conclusively
  \[
    \qwp[\cmd](f_{\qmd})(\sigma^*)
    = \sum_{i=0}^{\infty} \frac{1}{2^{i+1}} \sum_{k=0}^{\infty} \frac{1}{2^{k+1}}\qwpeq<k>[\qmd](\underline{1})(\sigma_{i}) \cdot 2^{k+1}
    = \sum_{i=0}^{\infty}\sum_{k=0}^{\infty}\frac{\qwpeq<k>[\qmd](\underline{1})(\sigma_{i})}{2^{i+1}}
  \]
  Notice that $\qwpeq<k>[\qmd](\underline{1})(\sigma_{i}) = 1$ iff $\qmd$ terminates with precisely $k$ steps on input $\sigma_{i}$,
  and $0$ otherwise. Thus, if $\qmd$ is terminating on $\sigma_{i}$,
  then $\sum_{k=0}^{\infty}\frac{\qwpeq<k>[\qmd](\underline{1})(\sigma_{i})}{2^{i+1}} = \frac{1}{2^{i+1}}$,
  and $0$ otherwise. Using the identity $\sum_{i=0}^{\infty} \frac{1}{2^{i+1}} = 1$
  from this it is not difficult to establish that
  \[
    \qmd \not\in \mathcal{UH} \iff \qwp[\cmd](f_{\qmd})(\sigma^*) < 1
    \iff r(\qmd) \in \WP{<}.
  \]

  \proofcase{$\WP{\geq}$}
  Since $\WP{\geq} =$ co-$\WP{<}$, the claim follows by the first part of the theorem.
\end{proofcases}
\end{proof}

\begin{theorem}\label{t:wp-eq-complete}
  $\WP{=}$ is $\Pi^0_2$-complete.
\end{theorem}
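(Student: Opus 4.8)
The plan is to prove membership and hardness separately, in both cases reducing to facts already established for the neighbouring testing problems.

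For membership, I would write equality as the conjunction of the two one-sided tests,
\[
  (\cmd,\sigma,f,a)\in\WP{=}\iff (\cmd,\sigma,f,a)\in\WP{\leq}\ \wedge\ (\cmd,\sigma,f,a)\in\WP{\geq}.
\]
By Theorem~\ref{c:wp-ub-complete} we have $\WP{\leq}\in\Pi^0_1\subseteq\Pi^0_2$, and by Theorem~\ref{c:wp-lb-complete} we have $\WP{\geq}\in\Pi^0_2$. Since $\Pi^0_2$ is closed under finite intersection, $\WP{=}\in\Pi^0_2$ is immediate; concretely one conjoins the $\Pi^0_1$ formula $\forall k.\ \qwp<k>[\cmd](f)(\sigma)\leq a$ with the $\Pi^0_2$ formula witnessing $\WP{\geq}$, and the result is still in prenex $\Pi^0_2$ form after merging the quantifier blocks.

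For hardness I would reduce the universal halting problem $\mathcal{UH}$, which is $\Pi^0_2$-complete, to $\WP{=}$. The key observation is that the reduction constructed in the proof of Theorem~\ref{t:wp-sub-complete} for $\WP{<}$ serves here unchanged. Recall that it maps a classical program $\qmd$ to the tuple $(\cmd,\sigma^*,f_{\qmd},1)$, where $\cmd$ samples two independent geometric variables and $f_{\qmd}\in\ExCT$ is the computable expectation for which
\[
  \qwp[\cmd](f_{\qmd})(\sigma^*)=\sum_{i=0}^{\infty}\frac{1}{2^{i+1}}\sum_{k=0}^{\infty}\qwpeq<k>[\qmd](\underline{1})(\sigma_i).
\]
The inner sum equals $1$ when $\qmd$ halts on $\sigma_i$ and $0$ otherwise, so the whole expression equals the sub-sum of $\sum_{i}2^{-(i+1)}$ restricted to those $i$ on which $\qmd$ halts. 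As $\sum_{i=0}^{\infty}2^{-(i+1)}=1$, this pre-expectation is always bounded above by $1$, and it attains the value $1$ exactly when $\qmd$ halts on every $\sigma_i$, i.e.\ precisely when $\qmd\in\mathcal{UH}$. Hence the same $r$ gives $\qmd\in\mathcal{UH}\iff\qwp[\cmd](f_{\qmd})(\sigma^*)=1\iff r(\qmd)\in\WP{=}$, so $\mathcal{UH}\reduces\WP{=}$ and $\WP{=}$ is $\Pi^0_2$-hard.

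The proof involves no genuinely new calculation, since both halves recycle the earlier analyses. The one point that must be checked, and which is the crux of adapting the $\WP{<}$ reduction, is that the constructed pre-expectation never exceeds $1$: it is exactly this upper bound that converts the strict-inequality characterization of $\mathcal{UH}$ used for $\WP{<}$ into the exact-equality characterization needed for $\WP{=}$. Beyond that one only reuses that $f_{\qmd}\in\ExCT$ is computable (already verified) and that $\reduces$ is transitive, so that $\mathcal{UH}$-hardness transfers from the $\Pi^0_2$-completeness of $\mathcal{UH}$.
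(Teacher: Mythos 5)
Your proof is correct. The membership half is exactly the paper's argument: $\WP{=}=\WP{\leq}\cap\WP{\geq}$, with the two components in $\Pi^0_1$ and $\Pi^0_2$ by Theorems~\ref{c:wp-ub-complete} and~\ref{c:wp-lb-complete}.

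For hardness you take a genuinely different route. The paper does not reduce from $\mathcal{UH}$ at all: it observes that, by Proposition~\ref{p:sizeaswp}, $|\nf[\cmd](\sigma)|=\qwp[\cmd](\underline{1})(\sigma)$, so $(\cmd,\sigma)\in\AST$ iff $\qwp[\cmd](\underline{1})(\sigma)=1$, and the trivial reduction $r(\cmd,\sigma)\triangleq(\cmd,\sigma,\underline{1},1)$ gives $\AST\reduces\WP{=}$; hardness then follows from the $\Pi^0_2$-completeness of $\AST$ (Theorem~\ref{t:ast-complete}). You instead reuse the geometric-sampling construction from the $\WP{<}$ proof (Theorem~\ref{t:wp-sub-complete}) to reduce $\mathcal{UH}$ directly, and your one nontrivial verification --- that the pre-expectation is a sub-sum of $\sum_i 2^{-(i+1)}=1$, hence always $\leq 1$ and equal to $1$ exactly when $\qmd$ halts on every $\sigma_i$ --- is sound; indeed the paper itself uses precisely this observation, with the same reduction, in the hardness proof of Theorem~\ref{t:uwp-eq-complete} for the universal variant $\UWP{=}$. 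What the paper's route buys is economy: the reduction is an identity-like map with no expectation engineering, and it makes transparent that $\AST$ is literally a special case of $\WP{=}$. What your route buys is uniformity: the same reduction witnesses hardness of both $\WP{=}$ and $\UWP{=}$ (since the pre-expectation is independent of the initial state), and it avoids any dependence on Theorem~\ref{t:ast-complete}. Both are valid $\Pi^0_2$-hardness proofs.
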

\begin{proof}
  Membership follows from the second parts of Theorems~\ref{t:wp-slb-complete} and~\ref{t:wp-sub-complete}, respectively,
  since $\WP{=} = \WP{\leq} \cap \WP{\geq}$.
  By definition $(\cmd,\sigma) \in \AST$ iff $|\nf[\cmd](\sigma)|=1$. By Proposition~\ref{p:sizeaswp}, it holds that $|\nf[\cmd](\sigma)|  = \qwp[\cmd](\underline{1})(\sigma) $. Consequently, $(\cmd,\sigma) \in \AST$ iff $\qwp[\cmd](\underline{1})(\sigma) =1$ and the reduction $r(\cmd,\sigma) \triangleq (\cmd,\sigma,\underline{1},1)$
  shows $\AST \reduces \WP{=}$, hence $\WP{=} \in \Pi^0_2$ by Theorem~\ref{t:ast-complete}.
\end{proof}

\begin{theorem}
  \label{t:uwp-slb-complete}\label{t:uwp-lb-complete}
  $\UWP{>}$ is $\Pi^0_2$-complete and $\UWP{\geq}$ is $\Pi^0_2$-complete.
\end{theorem}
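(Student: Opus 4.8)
The plan is to treat membership and hardness separately, in each case leaning on the analysis already carried out for the non-universal problems $\WP{>}$ (Theorem~\ref{t:wp-slb-complete}) and $\WP{\geq}$ (Theorem~\ref{t:wp-sub-complete}): prefixing a single universal quantifier over states keeps us inside $\Pi^0_2$ for both.

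For membership I would start from $\qwp[\cmd](f)(\sigma)=\sup_{n}\qwp<n>[\cmd](f)(\sigma)$, with each approximant $\qwp<n>[\cmd](f)(\sigma)\in\RAlg^+$ computable (the crucial consequence of Lemma~\ref{l:to-ct-closed} used throughout Section~\ref{s:mapping}). Reusing the characterisations from the proofs of Theorems~\ref{t:wp-slb-complete} and~\ref{t:wp-sub-complete}, I obtain
\begin{align*}
  (\cmd,f,g)\in\UWP{>} &\iff \forall\sigma\in\StateCT,\ \exists n\in\N,\ \exists\delta\in\RAlg^{+}\setminus\{0\},\ \qwp<n>[\cmd](f)(\sigma)\geq g(\sigma)+\delta,\\
  (\cmd,f,g)\in\UWP{\geq} &\iff \forall\sigma\in\StateCT,\ \forall\epsilon\in\RAlg^{+}\setminus\{0\},\ \exists n\in\N,\ \qwp<n>[\cmd](f)(\sigma)> g(\sigma)-\epsilon.
\end{align*}
For the fixed variable-and-qubit signature of $\cmd$, the set $\StateCT$ of algebraic states is computably enumerable and $g(\sigma)\in\RAlg^+$ is computable, so (comparisons over $\RAlg$ being decidable) the predicates under the quantifiers are decidable. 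The first formula is then $\forall\exists$ over a decidable predicate, and the two leading universal quantifiers of the second collapse into one; hence $\UWP{>},\UWP{\geq}\in\Pi^0_2$.

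For $\Pi^0_2$-hardness of $\UWP{\geq}$ I would reduce from $\UAST$ (which is $\Pi^0_2$-complete, Theorem~\ref{t:uast-complete}) through $r(\cmd)\triangleq(\cmd,\underline{1},\underline{1})$. By Proposition~\ref{p:sizeaswp}, $\qwp[\cmd](\underline{1})(\sigma)=|\nf[\cmd](\sigma)|\leq 1$, so $\cmd\in\UAST\iff\forall\sigma,\ \qwp[\cmd](\underline{1})(\sigma)=1\iff\forall\sigma,\ \qwp[\cmd](\underline{1})(\sigma)\geq1\iff r(\cmd)\in\UWP{\geq}$; as $r$ is plainly computable this yields hardness.

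The delicate case, and the one I expect to be the real obstacle, is $\UWP{>}$: the previous trick fails, since no program satisfies $\qwp[\cmd](\underline{1})>1$, so a non-strict lower bound cannot be turned uniformly into a strict one. The escape is to exploit determinism of classical programs. For $\qmd\in\Cmdsk$ the reduction $\tooqw$ is deterministic and ignores the quantum component, so $\qwp[\qmd](\underline{1})(\sigma)=|\nf[\qmd](\sigma)|\in\{0,1\}$, equal to $1$ exactly when $\qmd$ halts on the classical store of $\sigma$. Consequently $>0$ characterises halting exactly, and reducing from $\mathcal{UH}$ ($\Pi^0_2$-complete) via $r(\qmd)\triangleq(\qmd,\underline{1},\underline{0})$, where $\underline{0}\triangleq\lambda\sigma.\,0$, gives $\qmd\in\mathcal{UH}\iff\forall\sigma,\ \qwp[\qmd](\underline{1})(\sigma)>0\iff r(\qmd)\in\UWP{>}$. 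Since $\Cmdsk\subseteq\CmdsCT$ and $\underline{1},\underline{0}\in\ExCT$, the map $r$ is computable, completing the $\Pi^0_2$-hardness proof.
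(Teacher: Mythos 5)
Your proposal is correct, and on three of the four components it coincides with the paper's argument: membership of both problems is obtained, as in the paper, by prefixing a universal quantifier over (an effective enumeration of) $\StateCT$ to the $\Sigma^0_1$- resp.\ $\Pi^0_2$-characterizations of $\WP{>}$ and $\WP{\geq}$, and your hardness proof for $\UWP{\geq}$ is literally the paper's reduction $\UAST \reduces \UWP{\geq}$ via $r(\qmd)=(\qmd,\underline{1},\underline{1})$. Where you genuinely diverge is the $\Pi^0_2$-hardness of $\UWP{>}$. The paper also reduces from $\mathcal{UH}$, but it first wraps $\qmd$ in a coin-toss gadget $\cmd_\qmd \triangleq (\q <- \ket{+};\ \x <- \MEAS{\q};\ \IF \x \THEN \qmd \ELSE \SKIP)$, so that $\qwp[{\cmd_\qmd}](\underline{1})(\sigma) = \tfrac{1}{2}\qwp[\qmd](\underline{1})(\sigma) + \tfrac{1}{2}$ and the test becomes ``$> \sfrac{1}{2}$ everywhere''; you instead use the bare program with threshold $\lambda\sigma.\,0$, exploiting that a measurement-free $\qmd \in \Cmdsk$ is deterministic, hence $\qwp[\qmd](\underline{1})(\sigma) \in \{0,1\}$ and strict positivity of the termination probability coincides with halting. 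Both reductions are valid ($0 \in \RAlg^+$ and the constant-zero function is a legitimate expectation in $\ExCT$, so nothing in the problem definition blocks your choice); yours is shorter and avoids introducing any quantum machinery, while the paper's gadget has the advantage of being reused almost verbatim across several other hardness proofs in the same section (e.g.\ for $\WP{>}$ and $\UWP{\leq}$), which is presumably why the authors preferred the uniform construction. Your explicit remark that the naive reuse of the $\UWP{\geq}$ reduction fails for $\UWP{>}$ (since no program has termination probability strictly above $1$) correctly identifies why a separate argument is needed.
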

\begin{proof}
\begin{proofcases}
  \proofcase{$\UWP{>}$}
  
  That $\UWP{>} \in \Pi^0_2$ follows directly from Theorem \ref{t:wp-slb-complete}.
  Following the pattern of the proof of Theorem~\ref{t:wp-slb-complete},
  for hardness, we show $\mathcal{UH} \reduces \UWP{>}$
  via the reduction
  \begin{align*}
    r &: \Cmdsk \to  \CmdsCT \times \ExCT \times \ExCT \\
    r &(\qmd) \triangleq  (\cmd_\qmd, \underline{1},\underline{\sfrac{1}{2}}) \tkom
\end{align*}
  where
  $\cmd_{\qmd} \triangleq (\pw{\q <- \ket{+}; x<- \MEAS{\q}; \IF x \THEN \qmd \ELSE \SKIP})$, for some fresh variables $\x$ and $\q$ of type $\Bool$ and $\Qubits$, respectively.
  Note that
  \[
    \qwp[\cmd_{\qmd}](\underline{1})(\cst,\qst) = \frac{\qwp[\qmd](\underline{1})(\cst[\x := 1],\qst')}{2} + \frac{1}{2} =  \frac{\qwp[\qmd](\underline{1})(\cst,\qst)}{2} + \frac{1}{2}
  \]
where $\qst'$ is the quantum state obtained from $\qst$ just before executing $\qmd$, as after performing the measure, the program executes $\qmd$ with probability $\sfrac{1}{2}$.
The second equality above holds as $\qmd \in \Cmdsk$ and $\x$ is fresh with respect to $\qmd$.
 It holds that $\qwp[\qmd](\cst,\qst)(\underline{1}) = 1$ iff $\qmd$ is terminating on $\cst$, as programs in $ \Cmdsk$ do not depend on the quantum state.
  Thus
  \[
    \qmd \in \mathcal{UH}
    \iff \forall \sigma \in \StateCT,\ \qwp[\cmd_{\qmd}](\underline{1})(\sigma) > \sfrac{1}{2}
    \iff r(\qmd) \in \UWP{>}
  \]

  \proofcase{$\UWP{\geq}$}
  We have
  \[
    (\cmd,f,g) \in \UWP{\geq}
    \iff \forall \sigma \in \StateCT,\  (\cmd,\sigma,f,g(\sigma)) \in \WP{\geq}
  \]
  and conclusively $\UWP{\geq}$ is in $\Pi^{0}_{2}$ by Corollary~\ref{c:wp-lb-complete}.
  Exploiting that $\cmd \in \AST \iff \forall \sigma \in \StateCT,\ \qwp[\cmd](\underline{1})(\sigma) = 1$,
  the reduction $r$ defined by 
\begin{align*}
    r &: \Cmdsk \to  \CmdsCT \times \ExCT \times \ExCT \\
    r &(\qmd) \triangleq  (\qmd,\underline{1},\underline{1}),
\end{align*}
proves that $\UAST \reduces \UWP{\geq}$
  and, conclusively, $\UWP{\geq}$ is
   $\Pi^0_2$-hard, by Theorem~\ref{t:uast-complete}.

\end{proofcases}
\end{proof}

\begin{theorem}\label{t:uwp-eq-complete}
  $\UWP{=}$ is $\Pi^0_2$-complete.
\end{theorem}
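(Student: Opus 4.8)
The plan is to prove membership and hardness separately, reusing the machinery already developed for $\WP{=}$ and $\UAST$. For membership, I would start by unfolding the definition: $(\cmd,f,g) \in \UWP{=}$ holds exactly when $\forall \sigma \in \StateCT,\ (\cmd,\sigma,f,g(\sigma)) \in \WP{=}$. By Theorem~\ref{t:wp-eq-complete} the inner predicate is $\Pi^0_2$; concretely, writing $\WP{=} = \WP{\leq} \cap \WP{\geq}$ and using the computable approximants $\qwp<k>[\cmd]$, membership of a single tuple is expressed as
\[
  \bigl(\forall k,\ \qwp<k>[\cmd](f)(\sigma) \leq g(\sigma)\bigr) \wedge \bigl(\forall b \in \RAlg^+ \setminus \{0\},\ \exists k,\ \qwp<k>[\cmd](f)(\sigma) \geq g(\sigma) - b\bigr).
\]
Prepending $\forall \sigma$ and distributing it over the conjunction turns the first conjunct into $\forall \sigma\, \forall k\, [\dots]$, which is $\Pi^0_1$ since two adjacent universal blocks coalesce, and the second into $\forall \sigma\, \forall b\, \exists k\, [\dots]$, which stays $\Pi^0_2$; their conjunction is $\Pi^0_2$. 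The one point requiring care is that $\forall \sigma \in \StateCT$ is a genuine arithmetical quantifier: since the algebraic numbers are computably enumerable and a state in $\StateCT$ is a finite store together with a density matrix over $\RAlg$, states admit a computable enumeration, so (via encoding) $\sigma$ ranges over $\N$ and the predicate under the quantifiers remains decidable by Lemma~\ref{l:to-ct-closed}.

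For hardness, I would reduce from the universal halting problem. Since $\UAST$ is $\Pi^0_2$-complete by Theorem~\ref{t:uast-complete}, it suffices to exhibit a reduction $\UAST \reduces \UWP{=}$. I would take the same reduction that works for $\UWP{\geq}$, namely $r(\qmd) \triangleq (\qmd,\underline{1},\underline{1})$ with $\underline{1} = \lambda \sigma.\,1$. The key observation is Proposition~\ref{p:sizeaswp}: for every $\sigma$ we have $\qwp[\qmd](\underline{1})(\sigma) = |\nf[\qmd](\sigma)| \leq 1$, so equality at $1$ coincides with almost-sure termination. Hence
\[
  r(\qmd) \in \UWP{=}
  \iff \forall \sigma,\ \qwp[\qmd](\underline{1})(\sigma) = 1
  \iff \forall \sigma,\ (\qmd,\sigma) \in \AST
  \iff \qmd \in \UAST,
\]
which establishes the reduction and therefore $\Pi^0_2$-hardness.

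The main obstacle is the membership bookkeeping rather than the hardness: one must verify that folding the outer $\forall \sigma$ into the $\Pi^0_2$ matrix of $\WP{=}$ does not raise the quantifier complexity, which rests on (i) distributing $\forall \sigma$ across the conjunction so that the purely universal part absorbs into a single $\Pi^0_1$ block, and (ii) the computable enumerability of $\StateCT$ that makes $\forall \sigma$ a legitimate number quantifier. By contrast, once one notices that for the trivial post-expectation $\underline{1}$ the value $\qwp[\qmd](\underline{1})(\sigma)$ is capped at $1$, hardness is immediate, being exactly the coincidence of ``$=1$'' with almost-sure termination that also drove the $\UWP{\geq}$ reduction.
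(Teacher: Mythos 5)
Your proof is correct, and the membership half is essentially the paper's (the paper just says membership follows from Theorem~\ref{t:wp-eq-complete}; your quantifier bookkeeping spells out why prepending $\forall\sigma$ to the $\Pi^0_1 \wedge \Pi^0_2$ matrix of $\WP{=}$ keeps the complexity at $\Pi^0_2$). The hardness half, however, takes a genuinely different and shorter route. The paper reduces directly from $\mathcal{UH}$ via $r(\qmd) \triangleq (\cmd, f_\qmd, \underline{1})$ with $\cmd = (\pw{i <- \geo(); k <- \geo()})$ and the carefully engineered expectation $f_{\qmd}(\cst,\qst) = \qwpeq<\cst(\mathtt{k})>[\qmd](\underline{1})(\sigma_{\cst(\iv)}) \cdot 2^{\cst(\kv)+1}$, then verifies that $\qwp[\cmd](f_\qmd)(\sigma^*) = \sum_{i}\sum_{k} \qwpeq<k>[\qmd](\underline{1})(\sigma_i)/2^{i+1}$ equals $1$ exactly when $\qmd$ halts on every input. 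You instead reduce from $\UAST$ (already $\Pi^0_2$-complete by Theorem~\ref{t:uast-complete}) via $r(\qmd) \triangleq (\qmd,\underline{1},\underline{1})$, observing that $\qwp[\qmd](\underline{1})(\sigma) = |\nf[\qmd](\sigma)| \leq 1$ by Proposition~\ref{p:sizeaswp}, so that equality with $1$ coincides with almost-sure termination; this is exactly the trick the paper itself uses for the hardness of $\UWP{\geq}$, and it transfers verbatim to $\UWP{=}$. Your reduction is more elementary and avoids re-deriving the double geometric sum; the cost is only that it leans on the prior completeness result for $\UAST$ (itself obtained by reduction from $\mathcal{UH}$), whereas the paper's argument is self-contained relative to $\mathcal{UH}$ and reuses the same gadget as its $\WP{<}$ proof. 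Both establish $\Pi^0_2$-hardness; there is no gap.
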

\begin{proof}
  Membership is a direct consequence of Theorem~\ref{t:wp-eq-complete}.
  Hardness is provable by showing $\mathcal{UH} \reduces \UWP{=}$
  using the reduction
  \begin{align*}
    r &: \Cmdsk  \to  \CmdsCT \times \ExCT \times \ExCT\\
    r & (\qmd) \triangleq  (\cmd,f_\qmd, \underline{1}),
  \end{align*}
  where $\cmd$ and $f_\qmd$ are precisely taken as in the proof of Theorem~\ref{t:wp-sub-complete}.
  As reasoned there, independent of the initial state $\sigma^*$,
  \[ 
    \qwp[\cmd](f_\qmd)(\sigma^*)
    = \sum_{i=0}^{\infty}\sum_{k=0}^{\infty}\frac{\qwpeq<k>[\qmd](\underline{1})(\sigma_{i})}{2^{i+1}}
  \]
  for $(\sigma_{i})_{i\in \mathbb{N}}$ a sequence of states whose stores provide an enumeration of  classical states in $\Store$ for $\qmd$.
  Since $\sum_{k=0}^{\infty}\frac{\qwpeq<k>[\qmd](\underline{1})(\sigma_{i})}{2^{i+1}} = \frac{1}{2^{i+1}}$ iff
  $\qmd$ halts on $\sigma_{i}$ and $0$ otherwise,
  by the identity $\sum_{i=0}^{\infty} \frac{1}{2^{i+1}} = 1$  it follows that
  \[
    \qmd \in \mathcal{UH} \iff \forall \sigma \in \StateCT,\ \qwp[\cmd](f_\qmd)(\sigma^*) = 1 \iff r(\qmd) \in \UWP{=}
  \]
\end{proof}

\begin{theorem}\label{t:uwp-ub-complete}
  $\UWP{\leq}$ is $\Pi^{0}_{1}$-complete.
\end{theorem}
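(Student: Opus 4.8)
The plan is to prove membership $\UWP{\leq}\in\Pi^0_1$ and $\Pi^0_1$-hardness separately. For membership, the key observation is that the universal quantifier over states does \emph{not} raise the level above that of $\WP{\leq}$ (Theorem~\ref{t:wp-slb-complete}). Recall that $\qwp[\cmd](f)(\sigma)=\sup_n \qwp<n>[\cmd](f)(\sigma)$, where each finite approximant lies in $\RAlg^+$ and is computable whenever $f\in\ExCT$ (a consequence of Lemma~\ref{l:to-ct-closed}). Since the approximants increase monotonically in $n$, I would rewrite
\[
  (\cmd,f,g)\in\UWP{\leq}\iff \forall\sigma\in\StateCT,\ \forall n\in\N,\ \qwp<n>[\cmd](f)(\sigma)\leq g(\sigma).
\]
Here $g(\sigma)\in\RAlg^+$ is computable and $\leq$ over $\RAlg$ is decidable, so the body is a decidable predicate. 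The one technical point to verify is that, for a fixed program $\cmd$, the domain $\StateCT$ admits a computable enumeration with decidable membership: stores range over the finitely many variables of $\cmd$ with values in $\N\cup\{0,1\}$, while quantum states are density matrices of fixed dimension with entries in $\Alg$, and the well-formedness conditions (hermiticity, positive semi-definiteness, unit trace) are decidable over algebraic numbers. Enumerating $\StateCT$ as $(\sigma_j)_j$ lets the two universal quantifiers merge into a single $\forall$ over pairs $(j,n)$ with a decidable matrix, yielding a $\Pi^0_1$ formula.

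For hardness I would reduce the $\Pi^0_1$-complete problem $\compl{\mathcal{H}}$ to $\UWP{\leq}$. Given $(\qmd,\cst)\in\Cmdsk\times\Store$, define the computable map
\[
  r(\qmd,\cst)\triangleq(\cmd_\qmd,\underline{1},\underline{\sfrac{1}{2}}),\qquad
  \cmd_\qmd\triangleq(\pw{\x <- \coin(); \IF \x \THEN \qmd(\cst) \ELSE \SKIP}),
\]
using the fair-coin sugar $\x <- \coin()$ and the initializer $\qmd(\cst)$, which overwrites the classical store by $\cst$ before simulating $\qmd$. Because $\coin()$ draws a \emph{fresh} qubit and $\qmd(\cst)$ is classical and overwrites the store, the value $\qwp[\cmd_\qmd](\underline{1})(\sigma)$ is independent of the input $\sigma$: it equals $\tfrac12\,\qwp[\qmd(\cst)](\underline{1})(\cdot)+\tfrac12$, which is $1$ if $\qmd$ halts on $\cst$ and $\tfrac12$ otherwise. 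Consequently $\qmd$ diverges on $\cst$ iff $\forall\sigma,\ \qwp[\cmd_\qmd](\underline{1})(\sigma)\leq\tfrac12$, i.e.\ iff $r(\qmd,\cst)\in\UWP{\leq}$. This establishes $\compl{\mathcal{H}}\reduces\UWP{\leq}$; since $\mathcal{H}$ is $\Sigma^0_1$-complete, $\compl{\mathcal{H}}$ is $\Pi^0_1$-complete and hence $\UWP{\leq}$ is $\Pi^0_1$-hard.

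I expect the membership direction to be the main obstacle, specifically justifying the collapse from the naive $\Pi^0_2$ bound (one universal quantifier over states on top of the $\Pi^0_1$ content of $\WP{\leq}$) down to $\Pi^0_1$. The crux is that the quantifier over $\sigma$ ranges over an \emph{effectively enumerable} domain, so it coalesces with the $\forall n$ quantifier instead of introducing a genuine quantifier alternation. The hardness direction is then routine once one notices that the coin-toss gadget decouples the weakest pre-expectation from the input state, reducing the universal statement over $\sigma$ to a single halting query.
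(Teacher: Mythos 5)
Your proposal is correct and follows essentially the same route as the paper: membership is obtained by observing that the outer universal quantifier over an effectively enumerable state space merges with the $\forall n$ quantifier underlying $\WP{\leq}\in\Pi^0_1$ (the paper simply cites Theorem~\ref{c:wp-ub-complete} for this collapse, whereas you unfold it via the approximants $\qwp<n>$), and your hardness reduction $r(\qmd,\cst)=(\cmd_\qmd,\underline{1},\underline{\sfrac{1}{2}})$ with the coin-toss guard around $\qmd(\cst)$ is exactly the reduction from $\compl{\mathcal{H}}$ used in the paper. The only added value is your explicit remark that $\StateCT$ admits a computable enumeration, a point the paper leaves implicit.
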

\begin{proof}
  We have
  \[
    (\cmd,f,g) \in \UWP{\leq}
    \iff \forall \sigma \in \StateCT,\ (\cmd,\sigma,f,g(\sigma)) \in \WP{\leq}
  \]
  and conclusively $\UWP{\leq}$ is in $\Pi^{0}_{1}$ by the second part of Theorem~\ref{c:wp-ub-complete}.
  To conclude, it is sufficient to realize that $\compl{\mathcal{H}} \reduces \UWP{\geq}$ via the reduction
\begin{align*}
    r &: \Cmdsk  \times \Store \to  \CmdsCT \times \ExCT \times \ExCT \\
    r &(\qmd,\cst) \triangleq (\cmd_{\qmd}^{\cst},\underline{1},\underline{\sfrac{1}{2}}),
\end{align*}
  with $\cmd_{\qmd,\cst} \triangleq (\pw{x <- \coin(); \IF x \THEN \qmd(}\cst\pw{)\ \ELSE \SKIP})$, for some fresh variable $\x$ of type $\Bool$.

  By definition,
  $\qwp[\cmd_{\qmd,\cst}](\underline{1})(\sigma) = \sfrac{1}{2}$ if $\qmd$ diverges on input store $\cst$ (i.e., if $(\qmd,\cst) \in \compl{\mathcal{H}}$) for any input state $\sigma$.
  On the other hand, if $\qmd$ terminates on $\cst$ then
  $\qwp[\cmd_{\qmd,\cst}](\underline{1})(\sigma) = 1$.
  Conclusively,
  \[
    (\qmd,\cst) \in \compl{\mathcal{H}}
    \iff \forall \sigma \in \StateCT,\ \qwp[\cmd_{\qmd,\cst}](\underline{1})(\sigma) \leq \sfrac{1}{2}
    \iff r(\qmd,\cst) \in \UWP{\leq}
  \]
\end{proof}

\begin{theorem}\label{t:uwp-sub-complete}
  $\UWP{<}$ is $\Pi^{0}_{3}$-complete.
\end{theorem}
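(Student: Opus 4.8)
The plan is to prove membership and hardness separately, in both cases leaning on the corresponding results for $\WP{<}$ (Theorem~\ref{t:wp-sub-complete}). For membership, I would unfold the definition and substitute the $\Sigma^0_2$ characterisation of $\WP{<}$, obtaining
\[
  (\cmd,f,g) \in \UWP{<} \iff \forall \sigma \in \StateCT\ \exists b \in \RAlg^+\setminus\{0\}\ \forall k \in \N :\ \qwp<k>[\cmd](f)(\sigma) < g(\sigma) - b .
\]
By Lemma~\ref{l:to-ct-closed} the approximant $\qwp<k>[\cmd](f)(\sigma)$ is computable, $g \in \ExCT$ is computable, and $<$ is decidable over $\RAlg$, so the matrix is a decidable predicate. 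The three leading quantifier blocks $\forall\sigma\,\exists b\,\forall k$, each ranging over an effectively enumerable domain, then place $\UWP{<}$ in $\Pi^0_3$.

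For hardness I would construct a many-one reduction from a $\Pi^0_3$-complete set, namely $\compl{\mathcal{COF}}$, the complement of the $\Sigma^0_3$-complete cofiniteness problem; concretely I use that $\qmd \in \compl{\mathcal{COF}}$ exactly when the halting set of $\qmd$ is not cofinite, i.e.\ when $\qmd$ diverges on infinitely many inputs. The guiding idea is that this property reads $\forall N\,\exists m \geq N :\ \qmd$ diverges on the $m$-th input, and the universal quantifier over states in $\UWP{<}$ is precisely what supplies the outer $\forall N$. I would reuse the fixed program $\cmd$ from the proof of Theorem~\ref{t:wp-sub-complete}, which samples $i$ and $k$ independently from geometric distributions, set $g \triangleq \underline{1}$, and define the computable expectation
\[
  f_\qmd(\cst,\qst) \triangleq \qwpeq<\cst(\kv)>[\qmd](\underline{1})(\sigma_{\cst(\n)+\cst(\iv)}) \cdot 2^{\cst(\kv)+1},
\]
where $(\sigma_i)_{i \in \N}$ enumerates the input stores of $\qmd$. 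The single change relative to Theorem~\ref{t:wp-sub-complete} is that the enumeration index is shifted to $\cst(\n)+\cst(\iv)$, so that the fresh variable $\n$, which $\cmd$ neither reads nor modifies, acts as an offset $N \triangleq \cst(\n)$.

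The crucial computation, carried out exactly as in Theorem~\ref{t:wp-sub-complete}, is that for every state $\sigma$ with $\cst(\n) = N$,
\[
  \qwp[\cmd](f_\qmd)(\sigma) = \sum_{i=0}^{\infty} \frac{1}{2^{i+1}}\sum_{k=0}^{\infty}\qwpeq<k>[\qmd](\underline{1})(\sigma_{N+i}) = \sum_{\substack{i \geq 0\\ \qmd\text{ halts on }\sigma_{N+i}}} \frac{1}{2^{i+1}} \leq 1 ,
\]
with equality iff $\qmd$ halts on $\sigma_{N+i}$ for every $i \geq 0$. Since $f_\qmd$ ignores $\qst$ and $\cmd$ overwrites $\iv$ and $\kv$ (reinitialising its own scratch qubit), this value depends only on $N$, so $\qwp[\cmd](f_\qmd)(\sigma) < 1 = g(\sigma)$ holds iff $\qmd$ diverges on some input of index $\geq N$. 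Quantifying over all $\sigma$, equivalently over all $N \in \N$, then gives
\[
  (\cmd, f_\qmd, \underline{1}) \in \UWP{<} \iff \forall N\,\exists m \geq N :\ \qmd \text{ diverges on } \sigma_m \iff \qmd \in \compl{\mathcal{COF}},
\]
so $\compl{\mathcal{COF}} \reduces \UWP{<}$ and $\UWP{<}$ is $\Pi^0_3$-hard; with the membership above this yields $\Pi^0_3$-completeness.

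The main obstacle I anticipate is not the reduction itself but the routine-yet-delicate justifications around it: confirming that $f_\qmd$ is a genuine element of $\ExCT$ (a finite, hence computable, $\RAlg^+$-valued expectation), verifying that $\qwp[\cmd](f_\qmd)(\sigma)$ really factors through $\cst(\n)$ alone so that the quantification over all of $\StateCT$ collapses to $\forall N \in \N$, and pinning down the arithmetical classification of cofiniteness so that $\compl{\mathcal{COF}}$ is correctly identified as $\Pi^0_3$-complete and as ``the halting set is not cofinite''. Each of these mirrors a point already settled in the proof of Theorem~\ref{t:wp-sub-complete} and in the standard theory of the arithmetical hierarchy, so the remaining work is careful bookkeeping rather than new ideas.
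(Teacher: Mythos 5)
Your proof is correct, and the membership half coincides with the paper's (lift the $\Sigma^0_2$ characterisation of $\WP{<}$ from Theorem~\ref{t:wp-sub-complete} under the outer $\forall\sigma$). The hardness half, however, takes a genuinely different route. The paper also reduces from $\compl{\COF}$, but via a new program $\cmd_\qmd$ that actually \emph{runs} $\qmd$ on successive inputs inside a coin-controlled loop, so that divergence of $\qmd(\cst_{i+n})$ makes the loop body get stuck and truncates the series $\sum_{k\geq 1}(i+k)/2^k$ strictly below its limit; the post-expectation is $f_\downarrow=\cst(\iv)$ and the threshold is the non-constant expectation $f_\downarrow+2$, with the initial value of $\iv$ playing the role of the offset $N$. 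You instead keep the trivially terminating two-geometric-samples program of Theorem~\ref{t:wp-sub-complete} and push all of the simulation of $\qmd$ into the post-expectation $f_\qmd$ via the step-bounded approximants $\qwpeq<k>[\qmd]$, using an untouched fresh variable $\n$ as the offset and the constant $\underline{1}$ as threshold. Both constructions exploit the universal quantification over input states to supply the outer $\forall N$ in ``diverges on infinitely many inputs.'' Your variant buys uniformity — it is the same gadget as in the proofs of Theorems~\ref{t:wp-sub-complete} and~\ref{t:uwp-eq-complete}, the program is fixed and almost surely terminating, and the threshold is constant — at the cost of a more elaborate (though still clearly computable and $\RAlg^+$-valued) expectation; the paper's variant uses a simpler expectation but a program whose semantics (deadlock on divergence) carries the reduction. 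The side points you flag (computability of $f_\qmd$, factoring of the value through $\cst(\n)$ since $\cmd$ overwrites $\iv,\kv$ and reinitialises its scratch qubit, and the identification of $\compl{\COF}$ as $\Pi^0_3$-complete) all check out and match the paper's own usage.
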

\begin{proof}
  From Theorem~\ref{t:wp-sub-complete}, it follows immediately that $\UWP{<} \in \Pi^{0}_{3}$.
For a given expectation $f$, let $f+2$ denote the function defined by $(f+1)(\sigma) \triangleq f(\sigma)+2$, for each state $\sigma \in \StateCT$. 
To show that it is $\Pi^{0}_{3}$-hard, we prove $\compl{\COF} \reduces \UWP{<}$
via the reduction
\begin{align*}
    r &: \Cmdsk \to  \CmdsCT \times \ExCT \times \ExCT \\
    r &(\qmd) \triangleq (\cmd_{\qmd},f_\downarrow,f_\downarrow + 2),
\end{align*}
  where $ \pw{\cmd_\qmd} \triangleq (\pw{ x<- \true; i <- 0; \WHILE x} \DO \{\pw{x <- \coin(); i <- i + 1; \qmd(\cst_{i})}\})$, for some fresh variables $\iv^\Var$ and $\x^\Bool$, and where $f_\downarrow(\cst,\qst) \triangleq \cst(\pw{i})$ is the expectation outputting value of $\iv$ obtained after program execution.
  The program $\cmd_{\qmd}$ executes $\qmd$ with increasing inputs; gradually loosing interest since the loop
  aborts with probability $\sfrac{1}{2}$ after each iteration. Note that the loops body gets stuck if $\pw{\qmd(}\mathit{\cst}_{i}\pw{)}$ is
  non-terminating.

  Recall 
  $\qmd \not\in \COF$, if
  $\qmd$ is a  program in $\Cmdsk$ that is non-terminating on \emph{infinitely many} inputs.
  We claim
  \[
    \qmd \not\in \COF
    \iff \forall \sigma \in \StateCT,\ \qwp[\cmd_{\qmd}](f_\downarrow)(\sigma) < f_\downarrow(\sigma) + 2 \iff r(\qmd) \in \UWP{<}
  \]

  \begin{proofcases}
    \proofcase{
      $\qmd \in \COF$
    }
    In the considered case, there are at most finitely many inputs on which $\qmd$ is non-terminating.
    In particular, this means that $\qmd$ is terminating on all $i \geq \ell$ for some $\ell \in \N$.
    Consider an initial store $\sigma = (\cst,\qst)$ such that $i \triangleq \cst(\iv) \geq \ell$.
    Thus
    \[
      \qwp[\cmd_{\qmd}](f_\downarrow) (\sigma) = \sfrac{1}{2} (i+1) + \sfrac{1}{4} (i + 2) + \sfrac{1}{8}(i+3) + \cdots
      = \sum_{k=1}^{\infty} \frac{i+k}{2^{k}}
      = i + \sum_{k=1}^{\infty} \frac{k}{2^{k}} = i + 2.
    \]
    Conclusively, $(\cmd_{\qmd},f_\downarrow,f_\downarrow+2) \not\in \UWP{<}$.
    \proofcase{
      $\qmd \not\in \COF$
    }
    Let $\sigma = (\cst,\qst)$ be arbitrary, and define $i \triangleq \cst(\iv)$.
    Since in the considered case, there are infinitely many inputs on which $\qmd$ is non-terminating,
    there exists in particular (a minimal) $n \geq 0$ for which $\qmd$ is non-terminating on $\cst_{i + n}$, i.e.,
    the $n$-th iteration of the loops gets stuck.
    Thus
    \[
      \qwp[\cmd_{\qmd}](f_\downarrow)(\sigma) = \sum_{i=1}^{n} \frac{i+k}{2^{k}} < i + 2 = f_\downarrow(\sigma) + 2
    \]
    Conclusively,  $(\cmd_{\qmd},f_\downarrow,f_\downarrow+2) \in \UWP{<}$.
  \end{proofcases}
\end{proof}

\subsection*{Finiteness problems}

\begin{theorem}\label{t:wp-fin-complete}
  $\WP{\neq \infty}$ is $\Sigma^0_2$-complete.
\end{theorem}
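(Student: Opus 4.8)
The plan is to prove membership in $\Sigma^0_2$ and $\Sigma^0_2$-hardness separately, following the template of Theorems~\ref{t:wp-slb-complete} and~\ref{t:wp-sub-complete}.

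For membership I would exploit the approximants $\qwp<n>[\cmd]$. Since $\qwp[\cmd](f)(\sigma) = \sup_{n \in \N}\qwp<n>[\cmd](f)(\sigma)$ is the supremum of a nondecreasing sequence in $\Rext$, it is finite exactly when this sequence is bounded by some natural number. Thus
\begin{align*}
  (\cmd,\sigma,f) \in \WP{\neq \infty}
  & \iff \qwp[\cmd](f)(\sigma) < \infty \\
  & \iff \exists c \in \N,\ \forall n \in \N,\ \qwp<n>[\cmd](f)(\sigma) \leq c.
\end{align*}
By Lemma~\ref{l:to-ct-closed} the distribution $\nf<n>[\cmd](\sigma)$ is finitely supported with algebraic probabilities, so $\qwp<n>[\cmd](f)(\sigma) \in \RAlg^{+}$ is computable whenever $f \in \ExCT$; as inequality over $\RAlg$ is decidable, the matrix $\qwp<n>[\cmd](f)(\sigma) \leq c$ is a decidable predicate, and the displayed characterization is a $\Sigma^0_2$ formula.

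For hardness I would reduce from the finiteness problem $\mathrm{FIN} \triangleq \{\qmd \in \Cmdsk \mid \{\cst \mid (\qmd,\cst) \in \mathcal{H}\}\text{ is finite}\}$, which is $\Sigma^0_2$-complete~\cite{odifreddi1992classical}. The crucial point is that $\qwp$ integrates $f$ only over terminating runs, so I want to weight geometrically sampled classical inputs so that each input on which $\qmd$ halts contributes exactly $1$ to the expectation. Reusing the construction of Theorem~\ref{t:wp-sub-complete}, set $\cmd \triangleq (\pw{\iv <- \geo(); \kv <- \geo()})$ and
\[
  f_\qmd(\cst,\qst) \triangleq \qwpeq<\cst(\kv)>[\qmd](\underline{1})(\sigma_{\cst(\iv)}) \cdot 2^{\cst(\iv)+1} \cdot 2^{\cst(\kv)+1},
\]
for $(\sigma_i)_{i \in \N}$ an enumeration of classical states for $\qmd$; since $\qwpeq<k>[\qmd](\underline{1})(\sigma_i)$ is computable, $f_\qmd \in \ExCT$. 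As in Theorem~\ref{t:wp-sub-complete}, $\iv$ and $\kv$ become independent geometric variables, so the probabilities $2^{-(\cst(\iv)+1)}$ and $2^{-(\cst(\kv)+1)}$ cancel against the factors of $f_\qmd$, yielding
\[
  \qwp[\cmd](f_\qmd)(\sigma^{*}) = \sum_{i=0}^{\infty}\sum_{k=0}^{\infty} \qwpeq<k>[\qmd](\underline{1})(\sigma_i).
\]
Because $\qwpeq<k>[\qmd](\underline{1})(\sigma_i)$ equals $1$ iff $\qmd$ halts on $\sigma_i$ in exactly $k$ steps and $0$ otherwise, the inner sum is the indicator of $(\qmd,\sigma_i) \in \mathcal{H}$, and the double sum counts the inputs on which $\qmd$ halts. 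Hence $\qwp[\cmd](f_\qmd)(\sigma^{*}) < \infty$ iff this count is finite iff $\qmd \in \mathrm{FIN}$, establishing $\mathrm{FIN} \reduces \WP{\neq \infty}$ and thus $\Sigma^0_2$-hardness.

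The routine direction is membership; the main obstacle is hardness, specifically the observation that since $\qwp$ is blind to diverging runs one cannot reduce from $\PAST$ through a step-counter instrumentation (a non-a.s.-terminating program would wrongly produce a \emph{finite} pre-expectation over its terminating fragment). Reducing from $\mathrm{FIN}$ and using the exponential reweighting to turn ``$\qmd$ halts on input $i$'' into a unit contribution is the key design choice that makes the total expectation finite exactly when only finitely many inputs halt.
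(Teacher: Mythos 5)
Your proof is correct, and while the membership half follows essentially the same route as the paper (both ultimately unwind $\qwp[\cmd] \app f \app \sigma = \sup_n \qwp<n>[\cmd] \app f \app \sigma$ into an $\exists\forall$ formula over a decidable matrix; the paper just packages this as $\exists a,\ (\cmd,\sigma,f,a) \in \WP{<}$ and invokes Theorem~\ref{t:wp-sub-complete}), your hardness argument is genuinely different. The paper reduces from $\compl{\mathcal{UH}}$, reusing the looping program $\cmd_{\qmd}$ of Theorem~\ref{t:uwp-sub-complete} that runs $\qmd$ on successive inputs and gets stuck at the first diverging one, with post-expectation $f(\cst,\qst) = 2^{\cst(\iv)}$ chosen so that the geometric decay cancels: universal termination forces the divergent series $\sum_n 1$, while a single diverging input truncates the sum to a finite value. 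You instead reduce from $\mathrm{FIN}$, keep the program trivial (two geometric samples, as in Theorem~\ref{t:wp-sub-complete}), and push all the work into the post-expectation by adding the extra factor $2^{\cst(\iv)+1}$ so that each halting input contributes exactly $1$; finiteness of the expectation then literally counts the halting set. Both reductions are sound. The paper's choice buys uniformity (the same gadget $\cmd_{\qmd}$ serves Theorems~\ref{t:uwp-sub-complete}, \ref{t:wp-fin-complete} and \ref{t:uwp-fin-complete}) and only relies on source problems it has formally introduced ($\mathcal{H}$, $\mathcal{UH}$, $\COF$); your choice buys a simpler program and a conceptually cleaner ``expectation $=$ cardinality of the halting set'' correspondence, at the cost of importing the $\Sigma^0_2$-completeness of $\mathrm{FIN}$ from the literature --- which is standard, but you should note explicitly that it transfers to the paper's formulation (sets of stores on which a $\Cmdsk$-program halts) via Turing-completeness of $\Cmdsk$, since the paper's preliminaries do not list $\mathrm{FIN}$ among its base problems.
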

\begin{proof}
  For membership, we have
  \[
    (\cmd,\sigma,f) \in \WP{\neq \infty} \iff \exists a \in \RAlg^+,\ (\cmd,\sigma,f,a) \in \WP{<}
  \]
  and thus $\WP{\neq\infty}$ is in $\Sigma^{0}_{2}$, by Theorem~\ref{t:wp-sub-complete}.
  For hardness, we prove $\compl{\mathcal{UH}} \reduces \WP{\neq\infty}$ via the reduction
  \begin{align*}
    r &: \Cmdsk \times \Store \to  \confct \times \ExCT \times \RAlg^+\\
    r & (\qmd,\cst) =  (\cmd_\qmd, \sigma^*,f),
  \end{align*}
  where
  $\cmd_{\qmd}$ is exactly as in Theorem~\ref{t:uwp-sub-complete},
  $\sigma^*$ is an arbitrary but fixed initial state assigning the value $0$ to the fixed variable $\pw{i}$ of type $\Var$, and where $f(s,\rho) \triangleq 2^{s(\iv)}$.
  For correctness of the reduction, first assume $\qmd \not\in\compl{\mathcal{UH}}$, i.e., $\qmd$ is not universally terminating.
  Then, reasoning as in Theorem~\ref{t:uwp-sub-complete},
  \[
    \qwp[\cmd_{\qmd}](f)(\sigma^*) = \sfrac{1}{2} \cdot 2^{1} + \sfrac{1}{4} \cdot 2^{2} + \cdots = \sum_{n = 1}^{\infty} \frac{2^{n}}{2^{n}} = \sum_{n \in \N} 1 = \infty
  \]
  and hence $r(\qmd) \not\in\WP{\neq \infty}$.
  If, on the other hand, $\qmd \in \compl{\mathcal{UH}}$, then
  $\qwp[\cmd_{\qmd}](f)(\sigma^*) = \sum_{n=1}^{k} 1 = k$ where $k$ is the first index with $\qmd$ diverging on $\mathit{\cst}_k$,
  hence $r(\qmd) \in\WP{\neq \infty}$.
\end{proof}

\begin{theorem}\label{t:uwp-fin-complete}
  $\UWP{\neq\infty}$ is $\Pi^0_3$-complete.
\end{theorem}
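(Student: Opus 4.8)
The plan is to prove $\Pi^0_3$-membership and $\Pi^0_3$-hardness separately, following the same template as the neighbouring universal problems. \textbf{Membership} falls out of Theorem~\ref{t:wp-fin-complete}: since
\[
  (\cmd,f) \in \UWP{\neq\infty} \iff \forall \sigma \in \StateCT,\ (\cmd,\sigma,f) \in \WP{\neq\infty}
\]
and $\WP{\neq\infty} \in \Sigma^0_2$, prefixing the $\Sigma^0_2$ matrix with one universal quantifier over $\sigma$ yields a $\forall\exists\forall$ prefix, i.e. a $\Pi^0_3$ formula. The only point to check is that this quantifier ranges over an effectively enumerable domain with decidable membership: states in $\StateCT$ are pairs of a store and a density matrix over $\RAlg$, and positivity, hermitianity and unit trace are decidable there (algebraic numbers admit a computable representation), so no hidden complexity is introduced.

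For \textbf{hardness} I would reduce the $\Pi^0_3$-complete set $\compl{\COF}$ (the complement of the $\Sigma^0_3$-complete cofiniteness problem) via $r(\qmd) \triangleq (\cmd_{\qmd}, f)$, where $\cmd_{\qmd}$ is the coin-controlled enumeration loop of Theorem~\ref{t:uwp-sub-complete} --- on an initial store with $\cst(\iv) = i$ it successively simulates $\qmd(\cst_{i+1}), \qmd(\cst_{i+2}),\dots$, aborting after each iteration with probability $\sfrac12$ and getting stuck exactly when the simulated run diverges --- and $f$ is the exponential post-expectation $f(\cst,\qst) \triangleq 2^{\cst(\iv)}$ borrowed from Theorem~\ref{t:wp-fin-complete}. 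This is plainly computable.

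The key computation is that the mass $\sfrac{1}{2^n}$ leaving the loop at iteration $n$ carries the value $\iv = i+n$, contributing $\sfrac{1}{2^n}\cdot 2^{i+n} = 2^i$. Hence, on an initial state with $\cst(\iv) = i$,
\[
  \qwp[\cmd_{\qmd}](f)(\sigma) = \sum_{n \geq 1} \frac{1}{2^n}\cdot 2^{i+n} = 2^i \sum_{n \geq 1} 1 = \infty
\]
precisely when every simulation $\qmd(\cst_{i+k})$, $k \geq 1$, terminates; if instead $N \geq 1$ is the least index with $\qmd$ diverging on $\cst_{i+N}$, the series truncates and $\qwp[\cmd_{\qmd}](f)(\sigma) = 2^i(N-1) < \infty$. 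Because $\qmd$ is purely classical and each coin toss allocates a fresh qubit, $\qwp[\cmd_{\qmd}](f)(\sigma)$ depends only on the initial value $i$ of $\iv$, so the universal quantifier over $\StateCT$ collapses to $\forall i \in \N$. Putting these together, $(\cmd_{\qmd},f) \in \UWP{\neq\infty}$ holds iff for every $i$ there is some $j > i$ on which $\qmd$ diverges, i.e. iff $\qmd$ diverges on infinitely many inputs, which is exactly $\qmd \in \compl{\COF}$. This gives $\compl{\COF} \reduces \UWP{\neq\infty}$, hence $\Pi^0_3$-hardness, and with membership $\Pi^0_3$-completeness.

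The main obstacle I anticipate is not the computation but aligning the quantifier polarities. I must use the reading of $\COF$ fixed in Theorem~\ref{t:uwp-sub-complete} (so that $\compl{\COF}$ is ``$\qmd$ diverges on infinitely many inputs''), and keep careful track of the fact that a \emph{single} divergent input beyond $i$ already caps the pre-expectation at a finite value, whereas only an unbroken terminating tail forces the value $\infty$. This is the opposite polarity from the $\WP{\neq\infty}$ reduction of Theorem~\ref{t:wp-fin-complete}, and it is exactly this switch --- finiteness for \emph{all} starting indices versus divergence at some index --- that makes the universal version climb from $\Sigma^0_2$ to $\Pi^0_3$.
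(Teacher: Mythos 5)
Your proposal is correct and takes essentially the same route as the paper: membership by prefixing the $\Sigma^0_2$ characterization of $\WP{\neq\infty}$ (Theorem~\ref{t:wp-fin-complete}) with a universal quantifier over states, and hardness by reducing $\compl{\COF}$ via the coin-controlled enumeration loop $\cmd_{\qmd}$ of Theorem~\ref{t:uwp-sub-complete} together with the post-expectation $f(\cst,\qst)=2^{\cst(\iv)}$, so that the pre-expectation is an unbounded sum of constant contributions $2^{\cst(\iv)}$ exactly when $\qmd$ terminates on every input beyond the initial index. Your explicit computation of the truncated value and the collapse of the state quantifier to $\forall i\in\N$ merely spell out what the paper leaves implicit in its appeal to the earlier reduction.
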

\begin{proof}
  Membership follows by Theorem~\ref{t:wp-fin-complete}, since we have
  \[
    \cmd \in \UWP{\neq\infty}
    \iff \forall \sigma \in \StateCT,\ (\cmd,\sigma) \in \WP{\neq\infty}
  \]
  For hardness, we prove $\compl{\COF} \reduces \UWP{\neq\infty}$
  via the reduction $r$ defined by $r(\qmd) \triangleq (\cmd_{\qmd},f)$ where $\cmd_{\qmd}$
  is exactly as in Theorem~\ref{t:uwp-sub-complete} but $f(\cst,\qmd) = 2^{\cst(\iv)}$.
  The proof now proceeds identical to Lemma~\ref{t:uwp-sub-complete}.
  If $\qmd \not\in\compl{\COF}$ then
  $\qwp[\cmd_{\qmd}](f) (\cst,\qst) = \sum_{i=1}^{\infty} \frac{2^{n+i}}{2^{i}}=
  \sum_{i=1}^{\infty} 2^{n} = \infty$ considering an initial store
  $(\cst,\qst)$ with $n = \cst(\n)$ so that $\qmd$ is terminating on all $(\cst_{k})_{k \geq n}$.
  Hence $r(\qmd) \not\in \UWP{\neq\infty}$.
  On the other hand, if $\qmd \in\compl{\COF}$ then reasoning identical to Theorem~\ref{t:uwp-sub-complete}
  $\qwp[\cmd_{\qmd}](f)(\sigma)$ is a finitely indexed sum independent of the initial state $\sigma$,
  thus finite. Hence $r(\qmd)\in\UWP{\neq\infty}$.
\end{proof}

\section{Proof of Theorem~\ref{t:adequacy}}

\tadequacy*
\begin{proof}
The adequacy Theorem is a direct consequence of~\cite[Corollary 4.8]{AMPPZ:LICS:22}, based on the observations that $\qwp[\cmd](f)$ corresponds to the notion of expected value $\evalue_\cmd(f) (\sigma)$ of~\cite{AMPPZ:LICS:22} and that $\wpt{\cmd}{f}$ corresponds to the expected value transformer $\qev{\cdot}{\cdot}$ of~\cite{AMPPZ:LICS:22} for the particular Kegelspitze $\CSd = (\Rext,+_{\mathtt{f}})$, with $+_{\mathtt{f}}$ being the forgetful addition.
\end{proof}

\section{Proof of Theorem~\ref{thm:bound}}

\thmbound*

\begin{proof}
  \newcommand{\pp}[2]{\lambda \qst. tr(#1 \qst #1^\dagger)/tr(#2 \qst #2^\dagger)}
  \newcommand{\ppt}[2]{\lambda \qst. \frac{tr(#1 \qst #1^\dagger)}{tr(#2 \qst #2^\dagger)}}
  \newcommand{\qt}[2]{\lambda \qst. \frac{#1 \qst #1^\dagger}{tr(#2 \qst #2^\dagger)}}
  \newcommand{\oM}{M}
  \newcommand{\oN}{N}
  \newcommand{\oO}{O}
  \newcommand{\oP}{P}
  \newcommand{\enc}[1]{\ulcorner #1 \urcorner}
  Suppose $\qinfer{\cmd}{X} = G$, producing side-constraints $C$.
  By definition, every constraint in $C$ is of the form
  \[
    \Gamma \vdash F \leq X_\ell
  \]
  for some $F \in \TERM$, $X_{\ell} \in \VS$ and where $\Gamma$ is wlog. either a Boolean Expression $\bexp \in \BExp$,
  or a probability constraint, i.e., of the form $\proba{i}{k}(\qst) = 0$ or a conjunction thereof.
  %
  Consider the following, semantic preserving, rewrite rules:
  \begin{align*}
    (H_1 +_{p} H_2)[\y := \e] & \to H_1[\y := \e] +_{p} H_2[\y := \e]
    & H[\chi][\y := \e] & \to H[\y := \e][\chi]
    \\
    (H_1 +_{p} H_{2})[\psi] & \to H_1[\psi] +_{p \circ \psi} H_{2}[\psi]
    & H[\chi][\psi] & \to H[\chi \circ \psi]
  \end{align*}
  where $\circ$ denotes function composition. In particular,
  for $\chi = \qt\oM\oN$, $\psi = \qt\oO\oP$ and $p = \ppt\oM\oN$,
  we have
  $\chi \circ \psi = \qt{(\oO\oM)}{(\oO\oN)}$
  and $p \circ \psi = \ppt{(\oM\oO)}{(\oN\oO)}$.
  By exhaustively applying the above rewrite rules, the left-hand side $F$
  normalizes to a barycentric sum over terms of the form
  $X_i[\y_{k_1} := \e_{k_1},\dots, \y_{k_i} := \e_{k_i};\chi_{i}]$, or
  $X_{k}[\overline{\y_{i} := \e_{i}};\chi_{i}]$ for short.
  Let us extend the syntax of terms as follows:
  \begin{align*}
    P,Q & \rgl \lambda \qst. tr(M\qst M^\dagger) \mid 1 - P \mid P\cdot Q \mid P / Q
    &  F,G & \rgl \dots \mid P \cdot F \mid F + G,
  \end{align*}
  where $M \in \matrixspacealg$ is in the Clifford+T fragment,
  with the new connectives are interpreted in the expected way.
  Particularly, $\sem{P} : \densopalg \to \RAlg^{+}$.
  Together with $F_1 +_{p} F_2 \to p \cdot F_1 + (1 - p) \cdot F_2$ the considered constraints
  are now equivalently defined through constraints
  \[
    \Gamma \vdash \sum_{1 \leq i \leq k} P_{i} \cdot X_{i}[\overline{\y_{i} := \e_{i}},\chi_{i}] \leq X_{\ell}
    .
  \]
  Bringing all coefficients $P_{i}$ to a common denominator $Q$, i.e., $P_i = \frac{Q_{i}}{Q}$,
  and by multiplying the left- and right-hand side by $Q$, we finally arrive at a constraint of the form
  \begin{equation}
    \label{eq:constr}
    \Gamma \vdash \sum_{1 \leq i \leq k} Q_{i} \cdot X_{i}[\overline{\y_{i} := \e_{i}},\chi_{i}] \leq Q \cdot X_{\ell}
    \tpkt
  \end{equation}
  Since the denominator $Q$ is guaranteed non-zero under $\Gamma$,
  this constraint is, by construction, semantically equivalent to the initial constraint $\Gamma \vdash F \leq X_{\ell}$.
  Moreover, all coefficients $Q_{i}$ and $Q$ are finite products of traces $\lambda \qst. tr(M_{j}\qst M^\dagger)$,
  for some matrices ${M}_{j}$.

  Recall the encoding of input states in the notion of polynomial expectation~\eqref{eq:polyexp}.
  Let $\vec{Y}$ collect the $n$ variables of $\RAlg^d[\StateCT]$ representing to the classical input,
  similar let $\vec{A}$  (resp. $\vec{B}$) collects the $2^m$ variables representing the real (resp.imaginary part) of the density matrix.
  Let $\vec{Z}$ abbreviate the sequence of all such variables, in total, $n + 2^{2m+1}$ many.
  Let $\mathit{MS}_{d}$ denote the set of monomials over
  $\vec{Z}$ of degree $d$.
  Thus, every polynomial in $\RAlg^d[\StateCT]$ can be written
  as $\sum_{t \in \mathit{MS}_{d}} \mathsf{c}_{t} \cdot t$ for some coefficients $\mathsf{c}_{t} \in \RAlg$.
  In particular, this means that any polynomial solution $\alpha$ to $C$ maps variables
  to polynomials of this form.
  We now represent these polynomial expectations abstractly, introducing for each $X' \in \VS$ and
  $t \in \mathit{MS}_{d}$ an indeterminate coefficient $c_{t,X'}$,
  thereby representing polynomial expectations $\alpha(X')$ as polynomials $P_{X'} \triangleq \sum_{t \in \mathit{MS}_{d}} \mathsf{c}_{t,X'} \cdot t$.

  Using this interpretation of variables, we now turn the constraints~\eqref{eq:constr}
  into polynomial inequalities so as to employ Proposition~\ref{prop:QE}.
  Let $\enc{\e}$ be the translation of program expressions using indeterminates $Y \in \vec{Y}$.
  Clearly, $\enc{\e}$ is a polynomial. Extending this encoding in the natural way,
  we obtain an encoding $\enc{\bexp}$ of Boolean program expressions $\bexp$.
  Observe that any trace $tr(M\qst M^{\dagger})$ is just an algebraic polynomial over the coefficients in $M$ and $M^{\dagger}$.
  As the probability functions $\proba{i}{k}$ just compute such a trace,
  the predicate $\Gamma$ is naturally modeled in the theory of real closed fields,
  using indeterminates $\vec{Y}$ to encode Boolean expressions, and indeterminates
  $\vec{A}$ and $\vec{B}$ the checks on $\proba{i}{k}$.
  Likewise, as we we have eliminated divisions, the factors $Q_{i}$ and $Q$ in \eqref{eq:constr}
  become expressible as algebraic polynomials $\enc{Q_{i}}$
  and $\enc{Q}$ over $\vec{A}$ and $\vec{B}$.

  It remains to encode the terms $X_{i}[\overline{\y_{i} := \e_{i}},\chi_{i}]$, and $X_{\ell}$,
  taking into account their interpretation via $P_{X_{i}}$ and $P_{X_{\ell}}$, respectively.
  First, observe that
  using the encoding for expressions, the update $\overline{\y_{i} := \e_{i}}$ on variable $\y_{j}$ is
  expressible as a polynomial $E_{j}$, given by the composition of corresponding polynomials $\enc{\e_{j}}$.
  Likewise, the real and imaginary parts of entries $(\chi_i(\rho))_{j,k}$
  are expressible as polynomial fractions $\enc{Re(\chi_{j,k})}$
  and $\enc{Im(\chi_{j,k})}$, respectively ($1 \leq j,k \leq 2^m$).
  Putting things together, the constraint~\eqref{eq:constr},
  becomes expressible as
  \begin{align*}
    & \enc{\Gamma}
    \Rightarrow \!\!\sum_{1 \leq i \leq k} \enc{Q_{i}} \cdot
    P_{X_i}[(Y_{i}/E_{i})_{1 \leq i \leq n },(A_{j,k}/\enc{Re(\chi_{j,k})},B_{j,k}/\enc{Im(\chi_{j,k})})_{1\leq j,k \leq 2^{m}}]
    \leq \enc{Q} \cdot P_{X_{\ell}},
  \end{align*}
  The substitution on polynomials $P_{X_i}$ are to be performed in parallel.
  All involved expressions are polynomials, except that the substitutions
  $\enc{Re(\chi_{j,k})}$ and $\enc{Im(\chi_{j,k})}$ introduce fractions of (non-negative) polynomials.
  Multiplying out via a common denominator and rewriting polynomial comparisons to the standard form $P \geq 0$ and $\neg(P \geq 0)$,
  results in a translation of $\enc{\Gamma \vdash F \leq X_{\ell}}$ of
  the initial constraint $\Gamma \vdash F \leq X_{\ell}$.
  Universally quantifying over $\vec{Z}$, the formula becomes equi-satisfiable with $\Gamma \vdash F \leq X_{\ell}$,
  in the sense that any assignment by $\gamma$ from variables $\mathsf{c}_{t,X'}$ to $\RAlg$ makes the corresponding
  assignment $\alpha(X') = P_{X'}[\gamma] = \sum_{t \in \mathit{MS}_{d}} \gamma(\mathsf{c}_{t,X'}) \cdot t$
  a solution to $\Gamma \vdash F \leq X_{\ell}$, and vice versa.

  For decidability,
  it is now sufficient to note that
  $(\cmd,f,\alpha) \in \QINFER(\RAlg^d[\StateCT])$ iff
  (i)~$\alpha$ is a solution to the constraints $C$ generated by $\qinfer{\cmd}{X} = G$;
  (ii)~$\alpha(X) = f$, and
  (iii)~this solution falls into the class $\RAlg^d[\StateCT]$.
  This is an immediate consequence of Theorem~\ref{th:infer-sound}.
  By the construction above, we have already seen that
  (i)~is expressible within the theory of real closed fields.
  Point~(ii) just means that the coefficients of the template $P_{X}$ assigned to the input variable $X$
  coincides with that of $f \in \RAlg^d[\StateCT]$.
  Point~(iii) just means that the polynomials $P_{X'}$ are non-negative.
  Naturally, these constraints have to be fulfilled only on admissible quantum inputs
  (self-adjointness, trace 1, positive eigenvalues),
  which is easily seen to be representable as a formula $\mathsf{admissible}(\vec{A},\vec{B})$.
  Finally, let $\enc{P_{X} = f}$ denote that the algebraic coefficients of $P_{X}$ and $f$ coincide,
  and let $\vec{\mathsf{c}}$ collects all introduced coefficient variables $\mathsf{c}_{{m,X'}}$
  encoding the polynomial assignment $\alpha$.
  Summing up,
  \[
    \textstyle
    \exists \vec{\mathsf{c}},\
    \forall \vec{Y},\ \forall \vec{A},\ \forall \vec{B},\ \mathsf{admissible}(\vec{A},\vec{B})
    \Rightarrow
    \bigwedge_{(\Gamma \vdash F \leq X_{\ell}) \in C}\enc{\Gamma \vdash F \leq X_{\ell}}
    \land \enc{P_{X} = f}
    \land \bigwedge_{X' \in \VS} (P_{X'} \geq 0)
    ,
  \]
  yields a formula in the theory of real closed fields that is valid iff $(\cmd,f,\alpha) \in \QINFER(\RAlg^d[\StateCT])$.
  Through quantifier elimination (Proposition~\ref{prop:QE}) this formula can be turned
  into an equivalent, quantifier free formula $\psi$. Atoms of $\psi$ are just inequalities
  over finite sums and products in $\RAlg$. Since the latter is decidable,
  we conclude decidability of $\QINFER(\RAlg^d[\StateCT])$.

\paragraph{Complexity.} Let $\size{\cmd}$ be the size of the statement $\cmd$. Now we consider the bound $O(|\psi|) \cdot (jd)^{i^{O(l)}}$ in Proposition~\ref{prop:QE} wrt. the above formula:
\begin{itemize}
\item the number of alternations $l$ is constant $l=1$,
\item the degree $d$ of the polynomials is the one we have fixed in $\RAlg^d[\StateCT]$,
\item the number of variables $i$ is equal to the universally quantified variables (for classical and quantum data) $n+2^{2m+1}=2^{O(\size{\cmd})}$ plus the number of existentially quantified variables, which bounded by $\size{\cmd}({n+2^{2m+1}+1})^d$, as the number of monomials in a polynomial of $l_1$ variables and degree $l_2$ is bounded by $(l_1+1)^{l_2}$ and the number of polynomials is bounded by the number of expectation variables (Hence the size of the program $\size{\cmd}$). Hence $i=2^{dO(\size{\cmd})}$,
\item the number of polynomial inequalities $j$ is bounded by $3\size{\cmd}$ (the inequalities added by Figure~\ref{fig:approx}) plus the number of constraints added by the $\mathsf{admissible}(\vec{A},\vec{B})$, which is in $2^{O(\size{\cmd})}$. Hence $j=2^{O(\size{\cmd})}$,
\item the size of the formula $|\psi|$ is also in $2^{O(\size{\cmd})}$ as each polynomial inequality corresponds to a constant number of logical symbols.
\end{itemize}

Putting all together, the obtained bound is in $2^{2^{dO(\size{\cmd})}}$.
\end{proof}

\section{Example of applications of quantum weakest pre-expectation}

\begin{example}
 Consider the expectations $P : \State \to \{0,1\}$ and $f : \State \to \Rext$.
  \begin{enumerate}
    \item If $\cmd$ does not perform any measurement, then $\cmd$ is deterministic,
          in the sense that either $\nf[\cmd](\sigma)=  \{1: \cfgc{\halt}{\tau}\}$,
 	 in case $\cmd$ is terminating on $\sigma$,
          or $\nf[\cmd](\sigma)=  \emptyset$.
          Then $\qwp[\cmd](P) (\sigma) \in \{0,1\}$ gives
          the (weakest) pre-condition for $\tau$ satisfying $P$,
          in the sense of classical (partial) program correctness.
    \item If $\cmd$ performs measurements, then
          $\nf[\cmd](\sigma)$ is a sub-distribution over terminal configurations of the shape $\cfgc{\halt}{\tau}$.
          In this case, $\qwp[\cmd](P)(\sigma) \in [0,1]$ is the probability that these terminal configurations satisfy $P$.
    \item $\qwp[\cmd](\underline{1})(\sigma)$ gives the probability that $\cmd$ terminates on initial state $\sigma$.
          In particular, $\qwp[\cmd](\underline{1}) = \underline{1}$ implies that $\cmd$ is almost-surely terminating.

    \item $\qwp[\cmd](f)(\sigma)$ returns the value $\mathbb{E}_{\nf[\cmd](\sigma)}(f)$, that is, the mean value of the function $f$ on the distribution $\nf[\cmd](\sigma)$.
    \item Given the measurement operators $\{\meas{k}{i}\}_{k \in \{0,1\}}$,  $\qwp[\cmd](f)(\sigma)$, where $f (\cst,\qst) \triangleq tr(\meas{k}{i} \qst)$,  represents the probability that the outcome of measuring the i-th qubit of $\qst$ is  $k$.    \item Suppose the statement $\cmd$ contains a cost counter $\ct$ (some particular variable),
          and let $\ect[\cmd] \triangleq \qwp[\cmd]({cost})$, where the expectation $cost$ is defined by $cost(\cst,\qst) \triangleq \max\{0,\cst(\ct)\}$.
          Then $\ect[\cmd](\sigma) : \Rext$ yields the expected cost of running $\cmd$
          on input $\sigma$, \emph{upon termination}.
          In contrast to \cite{AMPPZ:LICS:22},
          $\ect[\cmd]$ can capture a variety of monotonic \emph{and} non-monotonic cost models such as
          time, space usage, output sizes, bit entanglement, etc.
          Notice that $\ect[\cmd]$ takes only terminating traces into account,
          in the sense of partial correctness. As such, $\ect[\cmd]$
          does not represent the the edge-case of programs
          that are not almost-surely terminating but have finite cost.
  \end{enumerate}
\end{example}

\end{document}